\newtheorem{theorem}{Theorem}[section]
\newtheorem{conjecture}[theorem]{Conjecture}
\algrenewcommand\algorithmicrequire{\textbf{Input:}}
\algrenewcommand\algorithmicensure{\textbf{Output:}}
\newcommand\blfootnote[1]{%
  \begingroup
  \renewcommand\thefootnote{}\footnote{#1}%
  \addtocounter{footnote}{-1}%
  \endgroup
}
\date{}
\title{Plane augmentation of plane graphs to meet parity constraints}
\author{J.C. Catana\thanks{{\tt j.catanas@uxmcc2.iimas.unam.mx}. Universidad Nacional Aut\'onoma de M\'exico, Mexico.}
\and A. Garc\'\i a\thanks{{\tt olaverri@unizar.es}. IUMA, Universidad de Zaragoza, Spain.}
\and J. Tejel\thanks{{\tt jtejel@unizar.es}. IUMA, Universidad de Zaragoza, Spain.}
\and J. Urrutia \thanks{{\tt urrutia@matem.unam.mx}. Universidad Nacional Aut\'onoma de M\'exico, Mexico.}}
\begin{document}

\maketitle

\blfootnote{\begin{minipage}[l]{0.3\textwidth} \includegraphics[scale=0.13]{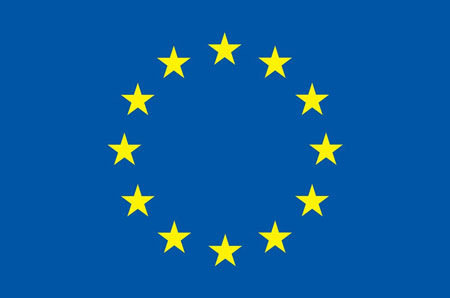} \end{minipage}  \hspace{-2cm} \begin{minipage}[l][1cm]{0.7\textwidth}
 	  This project has received funding from the European Union's Horizon 2020 research and innovation programme under the Marie Sk\l{}odowska-Curie grant agreement No 734922.
 	\end{minipage}}

\begin{abstract}
A plane topological graph $G=(V,E)$ is a graph drawn in the plane whose vertices are points in the plane and whose edges are simple curves that do not intersect, except at their endpoints. Given a plane topological graph $G=(V,E)$ and a set $C_G$ of parity constraints, in which every vertex has assigned a parity constraint on its degree, either even or odd, we say that $G$ is \emph{topologically augmentable} to meet $C_G$ if there exits a plane topological graph $H$ on the same set of vertices, such that $G$ and $H$ are edge-disjoint and their union is a plane topological graph that meets all parity constraints.

In this paper, we prove that the problem of deciding if a plane topological graph is topologically augmentable to meet parity constraints is $\mathcal{NP}$-complete, even if the set of vertices that must change their parities is $V$ or the set of vertices with odd degree. In particular, deciding if a plane topological graph can be augmented to a Eulerian plane topological graph is $\mathcal{NP}$-complete. Analogous complexity results are obtained, when the augmentation must be done by a plane topological perfect matching between the vertices not meeting their parities.

We extend these hardness results to planar graphs, when the augmented graph must be planar, and to plane geometric graphs (plane topological graphs whose edges are straight-line segments). In addition, when it is required that the augmentation is made by a plane geometric perfect matching between the vertices not meeting their parities, we also prove that this augmentation problem is $\mathcal{NP}$-complete for plane geometric trees and paths.

For the particular family of maximal outerplane graphs,
we characterize maximal outerplane graphs that are topological augmentable to satisfy a set of parity constraints. We also provide a polynomial time algorithm that decides if a maximal outerplane graph is topologically augmentable to meet parity constraints, and if so, produces a set of edges with minimum cardinality.
\end{abstract}

{\bf Keywords:} Plane topological graphs, plane geometric graphs, planar graphs, augmentation problems, $\mathcal{NP}$-complete problems, outerplanar graphs.
	
	\begin{figure}[ht!]
	\centering
		\begin{subfigure}[t]{0.3\textwidth}
			\includegraphics[width=\linewidth,page=1]{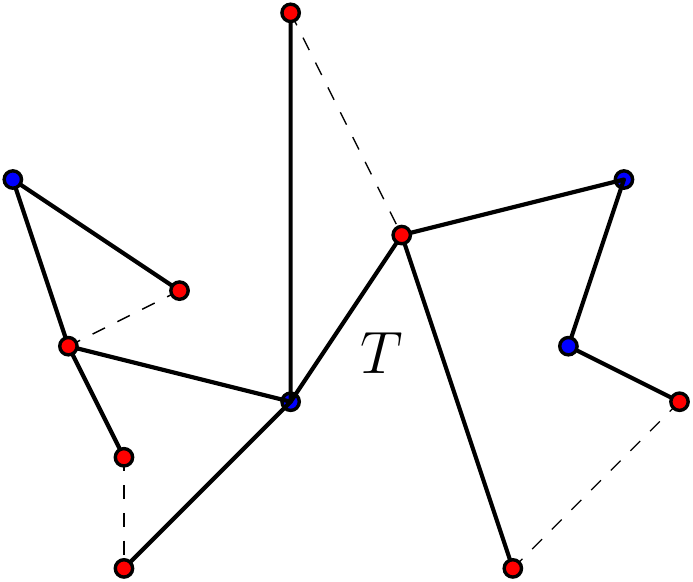}
			\captionsetup{width=.95\linewidth}
			\caption{}
			\label{fig:example1}
		\end{subfigure}~~~~~  %
		\begin{subfigure}[t]{0.3\textwidth}
			\includegraphics[width=\linewidth,page=2]{Examples.pdf}
			\captionsetup{width=.95\linewidth}
			\caption{}
			\label{fig:example2}
		\end{subfigure}~~~%
		\begin{subfigure}[t]{0.34\textwidth}
			\includegraphics[width=\linewidth,page=3]{Examples.pdf}
			\captionsetup{width=.95\linewidth}
			\caption{}
			\label{fig:example3}
		\end{subfigure}%
		\caption{Plane geometric augmentation problems to meet a set of parity constraints. Vertices in $R$ are depicted as red points and the rest as blue points. We use thick lines for the given plane geometric graph and dashed lines for the added edges.}\label{fig:exmaples}
	\end{figure}

\section{Introduction}

An \emph{embedding} of a graph $G=(V,E)$ on the plane is a drawing of $G$ on the plane in which its vertices are represented by points and its edges by simple curves joining pairs of adjacent vertices of $G$. We shall refer to these embeddings as \emph{topologial graphs}.  A \emph{geometric graph} is a topological graph whose edges are represented by straight-line segments. A topological graph is \emph{plane} if the curves representing its edges do not intersect, except at their endpoints. Two plane topological graphs $G=(V,E)$ and $H=(V,E')$ on the same set of vertices are called \emph{compatible} if  their union is a plane topological graph.

Let $G$ be a plane topological graph with $n$ vertices $v_1, \ldots, v_{n}$. Suppose that each vertex $v_i$ of $G$ has been assigned a parity constraint $c_i$, that is, $c_i$ is either \emph{even} or \emph{odd}. The set $C_G=\{c_1, c_2, ..., c_{n}\}$ will be called the set of parity constraints of $G$. The \emph{plane topological augmentation problem} of $G$ to meet $C_G=\{c_0, c_1, ..., c_{n-1}\}$ is that of finding a plane topological graph $H$ such that $G$ and $H$ are compatible and edge-disjoint, and the degree of every vertex in $G \cup H$ meets its parity constraint in $C_G$. If there exists such a graph $H$, then we say that $G$ is \emph{topologically augmentable} to meet $C_G$.

Analogously, given a plane geometric graph $G$ and a set $C_G$ of parity constraints, the \emph{plane geometric augmentation problem} of $G$ to meet $C_G$ is that of finding a plane geometric graph $H$ such that $G$ and $H$ are compatible and edge-disjoint, and the degree of every vertex in $G \cup H$ meets its parity constraint in $C_G$. If there exists such a graph $H$, then we say that $G$ is \emph{geometrically augmentable} to meet $C_G$.

Hereafter we denote by $R(C_G)$ (or simply $R$ if there is no ambiguity) the set of vertices in $G$ not satisfying their parity constraints in $C_G$, that is the set of vertices in $G$ that must change their parities. For simplicity in further reasoning, a vertex $v_i \in R(C_G)$ is called a \emph{red vertex}, otherwise it is called a \emph{blue vertex}. We say that an edge of $G$ is a \emph{red edge} (respectively \emph{blue edge}), if both of its endpoints are red vertices (respectively blue). An edge of $G$ is called a \emph{red-blue edge} if its endpoints have different colors.

Figure~\ref{fig:exmaples} shows some examples of plane geometric augmentation problems to meet a set of parity constraints. In Figure~\ref{fig:example1}, a plane geometric tree $T$ is augmented to a Eulerian plane geometric graph (all vertices have even degree), by adding a plane geometric perfect matching between the vertices with odd degree in $T$. Figure~\ref{fig:example2} shows a plane geometric path $P$ such that there is only one way of changing the parities of all vertices except $v$. In Figure~\ref{fig:example3},
the parities of the ten red vertices cannot all be changed simultaneously. These vertices define two empty convex pentagons $P_1$ and $P_2$, and the parity of only two vertices in each pentagon can be changed using diagonals. This construction can be generalized by adding more empty convex pentagons, and the resulting $n$-vertex graph is a plane geometric graph such that the parity constraints of at most $\frac{2n}{5}$ vertices can be satisfied in any plane augmentation.

Note that if a plane topological graph $G$ can be augmented to a plane topological graph $G'=G\cup H$ to satisfy a set $C_G$ of parity constraints, the degree in $H$ of all red vertices in $G$ is odd, while the degree of all blue vertices is even. It follows that the number of red vertices of $G$ must be even since any graph has an even number of odd vertices. Moreover, if $H$ has as few edges as possible, then $H$ is a forest. Otherwise, if a connected component of $H$ contains a cycle $C$, all of the edges of $C$ could be removed from $H$ without changing the parities of the vertices in $G \cup H$.
Therefore, the number of edges in $H$ is at least $|R(C_G)|/2$. If $H$ has $|R(C_G)|/2$ edges, then $H$ is a perfect matching between the vertices in $R(C_G)$.

\subsubsection*{Related previous work}

Graph augmentation is a family of problems in which one would like to add new edges,  ideally as few as possible, to a given graph, in such a way, that some desired property is  achieved. Connectivity augmentation is one of the most studied augmentation problems, due to a wide range of applications in designing reliable networks and database systems.

Connectivity augmentation in abstract graphs was first studied by Eswaran and Tarjan in 1976 (and independently by Plesn\'ik at the same time~\cite{plesenik1976minimum}). They obtained two polynomial time algorithms to optimally augment a graph to a $2$-connected or a $2$-edge-connected graph~\cite{eswaran1976augmentation}. In the same paper, Eswaran and Tarjan also studied the problem of augmenting weighted oriented graphs to achieve strong connectivity, proving that this problem is $\mathcal{NP}$-complete.
Since then, numerous papers about this topic have been published, studying efficient algorithms to increase the vertex- and edge-connectivity (see for example~\cite{vegh2011}) or giving approximation algorithms for hard versions of the problem (see for example~\cite{Kortsarz2015}). The reader can consult~\cite{frank2011} for a broad overview on these connectivity augmentation problems.

For planar graphs, Kant and Bodlaender~\cite {Kant1991aa} proved that finding a minimal set of edges that makes a planar connected graph planar and 2-connected is
$\mathcal{NP}$-complete. Later, Rutter and Wolff~\cite{rutter2008augmenting} extended this result by showing the $\mathcal{NP}$-completeness of the same problem replacing 2-connected by 2-edge-connected. For plane geometric graphs, these problems remain hard even if restricted to trees~\cite{rutter2008augmenting}. The reader is referred  to~\cite{abellanas2008augmenting, akitaya2019, Al-Jubeh2011, Garcia2015, kranakis2012, rutter2008augmenting, toth2012connectivity} for different results about connectivity augmentation problems in plane geometric graphs, and the survey~\cite{hurtado2013plane} for more details and related topics. Recent research on compatible plane graphs can be found in~\cite{aichholzer2018, aicholzer2009, aichholzer2011, garcia2014, Ishaque2013}.

Augmentation to meet parity constraints has also been studied in the literature, as a weaker version of the more general classic problem of augmenting a graph to meet a given degree sequence. Parity constraints characterize some types of graphs such as Eulerian graphs or maximal planar graphs that are 3-colorable (in both cases, connected graphs where all vertices have even degree). For abstract graphs, Dabrowski et al.~\cite{dabrowski2016} presented a polynomial algorithm to solve the augmentation problem to meet a set of parity constraints. This result extends the polynomial algorithm for the particular case of augmenting a graph to Eulerian~\cite{boesch1977}. Dabrowski et al.~\cite{dabrowski2016} also provided a polynomial algorithm to obtain a graph meeting the parity constraints, when addition and removal of edges are allowed. More results on editing graphs to make them Eulerian are available~\cite{cai2011, Cygan2014, dabrowski2016, dorn2013}. The relevance of Eulerian graphs relies on a wide range of applications to some problems in different areas such as route inspection problems, DNA fragmentation problems, scheduling problems, or the design of CMOS VLSI circuits. The reader can also see~\cite{Dong2018, Zhang2012} for other related results.

The geometric setting of the problem was also been studied~\cite{aicholzer2014, alvarez2015parity}. Given a point set $S$ in the plane and a set $C_S$ of parity constraints on the points of $S$, Aichholzer et al.~\cite{aicholzer2014} proved that it is always possible to build up a plane tree and a $2$-connected outerplanar graph satisfying all parity constraints, and a pointed pseudotriangulation that satisfy all but at most three parity constraints. For triangulations, they showed examples in which a linear number of parity constraints cannot be satisfied, and they provided triangulations satisfying about $\frac{2n}{3}$ of the parity constraints. Later, Alvarez~\cite{alvarez2015parity} showed that it is always possible to build up an even triangulation (a triangulation having all its vertices with even degree) on top of a point set $S \cup S'$, where $S$ is a $n$-point set in general position and  $S'$ a set of Steiner points of at most $\frac{n}{3} + c$ points, with $c$ being a positive integer constant.

\subsubsection*{Our results}

In this paper, we study the plane augmentation problem to meet parity constraints. In Section~\ref{sec:plane}, we focus on the topological plane augmentation problem. Given a plane topological graph $G=(V,E)$ and a set $C_G$ of parity constraints, we show that deciding if $G$ is topologically augmentable to meet $C_G$ is $\mathcal{NP}$-complete, even if $R(C_G)$ coincides with $V$ or is the set of vertices with odd degree in $G$. In particular, deciding if a plane topological graph can be augmented to a Eulerian plane topological graph is $\mathcal{NP}$-complete. The same complexity results are obtained, when the augmentation must be done by a plane topological perfect matching between the red vertices. Hence, deciding if $G$ admits a plane topological perfect matching, compatible with $G$ and edge-disjoint, is also $\mathcal{NP}$-complete. As a side result, we also extend these $\mathcal{NP}$-completeness results to planar graphs.

\begin{table}[!hbt]
\begin{center}
\begin{tabular}{|l|c|c|c|c|c|c|}
  \hline
   & \multicolumn{6}{|c|}{Topological augmentation} \\
     \cline{2-7}
   & \multicolumn{3}{|c|}{Decision}& \multicolumn{3}{|c|}{Matching} \\
   \cline{2-7}
   & Any $R$ & $R=V$ & Eulerian & Any $R$ & $R=V$ & Eulerian \\
   \hline
  Plane topological graph & {\bf NP} & {\bf NP} & {\bf NP} & {\bf NP} & {\bf NP}  & {\bf NP} \\
  Maximal outerplane graph & {\bf P} & {\bf P} & {\bf P}  & {\bf P} & {\bf P} & {\bf P} \\
  \hline
      \multicolumn{7}{|c|}{ } \\
\hline
          & \multicolumn{6}{|c|}{Geometric augmentation} \\
     \cline{2-7}
   & \multicolumn{3}{|c|}{Decision}& \multicolumn{3}{|c|}{Matching} \\
   \cline{2-7}
   & Any $R$ & $R=V$ & Eulerian & Any $R$ & $R=V$ & Eulerian \\
   \hline
  Plane geometric graph & {\bf NP} & {\bf NP} & {\bf NP} & {\bf NP} & {\bf NP} & {\bf NP} \\
  Plane geometric tree & {\bf ?} & {\bf ?} & {\bf ?} & {\bf NP} & {\bf NP} & {\bf NP}  \\
  Plane geometric path & {\bf ?} & {\bf ?} & {\bf ?} & {\bf NP} & {\bf NP} & {\bf P}  \\
  \hline
\end{tabular}
\caption{Summary of our results for the plane augmentation problem to meet a set of parity constraints. We distinguish if the augmentation is topological or geometric and if the input graph $G(V,E)$ is augmentable by an arbitrary graph (Decision) or by a plane perfect matching between the red vertices (Matching). We also distinguish the cases that the set $R$ of red vertices is arbitrary, coincides with $V$ or is the set of odd vertices in $G$ (Eulerian).}\label{tabla2}
\end{center}
\end{table}

Section~\ref{sec:geometric} is devoted to the geometric augmentation problem to meet parity constraints. As a consequence of the results shown in Section~\ref{sec:plane}, we can prove the $\mathcal{NP}$-completeness of all the problems previously described, when the input graph and the augmenting graph are plane and geometric. In addition, when it is required that the geometric augmentation is made by a plane geometric perfect matching between the red vertices, we also prove that deciding if a plane geometric tree (or path) is geometrically augmentable to meet parity constraints is $\mathcal{NP}$-complete. In particular, the two following problems are also $\mathcal{NP}$-complete: Deciding if a plane geometric tree (or path) admits a plane geometric perfect matching, compatible and edge-disjoint, and deciding if a plane geometric tree is geometrically augmentable to Eulerian by a plane geometric perfect matching between the vertices with odd degree.

Due to the hardness of the plane topological augmentation problem in general, in Section~\ref{sec:mops} we address this problem for a particular family of plane topological graphs, the family of maximal outerplane graphs. For this family, we characterize maximal outerplane graphs that are topological augmentable to satisfy all parity constraints, and we provide an $\mathcal{O}(n^3)$ time algorithm to find a minimum plane topological augmentation, if it exists. Given a maximal outerplane graph, we also show that there is always a plane topological matching such that the union of the maximal outerplane graph and the matching satisfies all but at most four parity constraints.
Table~\ref{tabla2} summarizes our results for the plane augmentation problem to meet parity constraints.

\section{Plane topological augmentation problems}\label{sec:plane}

In this section we address the plane topological augmentation problem to meet a set of parity constraints, that is, given a plane topological graph $G=(V,E)$ and a set $C_G$ of parity constraints, we look for a plane topological graph $H$ such that $G$ and $H$ are compatible and edge-disjoint, and all parity constrains are satisfied in $G\cup H$. We show the $\mathcal{NP}$-completeness of several variants of this problem and we extend these results to planar graphs.

We first recall the Planar 3-SAT problem. This problem will play a key role in the forthcoming theorems. Let $\Phi$ be a Boolean formula in conjunctive normal form and let $F_{\Phi}$ be the incidence graph of $\Phi$, that is, the bipartite graph whose vertices are the variables and the clauses of $\Phi$ and an edge connects a variable and a clause if and only if the variable (negated or unnegated) occurs in the clause. A formula is a 3-SAT formula if every clause contains at most three literals, where a literal is either a variable (called positive literal) or the negation of a variable (called negative literal). The Planar 3-SAT problem asks whether a given 3-SAT formula $\Phi$ is satisfiable, assuming that  $F_{\Phi}$ is planar. This problem has been shown to be $\mathcal{NP}$-complete~\cite{lichtenstein1982}.

Given a formula $\Phi$, suppose that $F_{\Phi}$ is planar and $F$ is a plane embedding of $F_{\Phi}$. In this case, observe that if $l_1, l_2, \ldots , l_k$ is the set of literals  of a variable $x$ in $\Phi$, then the clockwise cyclic order around $x$ of the clauses adjacent to $x$ in $F$ also implies an order for the literals of $x$, that is, two literals $l_i$ and $l_{i+1}$ are consecutive if the clauses in $F$ where they occur are consecutive in the cyclic order around $x$. In the rest of this section, $L_x=\{l_1, l_2, \ldots , l_k\}$ will denote the ordered list of literals of a variable $x$, according to a plane embedding $F$. Figure~\ref{fig:3sat} shows a plane embedding of $F_{\Phi}$ for the 3-SAT formula $\Phi = (\bar{x}_1 \vee x_2 \vee \bar{x}_3) \wedge(x_1 \vee \bar{x}_2 \vee \bar{x}_3) \wedge (x_1 \vee \bar{x}_3 \vee \bar{x}_4)$, where $\bar{x}$ means the negation of $x$. The (cyclic) ordered list of literals of $x_1$ is $L_{x_1}=\{x_1, \bar{x}_1, x_1\}$.

We next show the $\mathcal{NP}$-completeness of the plane topological augmentation problem.

\begin{figure}[ht!]
			\centering
			\includegraphics[width=.4\linewidth]{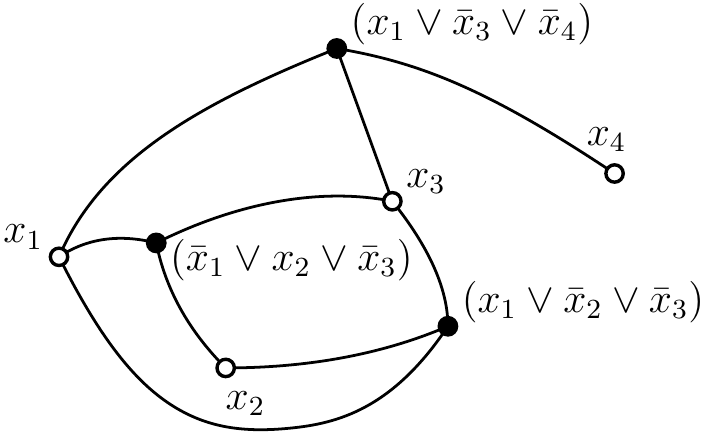}
		\caption{A plane embedding of $F_{\Phi}$ for the formula $\Phi = (\bar{x}_1 \vee x_2 \vee \bar{x}_3) \wedge(x_1 \vee \bar{x}_2 \vee \bar{x}_3) \wedge (x_1 \vee \bar{x}_3 \vee \bar{x}_4)$.}
		\label{fig:3sat}
\end{figure}

\begin{theorem}
	Let $G=(V,E)$ be a plane topological graph, and let $C_G$ be a set of parity constraints. Then, the problem of deciding if $G$ is topologically augmentable to meet $C_G$
is $\mathcal{NP}$-complete.
	\label{teo:np_topo_planes}
\end{theorem}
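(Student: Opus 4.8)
The plan is to show membership in $\mathcal{NP}$ easily, and then to reduce from Planar 3-SAT. Membership is immediate: given a candidate augmenting graph $H$, one can check in polynomial time that $H$ is plane, compatible and edge-disjoint with $G$, and that all parity constraints are met in $G\cup H$; moreover, as observed in the introduction, if $G$ is augmentable at all then there is a witness $H$ that is a forest, hence of polynomial size. For $\mathcal{NP}$-hardness, I would start from a planar 3-SAT formula $\Phi$ together with a fixed plane embedding $F$ of its incidence graph $F_\Phi$, and build a plane topological graph $G_\Phi$ together with parity constraints $C_{G_\Phi}$ so that $G_\Phi$ is topologically augmentable to meet $C_{G_\Phi}$ if and only if $\Phi$ is satisfiable. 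The embedding $F$ gives, for each variable $x$, the cyclic ordered list $L_x$ of its literals, and this ordering is what makes the gadgets fit together in the plane without crossings.

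The construction would use a \emph{variable gadget} for each variable $x$, laid out along a closed curve (drawn where the vertex $x$ of $F$ sits), with one ``port'' region pointing toward each clause in which $x$ occurs, in the cyclic order dictated by $L_x$; the gadget should have exactly two plane ways to be augmented so as to fix the parities of its constrained vertices, encoding the truth value of $x$, with the geometry arranged so that the ``true'' choice makes the positive-literal ports available to the adjacent clause gadgets and the ``false'' choice makes the negative-literal ports available. For each clause $c=(\ell_1\vee\ell_2\vee\ell_3)$ I would use a \emph{clause gadget}, a small plane subgraph (drawn where the vertex $c$ of $F$ sits) with a distinguished red vertex whose parity can only be corrected by an edge routed into the region of one of the three incident ports, and such an edge is available precisely when the corresponding literal is set to true. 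Wires connecting each variable gadget to each clause gadget are routed along the edges of $F$; since $F$ is plane, all wires, variable gadgets and clause gadgets can be placed without crossings, so $G_\Phi$ is a genuine plane topological graph. One then argues: a satisfying assignment yields consistent gadget choices and at least one available port per clause, giving a valid plane augmentation; conversely any valid augmentation forces each variable gadget into one of its two states (a consistent assignment) and forces each clause's red vertex to be matched through a true literal, so $\Phi$ is satisfied.

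I would then note that the same construction can be massaged to give the stronger variants in the theorem statement (though they may be stated/proved in later theorems): by attaching pendant paths one can make every vertex red so that $R(C_G)=V$, and by further local modifications one can arrange that $R$ is exactly the set of odd-degree vertices, i.e.\ the Eulerian case; similarly one can force the augmenting forest to have exactly $|R|/2$ edges and hence be a perfect matching, by tightening each gadget so that no cheaper repair than a single edge per red vertex is possible.

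The main obstacle is the design of the variable gadget: it must have precisely two plane completions (no spurious third option, and the two options must be ``rigid'' in that a choice is forced uniformly across all ports of the gadget), while simultaneously being small, drawable in the plane in the annular layout around the vertex $x$ of $F$, and interacting correctly with an arbitrary number of clause ports appearing in a prescribed cyclic order. Getting the parity bookkeeping of the ports consistent with the global count (recall $|R|$ must be even, and each gadget contributes a controlled parity) is the delicate part; the clause gadget and the wiring are comparatively routine once the variable gadget is in place.
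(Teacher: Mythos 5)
Your overall strategy coincides with the paper's: membership in $\mathcal{NP}$ is immediate, and hardness is shown by reducing from Planar 3-SAT, using a plane embedding $F$ of the incidence graph to place variable gadgets with exactly two rigid plane augmentations (encoding truth values), clause gadgets that can only be repaired via a port of a true literal, and wires routed along the edges of $F$. So the architecture is right.

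However, there is a genuine gap, and you name it yourself: the gadgets are never constructed. In a hardness proof of this kind the gadgets \emph{are} the proof; asserting that there exists a small plane subgraph ``with exactly two plane completions, no spurious third option, rigid across all ports'' is precisely the claim that needs to be established by exhibiting a concrete drawing and verifying its behaviour. The paper does this by building a \emph{basic gadget} with exactly two plane matchings of its red interior vertices, pairing two of them into a \emph{literal gadget} whose two augmentations leave either the ``output'' pair or the ``side'' pair unmatched, chaining literal gadgets into a cyclic variable gadget via \emph{wire gadgets} (quadrilateral faces enclosing two red vertices that are forced to match each other when both are free), and closing the cycle with a wire or double-wire gadget so that the choice propagates consistently with the signs of the literals in $L_x$. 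Your clause gadget is also underspecified in a way that hides a real difficulty: a single distinguished red vertex ``repairable only through a port'' does not obviously yield a correct parity bookkeeping, since the port vertex it attaches to changes parity too. The paper instead uses a decagon of ten red vertices with three two-vertex inputs, and the key non-trivial step is the argument that the decagon cannot be repaired internally because any set of non-crossing diagonals added inside it yields a biconnected outerplanar graph, which always has at least two vertices of degree two; hence at least two decagon vertices must be matched outward, which forces a true literal. Without explicit gadgets and these verification arguments, the reduction is a plan rather than a proof.
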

\begin{proof}
The plane topological augmentation problem is trivially in $\mathcal{NP}$ since, given a set of edges $E'$, verifying if $G'=(V,E \cup E')$ is a plane topological graph meeting $C_G$ takes polynomial time.
	
The proof is based on a reduction from Planar 3-SAT. Given a 3-SAT formula $\Phi$ and a plane embedding $F$ of $F_{\Phi}$, we build in polynomial time a plane topological graph $G_{\Phi}$ and a set $C_{G_{\Phi}}$ of parity constraints such that  $G_{\Phi}$ is topologically augmentable to meet $C_G$ if and only if $\Phi$ is satisfiable. The graph $G_{\Phi}$ is built from $F$, replacing  clause vertices by clause gadgets, variable vertices by variable gadgets and edges by pairs of wire gadgets. Figure~\ref{fig:3satphi} illustrates $G_{\Phi}$ for the formula $\Phi = (\bar{x}_1 \vee x_2 \vee \bar{x}_3) \wedge(x_1 \vee \bar{x}_2 \vee \bar{x}_3) \wedge (x_1 \vee \bar{x}_3 \vee \bar{x}_4)$; the plane embedding of $F_{\Phi}$ shown in Figure~\ref{fig:3sat}.

Without loss of generality, we can assume that every clause in $\Phi$ consists of three literals. The \emph{basic gadget} in our reduction (see Figure~\ref{fig:basic_gadgets}) consists of a plane topological graph that has only two possible plane topological augmentations to change the parities of all red interior vertices. These two augmentations are plane matchings that match all red vertices of the gadget except two. The \emph{negative} augmentation (depicted in the figures with red dashed edges) does not match the exterior vertices $v_r$ and $v_l$ (see Figure~\ref{fig:basic_gadgets}) and the \emph{positive} augmentation (depicted with blue dashed edges) does not match the exterior vertices $v_t$ and $v_b$ (Figure~\ref{fig:literal_gadget}).

	A \emph{literal gadget} is obtained by the union of two basic gadgets joined as shown in Figure~\ref{fig:literal_gadget}. As before, one can easily verify that there are only two possible plane topological augmentations (plane matchings) to change the parities of all red interior vertices. Note that if we choose the (negative) positive augmentation for one of the two basic gadgets, then we must also choose the (negative) positive augmentation for the other one. Figure~\ref{fig:literal_gadget} illustrates the \emph{positive} augmentation of a literal gadget, in which positive augmentations for the basic gadgets are used in the augmentation, and where the exterior vertices $v_t$ and $v'_t$ remain without changing their parities. In the \emph{negative} augmentation of a literal gadget, negative augmentations for the two basic gadgets appear and $v_l$ and $v'_r$ remain without changing their parities. The two exterior red vertices $v_l, v'_r$ will be used to concatenate literal gadgets corresponding to the same variable, one after another. The two exterior red vertices $v_t, v'_t$, called the \emph{output} of the literal gadget,  will be used to connect the literal gadget to a clause gadget (not defined yet).

	Observe that in the literal gadget shown in Figure~\ref{fig:literal_gadget}, the two bottom red vertices, each in a different basic gadget, are enclosed in a quadrilateral face, in such a way that if both vertices are free, then they are forced to join each other in order to meet their parity constraints. This kind of structure that consists of eight vertices (only two of them being red) and encloses its two red vertices is called a \emph{wire gadget}. In Figure~\ref{fig:literal_gadget}, the wire gadget consists of the three bottom vertices of each basic gadget, plus two extra blue vertices in the middle of the gadget to define the quadrangular face. Wire gadgets will be also used to connect literal gadgets to clause gadgets.

	\begin{figure}[ht!]
\centering
		\begin{subfigure}[t]{0.25\textwidth}
			\centering
			\includegraphics[width=.80\linewidth]{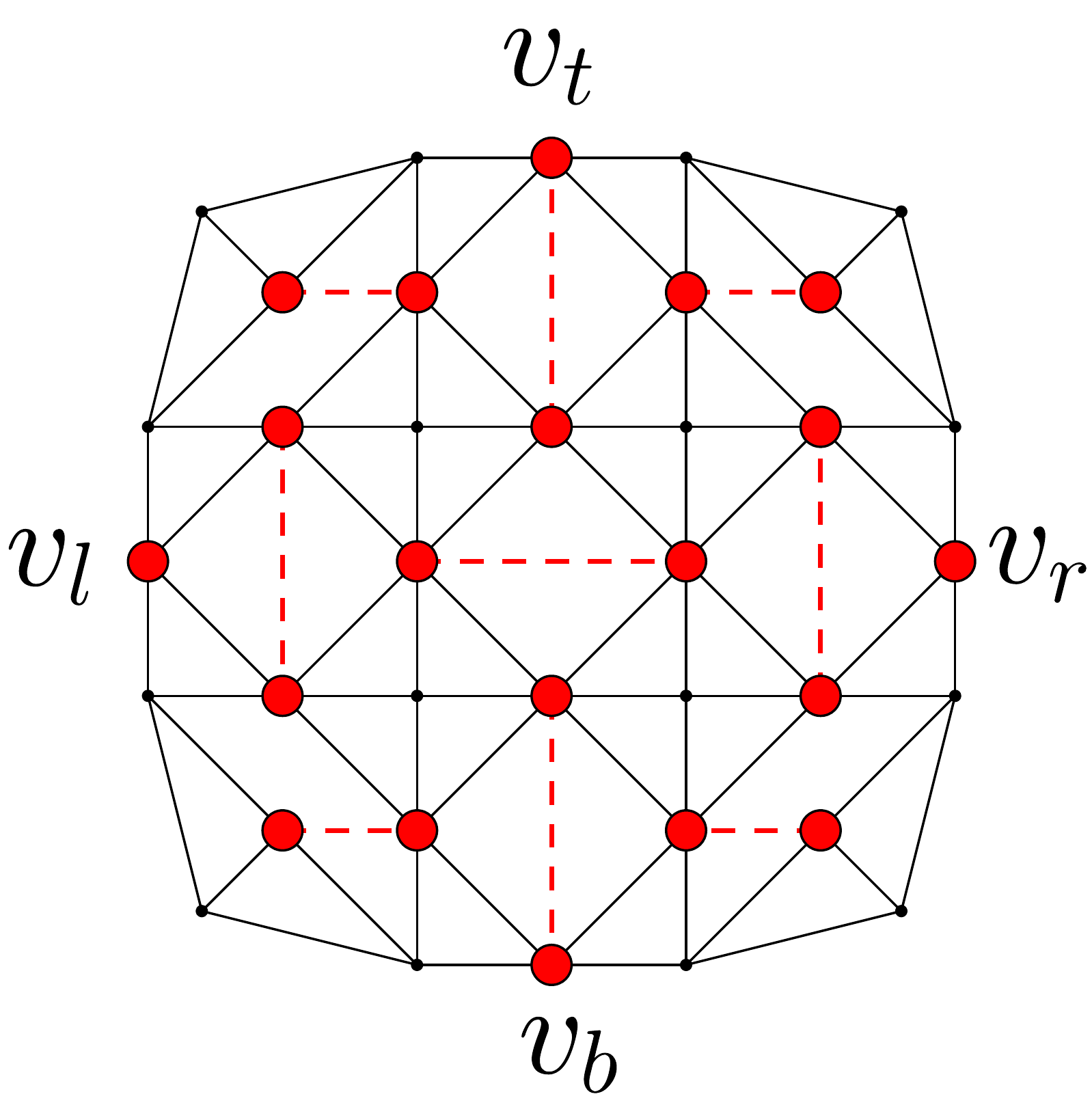}
			\caption{}
			\label{fig:basic_gadgets}
		\end{subfigure}%
		~
		\begin{subfigure}[t]{0.4\textwidth}
			\centering
			\includegraphics[width=.84\linewidth]{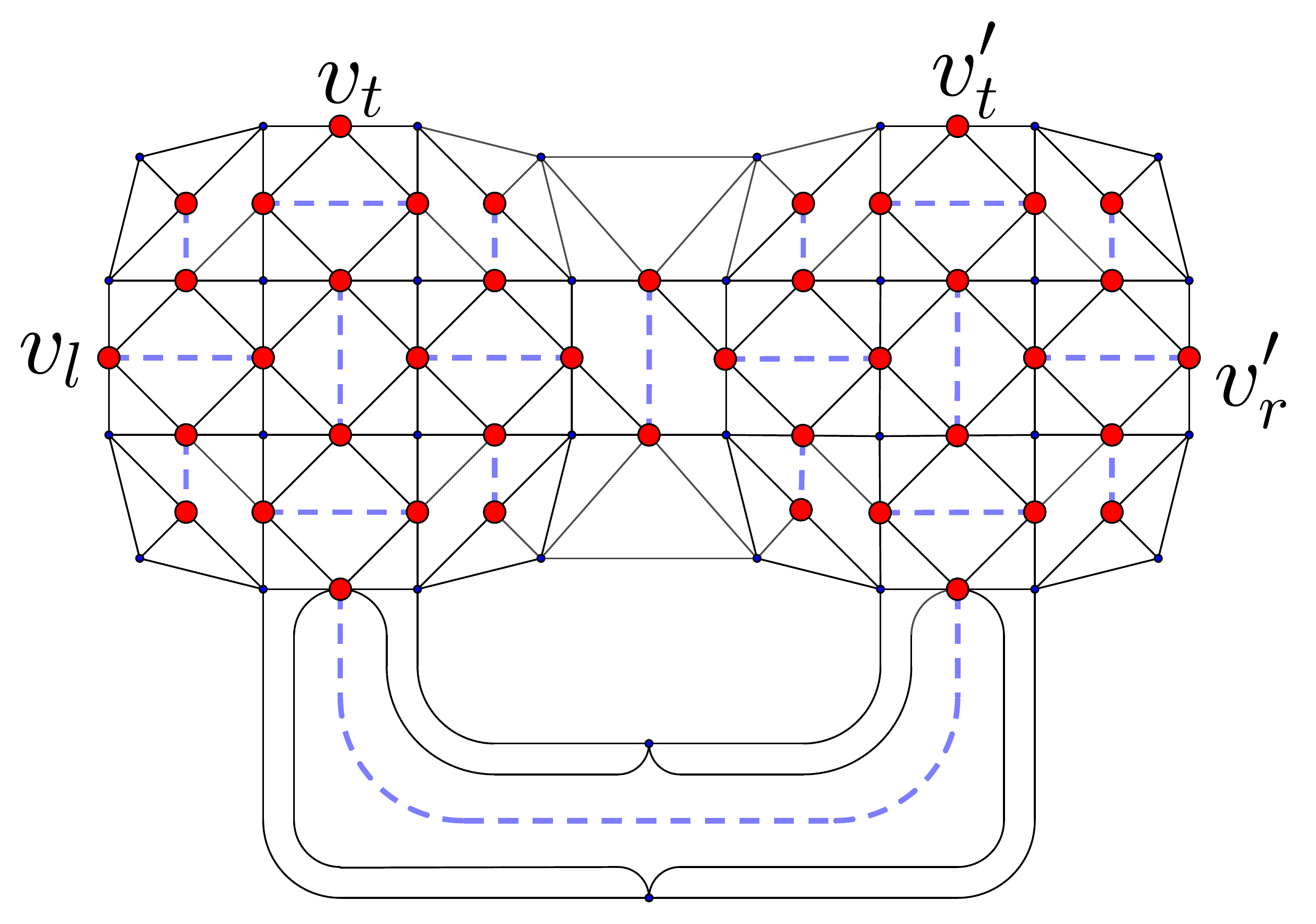}	
			\caption{}
			\label{fig:literal_gadget}
		\end{subfigure}%
		~
		\begin{subfigure}[t]{0.3\textwidth}
			\centering
			\includegraphics[width=.75\linewidth]{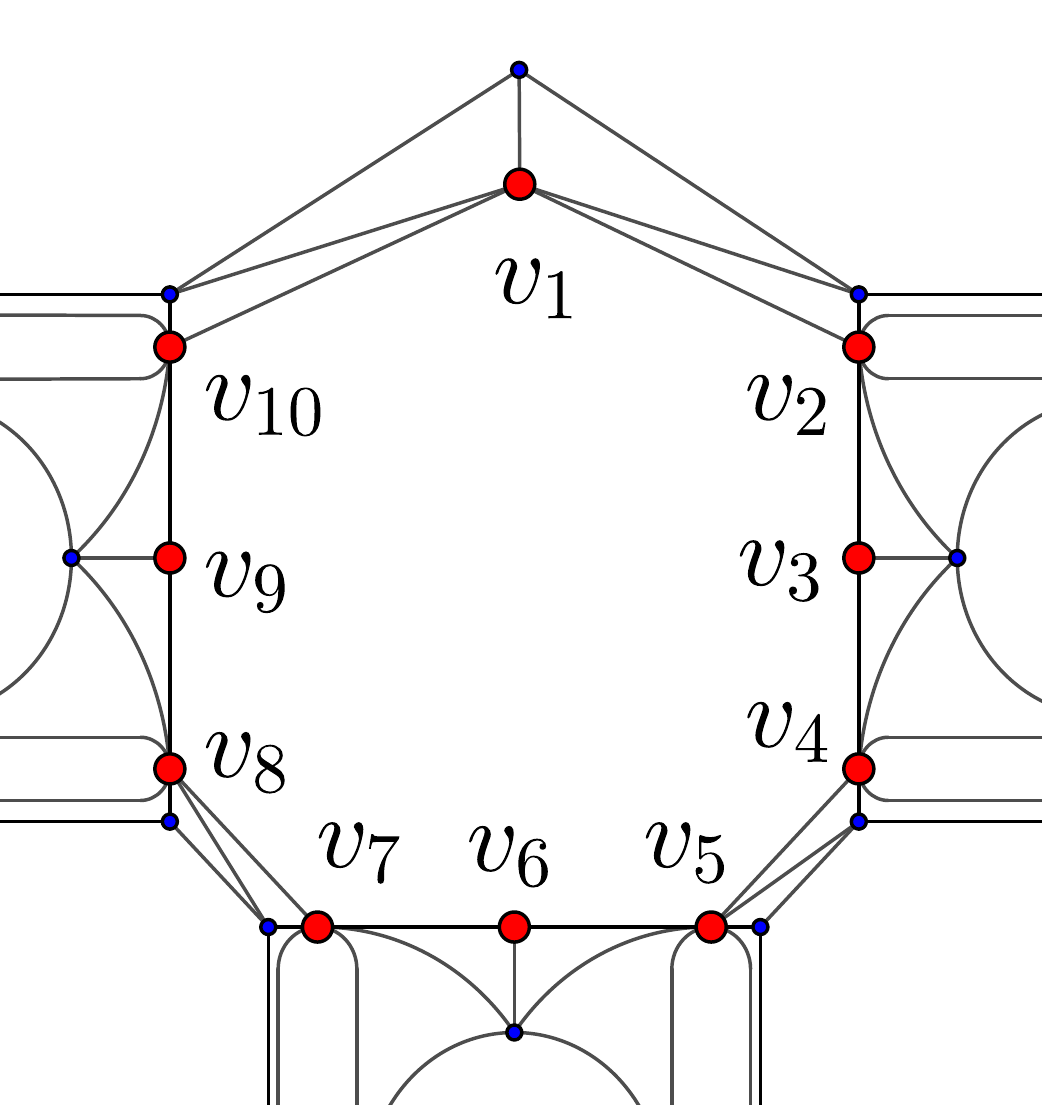}	
			\caption{}
			\label{fig:clause_gadget}
		\end{subfigure}%
		\caption{(a) The basic gadget. It admits only two possible plane augmentations of its red interior vertices. The negative augmentation is shown with red dashed edges. (b) A literal gadget with a positive augmentation. (c) A clause gadget.}
		\label{fig:gadgets}
	\end{figure}

		\begin{figure}[ht!]
		\centering
		\begin{subfigure}[t]{0.45\textwidth}

			\includegraphics[width=.78\linewidth]{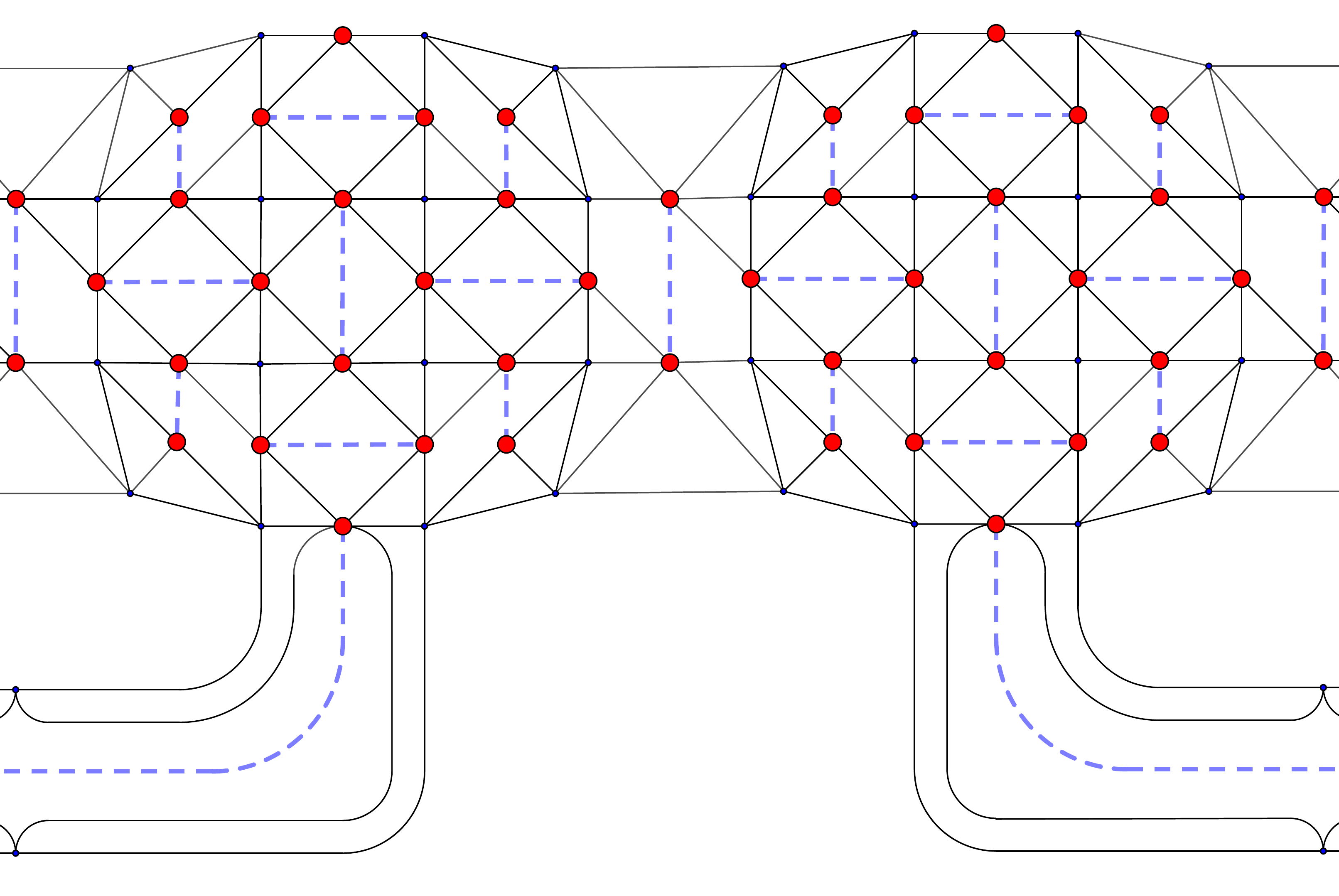}	
			\caption{}
			\label{fig:literal_gadget1}
		\end{subfigure}%
		~
		\begin{subfigure}[t]{0.45\textwidth}
			\centering
			\includegraphics[width=.78\linewidth]{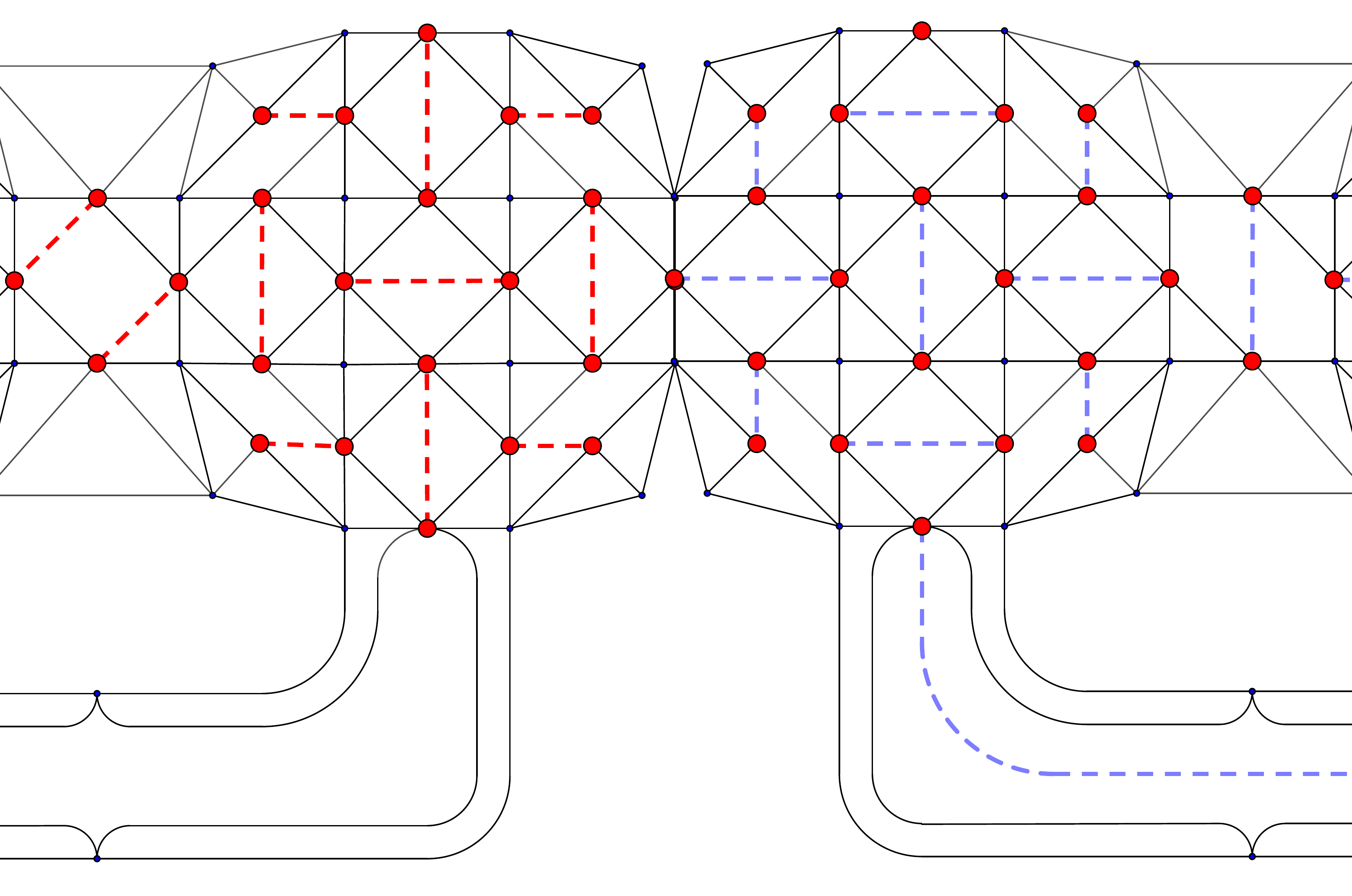}	
			\caption{}
			\label{fig:literal_gadget2}
		\end{subfigure}%
	
		\caption{(a) Union of two literal gadgets when their corresponding consecutive literals in $L_x$ are both positive or negative. (b) Union of two literal gadgets when one of their corresponding literals is positive and the other negative.}
	\end{figure}

Let $L_x=\{l_1, l_2, \ldots , l_k\}$ be the ordered list of literals of a variable $x$, according to $F$. We define the \emph{variable gadget} associated with $x$ as a sequence of $k$ literal gadgets joined as follows: If two consecutive literals $l_i$ and $l_{i+1}$ are both positive or negative, then their corresponding literal gadgets are joined as illustrated in Figure~\ref{fig:literal_gadget1}. Otherwise, they are joined as illustrated in Figure~\ref{fig:literal_gadget2}. In this way, the same augmentation for a literal gadget is transmitted to the following one in the first case, and the opposite augmentation is transmitted to the following literal gadget in the second case. We also join the leftmost exterior vertex of the first literal gadget to the rightmost exterior vertex of the last literal gadget with a wire gadget, when $l_1$ and $l_k$ are both positive or negative. If $l_1$ is positive and $l_k$ is negative or viceversa, then we join these two red vertices with a \emph{double wire gadget}, that is, a gadget obtained by gluing two wire gadgets
(see the variable gadget corresponding to $x_2$ in Figure~\ref{fig:3satphi}).

By construction, it is straightforward to see again that there are only two possible plane topological augmentations to change the parities of all red interior vertices in a variable gadget. In addition, all literal gadgets corresponding to positive literals in $L_x$ must have the same literal gadget augmentation, while all literal gadgets corresponding to negative literals must have the opposite one. See Figure~\ref{fig:3satphi}. The augmentation of a variable gadget with positive augmentations for the literal gadgets corresponding to positive literals in $L_x$ is called \emph{positive}, and \emph{negative} otherwise.

	A \emph{clause gadget} is a graph as shown in Figure \ref{fig:clause_gadget}, whose main part is a decagon having all its vertices, say from $v_1$ to $v_{10}$, colored in red. We identify the vertices $v_2$ and $v_4$ as the first input, $v_5$ and $v_7$ as the second input, and $v_8$ and $v_{10}$ as the third input. An input will be connected to the output of a literal gadget as we will explain later.

From $F$, we build $G_{\Phi}$ as follows. A clause in $F$ is replaced by a clause gadget. A variable $x$ in $F$ is replaced by a variable gadget, where the order of the literal gadgets corresponds to the order of the literals in $L_x$. An edge in $F$ is replaced by two wire gadgets connecting the red vertices of an input to the red vertices of an output, Finally, the exterior of every gadget is triangulated. See Figure~\ref{fig:3satphi}. $C_{G_{\Phi}}$  is chosen such that $R(C_{G_{\Phi}})$ is the set of red vertices of $G_{\Phi}$. Note that $G_{\Phi}$ is topological and plane and any edge that is added to $G_{\Phi}$ must be inside a gadget.

	\begin{figure}[ht!]
		\centering
		\includegraphics[width=\textwidth] {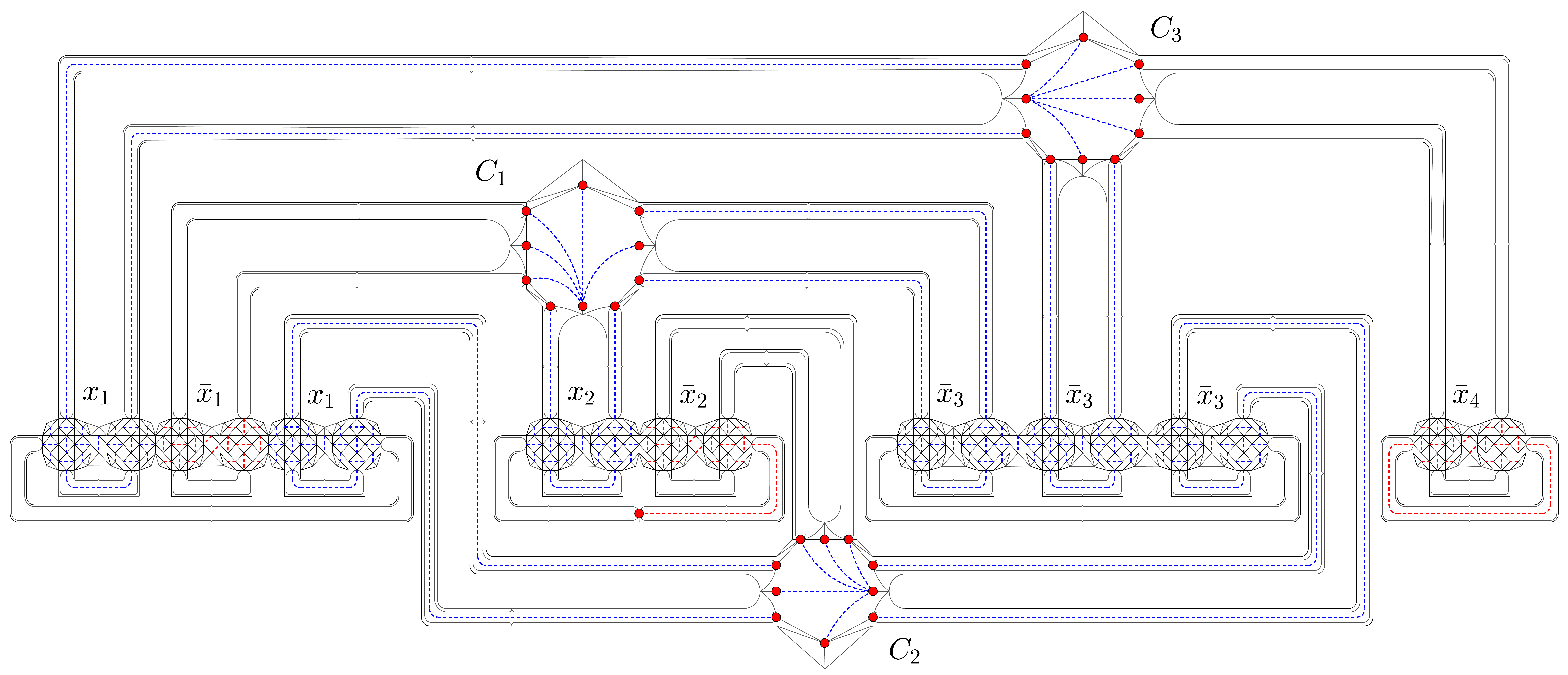}
		\caption{The graph $G_{\Phi}$ for the formula $\Phi$ =
			$(\bar{x}_1 \vee x_2 \vee \bar{x}_3)
			\wedge(x_1 \vee \bar{x}_2 \vee \bar{x}_3)
			\wedge (x_1 \vee \bar{x}_3 \vee \bar{x}_4)$ and the plane embedding $F$ of $F_{\Phi}$ shown in Figure~\ref{fig:3sat}. The exterior of the gadgets is triangulated. The plane topological augmentation shown in the figure correspond to the following assignment $(x_1, x_2, x_3, x_4) = (T,T,F,T)$.}
		\label{fig:3satphi}
	\end{figure}%

From this construction, one can easily prove that $G_{\Phi}$ is topologically augmentable to meet $C_G$ if and only if $\Phi$ is satisfiable. Suppose first that $\Phi$ is satisfiable. We can augment $G_{\Phi}$ as follows. If a variable $x$ takes the ``true" value, then we add to $G_{\Phi}$ a positive augmentation for the corresponding variable gadget. Otherwise, we add a negative augmentation. In a positive augmentation of a literal gadget, the two vertices of its output do not meet their parity constraints and must be connected to the two vertices of an input through the wire gadgets. Since $\Phi$ is satisfiable, at least one literal in each clause is true, implying that the two red vertices of the output of this true literal are connected to the two vertices, say $v_2, v_4$, of one of the inputs of the clause. Thus, we can add a star centered at $v_3$ to meet the parities of the red vertices in the clause gadget not yet meeting their parities. Hence, $G_{\Phi}$ is augmentable.

On the contrary, suppose that $G_{\Phi}$ is topologically augmentable to meet $C_G$. A variable gadget corresponding to a variable $x$ can be augmented only by a positive or negative augmentation. This implies the ``true" value for $x$ in the first case and the ``false" value in the second case. On the other hand, the parities of the ten red vertices of a clause gadget cannot be changed  using only edges inside the decagon. The decagon and the addition of any set of non-crossing diagonals inside it define a biconnected outerplanar graph. It is well known that a biconnected outerplanar graph always has at least two vertices of degree 2. Thus, if there is a plane augmentation for a clause gadget, then at least two vertices must be connected to output vertices through the wire gadgets to meet their parities. Given a gadget clause corresponding to a clause $c$, if a vertex of an input, say $v_2$, is connected in $G_{\Phi}$ to a vertex of an output, necessarily $v_4$ is also connected to the same output. Hence, the literal gadget containing this output will have a positive augmentation in $G_{\Phi}$, implying that $c$ has at least a true literal. Therefore, $\Phi$ is satisfiable and the theorem follows.
\end{proof}

In the previous theorem, $R(C_G) \subset V$. Let us strengthen the previous result to the special case $R(C_G) = V$.

\begin{theorem}
	Let $G=(V,E)$ be a plane topological graph and let $C_G$ be a set of parity constraints. Suppose that $R(C_G) = V$. Then, the problem of deciding if $G$ is topologically augmentable to meet $C_G$
is $\mathcal{NP}$-complete.
	\label{teo:np_topo_planes_v}
\end{theorem}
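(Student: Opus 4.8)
The plan is to reduce once more from Planar 3-SAT, reusing the graph $G_{\Phi}$ and the construction from the proof of Theorem~\ref{teo:np_topo_planes}, and modify it into a new plane topological graph $G'_{\Phi}$ so that the set of red vertices (vertices that must change parity) becomes the whole vertex set $V$. Since membership in $\mathcal{NP}$ is established exactly as before (checking that a proposed $E'$ gives a plane graph meeting $C_G$ is polynomial), the only work is in the reduction. In $G_{\Phi}$, the blue vertices are precisely the two extra middle vertices used inside each wire gadget to create the quadrilateral face, plus whatever blue vertices arise from triangulating the exterior of the gadgets. First I would set up $G'_{\Phi}$ by starting from $G_{\Phi}$ and attaching, to each blue vertex, a small \emph{pendant gadget} that (i) forces that vertex's parity to change in any valid augmentation and (ii) does not interfere with the existing choices of positive/negative augmentations in the literal, variable, clause, and wire gadgets.

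The key idea for the pendant gadget is to use the basic building block of the whole paper: for each blue vertex $v$, add a new small blue vertex $w$ together with a couple of auxiliary vertices placed in a face incident to $v$, and connect them so that, to fix the parity of the new auxiliary vertices, an augmenting edge must be incident to $v$, and by a parity/counting argument the only way to simultaneously fix $v$, $w$, and the auxiliary vertices is a forced local matching edge that toggles $v$'s parity by exactly one. Concretely, one can place two new vertices in a triangular face incident to $v$ and require that the unique plane way to change all their parities is the edge joining $v$ to one of them; this is the same ``enclosed vertices forced to match'' mechanism already used for the wire gadgets in the proof of Theorem~\ref{teo:np_topo_planes}. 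After attaching one such pendant to every originally-blue vertex, relabel $C_{G'_{\Phi}}$ so that every vertex of $G'_{\Phi}$ is red. Then I would re-triangulate any newly created non-triangular faces of the exterior, taking care (as in the original proof) that added structure only creates forced local matchings and does not open new routes between gadgets.

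The correctness argument then runs in two directions, parallel to the previous theorem. If $\Phi$ is satisfiable, take the positive/negative augmentation of each variable gadget according to the truth assignment, the star inside each satisfied clause gadget exactly as before, and additionally, for each pendant gadget, the forced local matching edge that toggles its attachment vertex; one checks that these edge sets are pairwise non-crossing (each pendant lives in its own face) and together change the parity of every vertex, so $G'_{\Phi}$ is augmentable to meet $C_{G'_{\Phi}} = V$. Conversely, if $G'_{\Phi}$ is augmentable, then inside each pendant gadget the augmentation is forced to be the local matching edge, which in particular changes the parity of the attachment vertex by exactly one; hence the ``net'' augmentation restricted to the original $G_{\Phi}$ must change parity at exactly the originally-red vertices and leave the originally-blue vertices with the parity change already supplied by their pendants, i.e. it behaves on $G_{\Phi}$ exactly like a valid augmentation for the original instance. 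Applying the analysis of Theorem~\ref{teo:np_topo_planes} (the decagon clause gadgets, the biconnected-outerplanar degree-2 argument, and the propagation through literal and variable gadgets) then yields a satisfying assignment for $\Phi$.

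The main obstacle is designing the pendant gadget so that it is genuinely \emph{local and rigid}: it must (a) be attachable inside some face incident to an arbitrary blue vertex of $G_{\Phi}$ regardless of that face's shape after triangulation, (b) force its own augmentation uniquely, (c) contribute an \emph{odd} change to the attachment vertex's degree (so that a blue vertex becomes satisfied) without offering any freedom that would let the solver ``cheat'' the parity of neighbouring gadget vertices, and (d) keep $G'_{\Phi}$ plane and still constructible in polynomial time. I expect to verify (b) and (c) by the same small case analysis used for the basic gadget in Figure~\ref{fig:basic_gadgets}, essentially: a constant-size enclosed configuration whose only non-crossing parity-fixing completion is a prescribed matching. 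Once the pendant gadget is pinned down, the rest of the proof is a routine repetition of the reasoning in Theorem~\ref{teo:np_topo_planes}, so I would keep the write-up short, emphasising the gadget and the forced-matching claim, and then say ``the rest of the argument is identical to the proof of Theorem~\ref{teo:np_topo_planes}.''
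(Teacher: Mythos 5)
Your overall strategy is the same as the paper's: keep $G_{\Phi}$ from Theorem~\ref{teo:np_topo_planes}, and attach to each blue vertex a constant-size forced gadget inside an incident triangular face so that every vertex becomes red and the forced local matchings absorb exactly the new parity requirements. However, you never actually exhibit the gadget, and you yourself flag it as ``the main obstacle''; worse, both concrete versions you sketch fail. If you place \emph{two} new red vertices inside a triangle incident to $v$ and they can see each other, then the single edge joining them fixes both of their parities without touching $v$ at all, so nothing forces an augmenting edge at $v$; and if instead you include ``a new small blue vertex $w$'' in the gadget, the final graph no longer satisfies $R(C_G)=V$, which is the whole point of the theorem. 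The parity arithmetic forces an \emph{odd} number of new red vertices whose only legal completion is a near-perfect matching among themselves plus one edge to $v$: the paper uses exactly three red vertices $v_1',v_2',v_3'$ placed in a triangular face of $v_1$, wired so that $v_2'$ and $v_3'$ must match each other and $v_1'$ must match $v_1$. Until you commit to such a configuration and verify its rigidity, the reduction is not complete.

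Two smaller points. First, your plan to ``re-triangulate any newly created non-triangular faces'' is both unnecessary and dangerous: adding edges to the base graph changes the degrees, hence the colors, of existing vertices, which can silently break the equivalence with the original instance; the paper's gadget needs no re-triangulation precisely because it preserves the parities of the other two corners of the host triangle. Second, you assume each pendant ``lives in its own face,'' but several blue vertices of $G_{\Phi}$ may share all their incident triangles; the paper handles this by nesting copies of the gadget inside the sub-triangles created by the first insertion, and your write-up should address this case explicitly.
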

\begin{proof}
The proof is again by reduction from Planar 3-SAT. Given a 3-SAT formula $\Phi$ and a plane embedding $F$ of $F_{\Phi}$, we build in polynomial time a plane topological graph $G'_{\Phi}$ such that $R(G'_{\Phi}) = V$ and $G'_{\Phi}$ is topologically augmentable to meet $C_{G'_{\Phi}}$ if and only if $\Phi$ is satisfiable. Equivalently, we prove that $G'_{\Phi}$ is topological augmentable if and only if $G_{\Phi}$ is topologically augmentable, where $G_{\Phi}$ is the graph built in Theorem~\ref{teo:np_topo_planes}.

From $G_{\Phi}$, we show how to change all blue vertices in $G_{\Phi}$ to red vertices in $G'_{\Phi}$, keeping the rest of the vertices as red vertices. Each blue vertex in $G_{\Phi}$ is adjacent to at least one triangular face, so we can do a mapping of each blue vertex in $G_{\Phi}$ to one of its adjacent triangular faces in $G_{\Phi}$.
	
	Let $\Delta = (v_1, v_2, v_3)$ be a triangular face in $G_{\Phi}$. We have the following two cases: Only one blue vertex, say $v_1$, is assigned to $\Delta$, or more than one blue vertex is assigned to $\Delta$. In the first case, we can change the color of $v_1$ and preserve the parities of $v_2,v_3$, by adding three red vertices $v'_1, v'_2, v'_3$ and connecting them as shown in Figure~\ref{fig:red}. Observe that in any plane augmentation, $v'_2$ and $v'_3$ are forced to join each other to meet their parities. In the same way, $v_1$ is forced to join $v'_1$, assuming that $v_1$ is red. In the second case, more than one blue vertex is assigned to $\Delta$, suppose that $v_1$ and $v_2$ are blue. We can change the color of $v_1$ by adding the previous construction in $\Delta$, and the color of $v_2$ by adding the construction in the triangle $(v_3,v'_3,v_2)$. If $v_3$ is also blue, then we add again the construction in the triangle $(v'_1,v'_3,v_3)$ to change the color of $v_3$.

\begin{figure}[ht!]
        \centering
		\begin{subfigure}[t]{0.3\textwidth}
			\centering
			\includegraphics[width=.8\linewidth]{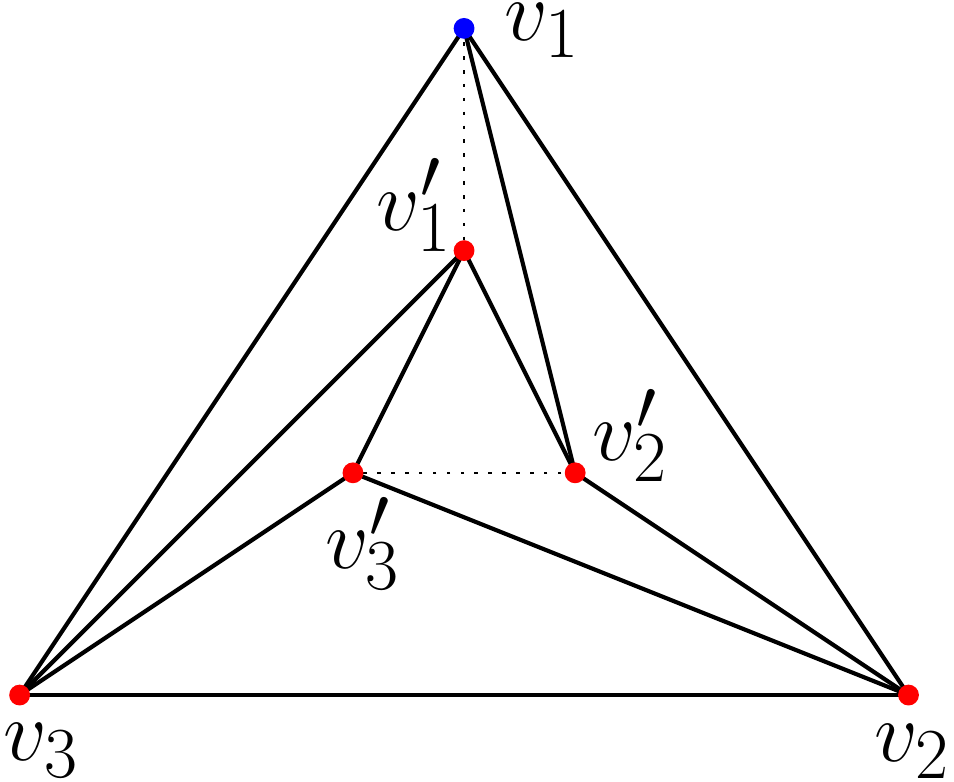}
			\caption{}
			\label{fig:red}
		\end{subfigure}%
		~~~~~~~~~~~~~~
		\begin{subfigure}[t]{0.3\textwidth}
			\centering
	   \includegraphics[width=.9\linewidth]{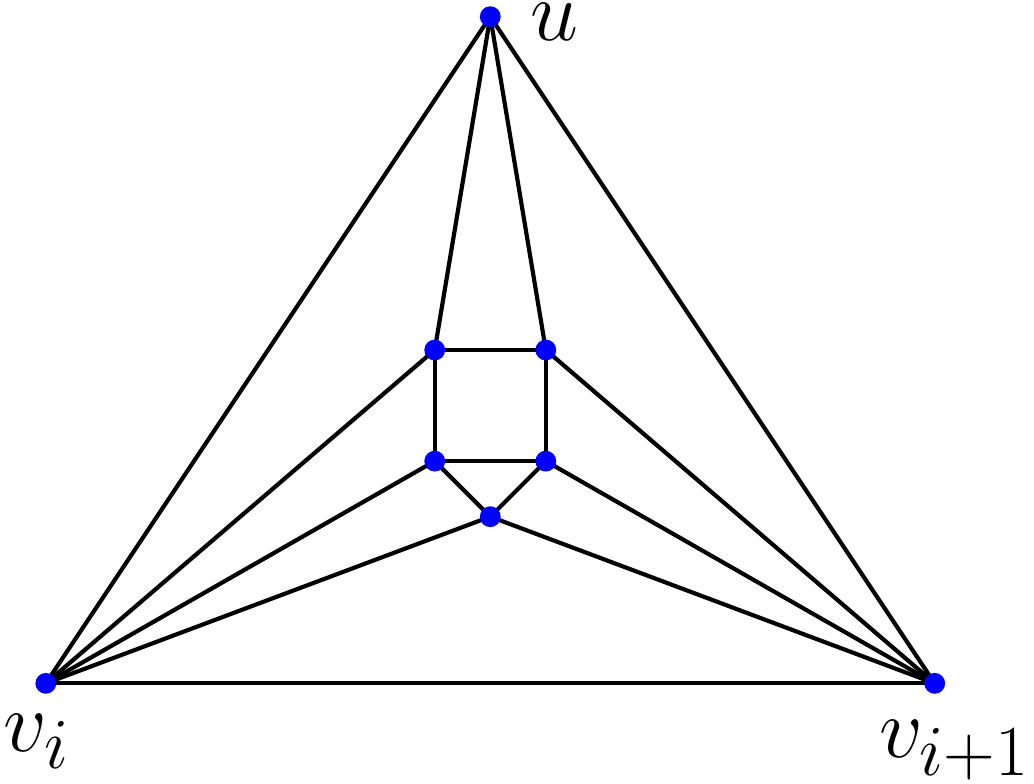}
			\caption{}
			\label{fig:Figura2}
		\end{subfigure}%
		\caption{(a) Changing the color of $v_1$ by adding three red vertices in a triangular face $\Delta$. (b) The triangle $(v_i,v_{i+1},u)$ with the addition of 5 blue vertices and 14 edges, to replace the duplicated edge $e'_i=(v_i,v_{i+1})$.}
		\label{fig:bluered}
	\end{figure}

	 Obviously, the graph $G'_{\Phi}$ obtained after applying the previous rules to all  triangular faces of $G_{\Phi}$, is plane and satisfying $R(G'_{\Phi}) = V$. In addition, it is straightforward to see that $G'_{\Phi}$ is topologically augmentable if and only if $G_{\Phi}$ is topologically augmentable, since all red vertices in $G'_{\Phi}$ that do not belong to $G_{\Phi}$ must be matched between them to change their parities in any augmentation. Thus, the theorem follows.
\end{proof}

A similar reasoning applies to prove that the problem of deciding if a plane topological graph can be augmented to a Eulerian plane topological graph is $\mathcal{NP}$-complete. Before proving this result, we recall the concept of $T$-join. Let $G=(V,E)$ be a graph and let $T\subseteq V$. A subset $E'\subseteq E$ of edges is a \emph{$T$-join} if the set of vertices with odd degree in the subgraph induced by $E'$ is precisely $T$. If $G$ is connected and $|T|$ is even, then $G$ always has a $T$-join. In this case, a $T$-join of minimum size consists of $|T|/2$ edge-disjoint paths whose endpoints are the vertices in $T$, and can be computed in $\mathcal{O}(n^3)$~\cite{edmonds73}.

\begin{theorem}
Let $G$ be a plane topological graph. Then, the problem of deciding if $G$ can be augmented to a Eulerian plane topological simple graph is $\mathcal{NP}$-complete.
\label{teo:np_topo_euler}
\end{theorem}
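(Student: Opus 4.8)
The plan is to reuse the reductions behind Theorems~\ref{teo:np_topo_planes} and~\ref{teo:np_topo_planes_v}, adding the two ingredients that the Eulerian version demands: the augmented graph must be \emph{connected} and \emph{simple}. Membership in $\mathcal{NP}$ is immediate: given a candidate edge set $E'$, one checks in polynomial time that $(V,E\cup E')$ is plane, simple, connected, and has all vertices of even degree.

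For the hardness I would start from a 3-SAT formula $\Phi$ with a plane embedding $F$ of $F_\Phi$, build the graph $G'_\Phi$ as in Theorem~\ref{teo:np_topo_planes_v}, and modify it into a plane topological graph $G''_\Phi$ with two properties. First, the exteriors of all gadgets are triangulated exactly as before, so $G''_\Phi$ is connected; since every edge that can be added to $G''_\Phi$ must lie inside a gadget, $G''_\Phi\cup H$ stays connected for every augmentation $H$, and connectivity never obstructs. Second, I would arrange every vertex of $G''_\Phi$ to have \emph{odd} degree, by small local edits of the gadgets and, crucially, of the triangulations of the exterior faces (these triangulations leave a lot of freedom, and any vertex still left with even degree can be corrected by attaching a tiny private gadget to it), always preserving the ``exactly two plane augmentations'' property of each gadget. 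Then, choosing $C_{G''_\Phi}$ to be all even, the set $R(C_{G''_\Phi})$ of vertices that must flip parity is all of $V$, exactly as in Theorem~\ref{teo:np_topo_planes_v}, so ``$G''_\Phi$ is augmentable to a Eulerian simple plane topological graph'' becomes ``$G''_\Phi$ is augmentable to meet $C_{G''_\Phi}$'' (connectivity being already guaranteed), and the gadget analysis of Theorems~\ref{teo:np_topo_planes} and~\ref{teo:np_topo_planes_v} gives that this holds if and only if $\Phi$ is satisfiable.

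The step I expect to be the main obstacle is \emph{simplicity}. In the augmentations used in the earlier proofs, the natural way to flip the parities of the two endpoints of an edge $e_i=(v_i,v_{i+1})$ of the construction is to add a parallel copy $e'_i$, which is forbidden once $G''_\Phi\cup H$ must be simple. The remedy is to replace every such edge $e_i$, lying on a triangular face $(v_i,v_{i+1},u)$, by the gadget sketched in Figure~\ref{fig:Figura2}: five extra blue vertices and fourteen edges placed inside the triangle so that, in any plane augmentation meeting the parity constraints inside it, exactly two choices are possible --- one changing the parities of both $v_i$ and $v_{i+1}$ and one changing neither, while $u$ and the five new blue vertices keep their parities --- thereby simulating the optional addition of $e'_i$ without ever duplicating an edge. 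One then checks that, after all these replacements, no parity-meeting augmentation of $G''_\Phi$ can create a multi-edge anywhere else: this holds because in every gadget the edges an augmentation may add are diagonals of faces, never edges already present. Verifying that the Figure~\ref{fig:Figura2} gadget really admits precisely those two plane augmentations, independently of how the rest of $H$ is chosen, is the delicate part of the argument; everything else follows the template of the previous two theorems, and since $G''_\Phi$ has polynomial size and is constructed in polynomial time, the theorem follows.
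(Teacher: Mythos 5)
Your high-level architecture is reasonable---modify the base graph so that the odd-degree vertices are exactly the ones whose parity must change, then invoke the earlier reduction---but the step that carries all the difficulty is left unproved. You propose to make every vertex of $G'_\Phi$ have odd degree by ``small local edits of the gadgets and of the triangulations of the exterior faces'' and by ``attaching a tiny private gadget'' to any leftover even-degree vertex, ``always preserving the exactly-two-augmentations property''. None of this is substantiated, and it is precisely where the content of the theorem lives: retriangulating a face only redistributes degree among that face's boundary vertices and does not let you independently control the parity of every vertex, and a ``private gadget'' attached to a vertex introduces new vertices that (in your $R=V$ setting) must themselves end up with odd degree and must be forced to pair up among themselves in every augmentation without offering new matching partners to the red vertices of the surrounding gadget. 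You assert all of this can be arranged but give no construction and no argument that the rigidity of the clause and literal gadgets survives.

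The paper supplies exactly the missing mechanism, and the contrast is instructive. It starts from $G_{\Phi}$ of Theorem~\ref{teo:np_topo_planes} (keeping $R(C_G)\subset V$; there is no need to route through Theorem~\ref{teo:np_topo_planes_v} and force $R=V$), takes $T$ to be the red vertices of even degree together with the blue vertices of odd degree ($|T|$ is even), and adds a \emph{minimum} $T$-join, which flips the parity of exactly the vertices of $T$ so that afterwards the odd-degree vertices are precisely the red ones. The price is duplicated edges, and the key structural lemma---for which your proposal has no analogue---is that a minimum $T$-join contains at most one edge of each triangular face, so every duplicated edge can be embedded in its own triangular face and replaced there by the gadget of Figure~\ref{fig:Figura2}. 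You also misread that gadget's role: it is not an ``optional'' parity flip offered to the augmentation (one choice flipping both endpoints, one flipping neither); it is a static piece of the \emph{base} graph that unconditionally simulates the parallel edge, contributing odd degree at $v_i$ and $v_{i+1}$, even degree at $u$, and five even-degree blue vertices. Simplicity of the augmentation itself was never at risk, since the matchings added in the earlier reductions are diagonals of faces. Without the $T$-join argument and its triangular-face embedding lemma (or a fully worked-out substitute for your ad hoc degree fixes), your reduction does not go through.
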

\begin{proof}
Given a 3-SAT formula $\Phi$ and a plane embedding $F$ of $F_{\Phi}$, let $G_{\Phi}= (V,E)$ be the graph built in Theorem~\ref{teo:np_topo_planes}. We show that we can transform $G_\Phi$ into a plane topological graph $G'_\Phi$, having all its red vertices with odd degree and all its blue vertices with even degree, in such a way that $G'_{\Phi }$ is topologically augmentable if and only if $G_{\Phi }$ is topologically augmentable. We recall that to obtain a Eulerian graph, we have to change the parities of the odd degree vertices.

Let $V'$ be the set of vertices with odd degree in $G_{\Phi}$. Let $T$ be the set of vertices $v_i \in V$, such that $v_i$ is red and has even degree or $v_i$ is blue and has odd degree. The number of vertices in $T$ is even because $|R(G_{\Phi})|$ and $|V'|$ are even. $G_{\Phi}$ is connected, so we can find a $T$-join of minimum size $G_T$ in $\mathcal{O}(n^3)$ time, which consists of $|T|/2$ edge-disjoint paths whose endpoints are the vertices in $T$.

We claim that any triangular face $\Delta =(v_i,v_j,v_k)$ of $G_{\Phi}$ has at most one common edge with $G_T$. Suppose to the contrary that $\Delta$ has at least two common edges, say $(v_i,v_j), (v_j,v_k)$. If  they belong to the same path $P_1, (v_i,v_j), (v_j,v_k), P_2$ of $G_T$, then replacing the two common edges by the third edge we obtain a shorter path $P_1, (v_i,v_k), P_2$, contradicting the minimality of $G_T$. If they belong to different paths, $P_1, (v_i,v_j), P_2$ and $P'_1, (v_j,v_k), P'_2$, then $P'_1, P_2$ and $P_1,(v_i,v_k), P'_2$ also connect the same endpoints and are shorter, again a contradiction. Hence, the claim follows.

Let $G'$ be the multigraph obtained by adding $G_T$ to $G_{\Phi}$, so $G'$ contains duplicated edges. Note that each edge of $G_{\Phi}$ is adjacent to at least a triangular face and that the vertices with odd degree in $G'$ are precisely the red vertices. By the previous claim, each duplicated edge in $G'$ can be mapped to a different triangular face and embedded into it. Suppose that $e'_i=(v_i,v_{i+1})$ is a duplicated edge embedded inside the triangular face $(v_i,v_{i+1},u)$. Then, we can replace $e'_i$ by 5 blue vertices and 14 edges and connect them as shown in Figure \ref{fig:Figura2}. Observe that the 5 added vertices have even degree, and $v_i,v_{i+1}, u$ keep their parities. By doing this replacement for every duplicated edge of $G'$, we obtain a new plane topological simple graph $G'_{\Phi }$ such that its red vertices are precisely the ones with odd degree. By construction, it is straightforward to verify that $G'_{\Phi }$ is topologically augmentable if and only if $G_{\Phi }$ is topologically augmentable, so the theorem follows.
\end{proof}

		\begin{figure}[ht!]
\centering
		\begin{subfigure}[t]{.35\textwidth}
			\centering
			\includegraphics[width=.95\textwidth] {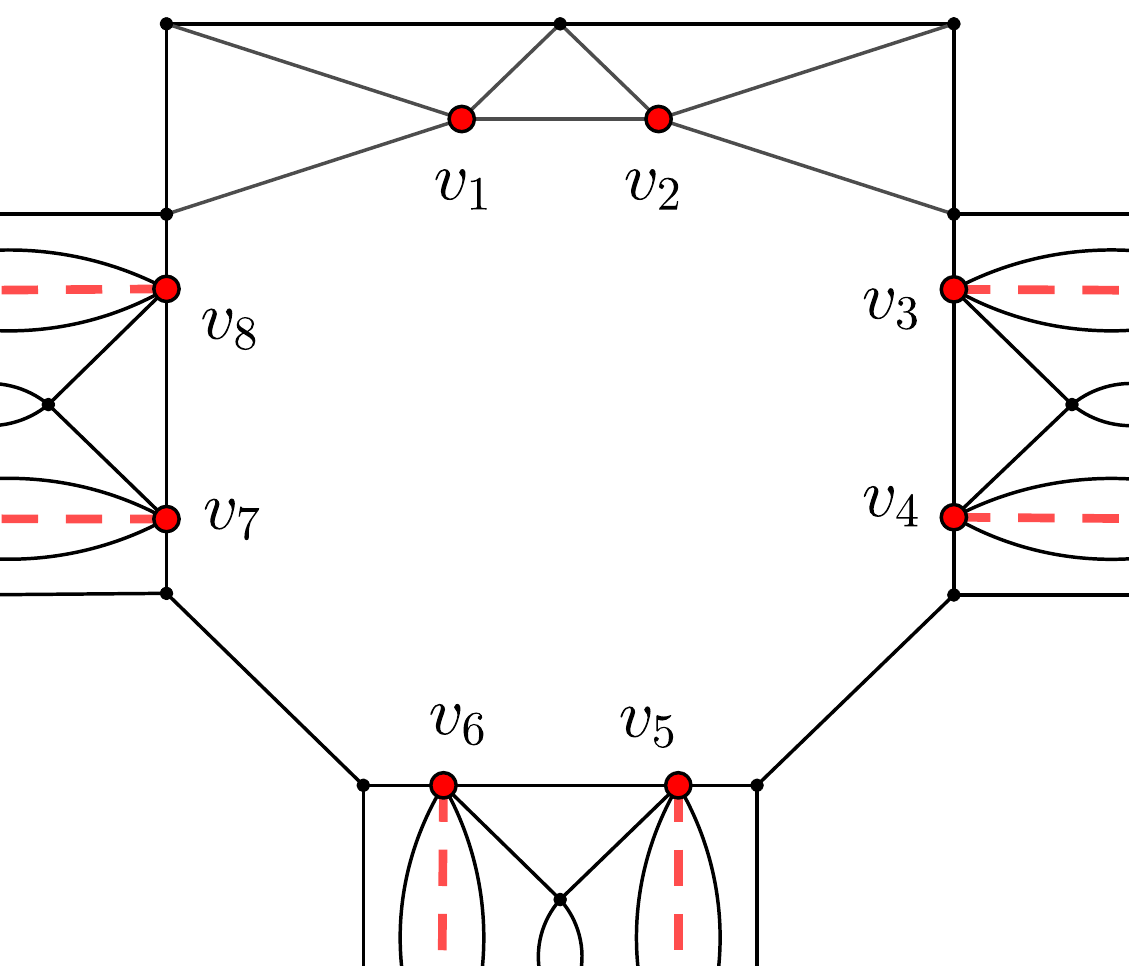}
			\caption{}
			\label{fig:clause_gadget_t}
		\end{subfigure}%
		~~~~~~~~~~~
		\begin{subfigure}[t]{.35\textwidth}
			\centering
			\includegraphics[width=.95\textwidth] {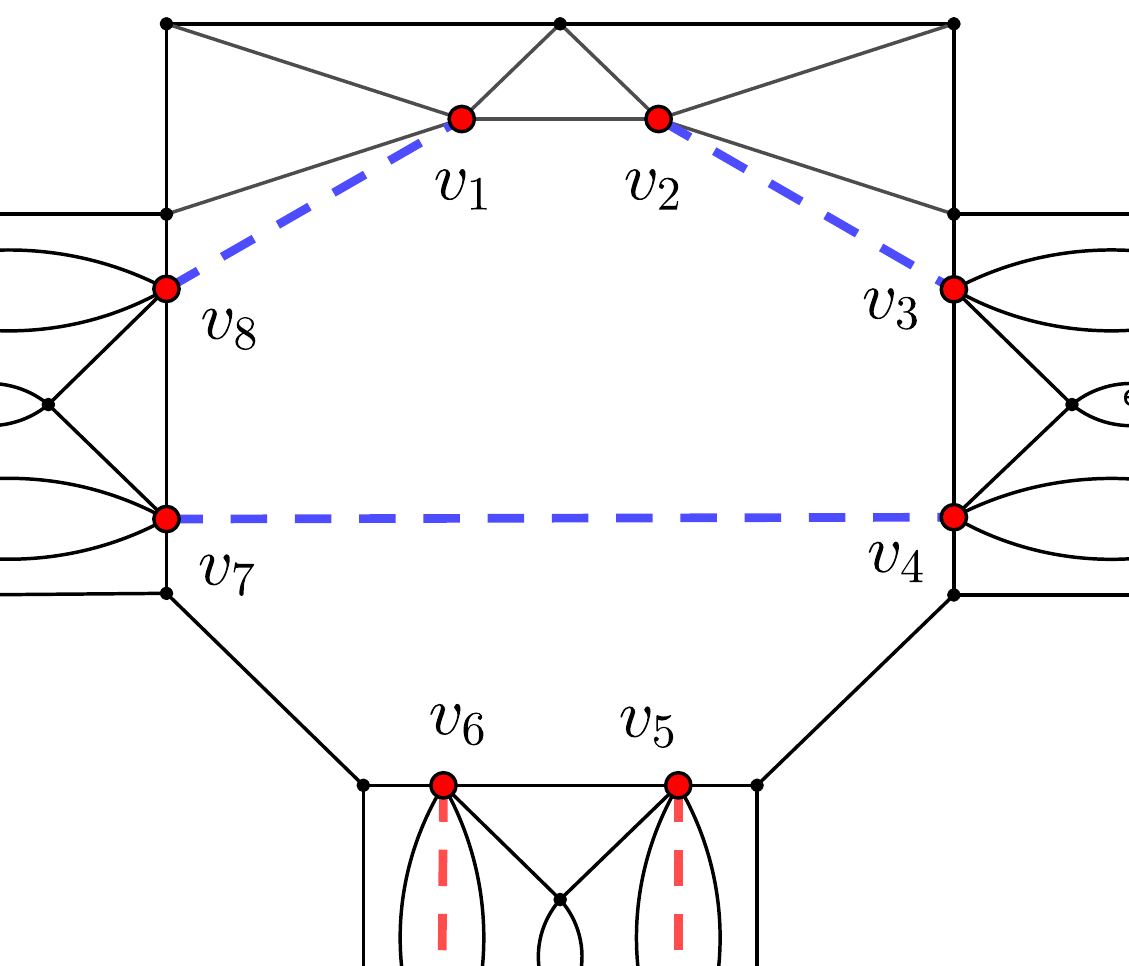}
			\caption{}
			\label{fig:clause_gadget_t2}
		\end{subfigure}%
		\caption{(a) A clause gadget receiving three negative values. (b) A clause gadget receiving at least one positive value.}
		\label{fig:gadgets_tree}
	\end{figure}

Next, we present analogous results to the previous ones, when the added graph is required to be a plane topological perfect matching $M$ between the red vertices.
Such a matching, if it exists, is the optimal way of augmenting a graph, since any graph with $2k$ red vertices cannot be augmented with less than $k$ edges.

\begin{theorem}
	Let $G=(V,E)$ be a plane topological graph and let $C_G$ be a set of parity constraints. Then, the problem of deciding if $G$ is topologically augmentable to meet $C_G$, by a plane topological perfect matching $M$ between the vertices in $R(C_G)$,
is $\mathcal{NP}$-complete.\footnote{This result was presented at the 29th Canadian Conference on Computational Geometry, Ottawa (2017)~\cite{aldana2017planarity}.}
	\label{teo:topomin}
\end{theorem}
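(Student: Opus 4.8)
The plan is a reduction from Planar 3-SAT that recycles almost all of the machinery of Theorem~\ref{teo:np_topo_planes}. Membership in $\mathcal{NP}$ is immediate: given a candidate edge set $M$, one checks in polynomial time that $M$ is a perfect matching on $R(C_G)$, that $G$ and $M$ are edge-disjoint, and that $G\cup M$ is plane. For the hardness part, observe that the variable, literal, and wire gadgets built in the proof of Theorem~\ref{teo:np_topo_planes} were already designed so that the \emph{only} plane augmentations changing the parities of their interior red vertices are plane matchings --- this was precisely the content of the ``only two possible plane topological augmentations'' observations there. Hence the sole obstruction to reusing $G_{\Phi}$ verbatim is the clause gadget, whose satisfying augmentation in Theorem~\ref{teo:np_topo_planes} was a star centered at an interior vertex rather than a matching. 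So the task reduces to replacing the decagonal clause gadget by a new clause gadget that behaves correctly under matchings.

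The new clause gadget (Figure~\ref{fig:gadgets_tree}) has to satisfy three properties. First, its red vertices cannot be matched using only chords internal to the gadget --- this is a parity/outerplanarity trap obtained as in Theorem~\ref{teo:np_topo_planes} from the fact that a biconnected outerplanar graph has at least two vertices of degree $2$, so at least one input of the clause must be ``served'' from outside through a wire gadget. Second, if at least one of the three inputs is active (the literal gadget feeding it has a positive augmentation, so its two output red vertices enter the clause through a wire gadget and must be matched to the two input red vertices), then the remaining red vertices of the gadget admit a plane perfect matching (Figure~\ref{fig:clause_gadget_t2}). Third, if no input is active, no plane perfect matching of the gadget's red vertices exists (Figure~\ref{fig:clause_gadget_t}). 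I would realize this by attaching, at each of the three inputs, a small constant-size block of red vertices whose internal matchings can ``absorb'' that input's two vertices once a true literal frees them, together with a central block of red vertices that can be completed to a plane matching precisely when at least one input block has been activated, leaving the all-negative case with an unmatchable parity deficit.

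With such a clause gadget, let $G_{\Phi}$ be the resulting plane topological graph, the exterior of every gadget triangulated, and let $R(C_{G_{\Phi}})$ be its set of red vertices. By construction every edge that can be added to $G_{\Phi}$ lies inside some gadget, and inside every gadget the only ways to change the parities of the red vertices are the plane matchings described above; in particular $G_{\Phi}$ is topologically augmentable if and only if it is augmentable by a plane perfect matching between the vertices of $R(C_{G_{\Phi}})$. The equivalence with satisfiability then follows exactly as in Theorem~\ref{teo:np_topo_planes}: a satisfying assignment selects a positive or negative augmentation for each variable gadget consistently, activates at least one input of every clause gadget, and hence yields a plane perfect matching; conversely, a plane perfect matching forces each variable gadget into a positive or a negative augmentation, i.e., a consistent truth assignment, and the internal-matching obstruction forces at least one true literal per clause.

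The hard part will be the second and third properties of the clause gadget: engineering a constant-size plane configuration of red vertices in which \emph{every} ``at least one active input'' pattern admits a non-crossing perfect matching while the ``all inactive'' pattern does not, and simultaneously checking that activating an input can never create a matching leaving some literal gadget in an inconsistent state. Once the gadget of Figure~\ref{fig:gadgets_tree} is verified case by case (a finite check), the remainder of the proof is bookkeeping inherited from Theorem~\ref{teo:np_topo_planes}.
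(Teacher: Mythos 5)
Your overall strategy --- reduce from Planar 3-SAT, keep the variable, literal and wire machinery of Theorem~\ref{teo:np_topo_planes}, and replace only the clause gadget by one that is matchable exactly when at least one literal is true --- is the same as the paper's. But there is a genuine gap at precisely the point you flag as ``the hard part'': the clause gadget is never constructed. You list the three properties it must have and assert that you ``would realize this by attaching \dots\ a small constant-size block of red vertices,'' but no configuration is exhibited and no case check is performed. Since the whole reduction hinges on this gadget, the argument is incomplete as written. The paper's actual construction is a 14-gon with eight red vertices ($v_3,v_4$, $v_5,v_6$, $v_7,v_8$ as the three input pairs, and $v_1,v_2$ changeable only by edges drawn inside the clause face), arranged so that $v_1$ and $v_2$ can be matched by single edges if and only if at least one input pair is left free to be matched inside that face.

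A second, related problem is that your account of how a true literal interacts with the clause is internally inconsistent, and the inconsistency hides a real design issue. You write that a positive augmentation forces the literal's two output vertices ``to be matched to the two input red vertices'' (consuming them), yet a sentence later the input block absorbs the input's two vertices ``once a true literal frees them.'' With the single wire gadgets of Theorem~\ref{teo:np_topo_planes}, a true literal \emph{consumes} the clause's input vertices and a false literal leaves them free --- the wrong polarity for the clause logic you then describe. The paper resolves this by replacing each literal-to-clause wire by a \emph{double} wire gadget (two wire gadgets glued, with an extra middle red vertex): a positive augmentation matches the literal's output vertex to the middle vertex, leaving the clause input free to be matched inside the clause face, while a negative augmentation forces the middle vertex onto the clause input, consuming it. Without this inverter (or a gadget explicitly engineered for the opposite polarity) the reduction you sketch does not go through. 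Finally, your claim that $G_{\Phi}$ is topologically augmentable iff it is augmentable by a perfect matching is neither needed nor true of the paper's construction: when all three inputs of a clause are negative, the parities of $v_1$ and $v_2$ can still be changed using two or more edges, just not by a matching.
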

\begin{proof}
	The proof is based again on a reduction from Planar 3-SAT. Given a 3-SAT formula $\Phi$ and a plane embedding $F$ of $F_{\Phi}$, we build in polynomial time a plane topological graph $G_{\Phi}$ and a set $C_{G_{\Phi}}$ of parity constraints, such that $G_{\Phi}$ is topologically augmentable to meet $C_{G_{\Phi}}$, by a plane topological perfect matching between the red vertices of $G_{\Phi}$, if and only if $\Phi$ is satisfiable.

	We construct $G_\Phi$ in a similar way as described in Theorem~\ref{teo:np_topo_planes}, replacing some of its gadgets as follows, so that $G_\Phi$ is topological and plane. Variable gadgets are the same as described in the proof of Theorem~\ref{teo:np_topo_planes}. A clause gadget is now a subgraph as shown in Figure \ref{fig:clause_gadget_t}, whose main part consists of a 14-gon having eight red vertices. We identify the vertices $v_3, v_4$ as the first input, $v_5, v_6$ as the second output and $v_7, v_{8}$ as the third input. The vertices in a input can change their parities by only two options, either by adding an edge to a red vertex in the wire gadget or being adjacent to a red vertex inside the face of the clause. The remaining two red vertices in the clause, $v_1$ and $v_2$, can change their parities only by the addition of edges lying on the face of the clause. Wire gadgets connecting literals gadgets and clause gadgets are now replaced by double wire gadgets. Thus, if a positive augmentation is used for a literal gadget, then the only way of changing the parities of the vertices in the output of this literal gadget is connecting them to the middle red vertices in the double wire gadgets. This implies that the vertices in the corresponding input must change their parities by adding edges in the face of the clause to which they belong. On the contrary, if a negative augmentation is used for a literal gadget, then the red vertices in the middle of the double wire gadgets must change their parities by connecting them to the vertices of the input. Finally, the exterior of the gadgets is triangulated.

	Observe that if a clause gadget receives three negative values (negative augmentations in its corresponding literal gadgets), then there is no way of changing the parities of the vertices $v_1$ and $v_2$ with less than two edges, as the clause gadget shown in Figure \ref{fig:clause_gadget_t} depicts. Otherwise, if at least one input receives a positive value (a positive augmentation in its corresponding literal gadget), then the remaining red vertices in the clause can meet their parities by doing a plane matching between them. An example is shown in Figure \ref{fig:clause_gadget_t2}.
	
The new graph $G_{\Phi}$ obtained after this replacement is clearly plane. In addition, by the previous discussion, one can easily verify that $G_{\Phi}$ is topologically augmentable by a plane topological perfect matching between the red vertices if and only if $\Phi$ is satisfiable, hence the theorem follows.
\end{proof}

\begin{theorem}
	Let $G=(V,E)$ be a plane topological graph. Then, the problem of deciding if there exists a plane topological perfect matching $M$
such that $G$ and $M$ are compatible and edge-disjoint is $\mathcal{NP}$-complete.
	\label{teo:topominall}
\end{theorem}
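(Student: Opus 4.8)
The plan is to combine two ingredients already developed in this section: the $\mathcal{NP}$-hardness of the matching-augmentation problem for an arbitrary red set (Theorem~\ref{teo:topomin}), and the blue-to-red transformation used to force $R(C_G)=V$ (Theorem~\ref{teo:np_topo_planes_v}). The starting observation is that a plane topological perfect matching $M$ on $V$ that is compatible and edge-disjoint with $G$ increases the degree of every vertex by exactly one; hence such an $M$ exists if and only if $G$ is topologically augmentable to meet a set $C_G$ of parity constraints with $R(C_G)=V$, by a plane topological perfect matching between the vertices of $R(C_G)$. Membership in $\mathcal{NP}$ is immediate, since given a candidate $M$ one checks in polynomial time that $M$ is a perfect matching on $V$, that $G$ and $M$ are edge-disjoint, and that $G\cup M$ is plane.

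For the hardness part I would reduce from Planar 3-SAT. Given a 3-SAT formula $\Phi$ and a plane embedding $F$ of $F_\Phi$, first build the plane topological graph $G_\Phi$ exactly as in the proof of Theorem~\ref{teo:topomin}; recall that $G_\Phi$ admits a plane topological perfect matching between its red vertices, compatible and edge-disjoint with $G_\Phi$, if and only if $\Phi$ is satisfiable, and that, because the exterior of every gadget is triangulated, every blue vertex of $G_\Phi$ lies on at least one triangular face. Then apply the blue-to-red construction of Figure~\ref{fig:red}, exactly as in the proof of Theorem~\ref{teo:np_topo_planes_v}: assign each blue vertex to an incident triangular face, and insert into that face (or into the appropriate sub-triangles, when a face receives more than one blue vertex) the three new red vertices $v'_1,v'_2,v'_3$ with the indicated edges. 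Let $G'_\Phi$ denote the resulting plane topological graph; every vertex added in this step is red, so $R(C_{G'_\Phi})=V(G'_\Phi)$, and since the number of red vertices of $G_\Phi$ is even and each blue vertex contributes exactly three new red vertices, $|V(G'_\Phi)|$ is even, so the existence of a perfect matching on $V(G'_\Phi)$ is not excluded for trivial parity reasons.

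It then remains to prove that $G'_\Phi$ admits a compatible, edge-disjoint plane topological perfect matching on all of its vertices if and only if $G_\Phi$ admits a compatible, edge-disjoint plane topological perfect matching between its red vertices (and hence if and only if $\Phi$ is satisfiable). For one direction, from such a matching $M_0$ of $G_\Phi$ I would add to $M_0$, for each inserted gadget, the two edges $vv'_1$ and $v'_2v'_3$ (where $v$ is the blue vertex of that gadget), obtaining a plane perfect matching on $V(G'_\Phi)$ compatible with $G'_\Phi$. For the converse, I would take a compatible plane perfect matching $M'$ of $G'_\Phi$ and use the gadget analysis to conclude that inside each inserted gadget the pair $v'_2,v'_3$ is matched to itself and $v'_1$ is matched to the blue vertex $v$; in particular no edge of $M'$ joins a newly added vertex to an original vertex of $G_\Phi$, so restricting $M'$ to the pairs of (original) red vertices of $G_\Phi$ gives a plane perfect matching between them which, being a subgraph of the plane graph $G'_\Phi\cup M'$, is compatible and edge-disjoint with $G_\Phi$.

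The hard part will be the local-forcing claim just invoked: I must verify that, in any completion of $G'_\Phi$ to a plane graph by a perfect matching, the vertices of each inserted blue-to-red gadget can only be matched in the intended way, and in particular cannot be matched across a gadget boundary or to a vertex of the original graph. This is in spirit the gadget argument already used in Theorem~\ref{teo:np_topo_planes_v} (where $v'_2,v'_3$ are forced to join each other and the blue vertex is forced to join $v'_1$), but it has to be revisited under the stronger requirement that the whole augmentation be a perfect matching rather than an arbitrary parity-correcting plane graph; concretely, one must rule out configurations in which a new vertex is left unmatched inside the gadget while the blue vertex is matched outside it, which the enclosing triangular structure prevents.
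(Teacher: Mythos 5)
Your proposal is correct and follows essentially the same route as the paper: start from the graph $G_\Phi$ of Theorem~\ref{teo:topomin} and convert every blue vertex to red using the triangular-face gadget of Figure~\ref{fig:red} from Theorem~\ref{teo:np_topo_planes_v}, then argue that perfect matchings of the new graph restrict to perfect matchings between the red vertices of $G_\Phi$. The paper states this in two sentences and leaves the local-forcing verification implicit, whereas you correctly flag it as the step that must be rechecked under the perfect-matching requirement; that is a point in your favour, not a divergence.
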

\begin{proof}
In this case, the set of red vertices of $G=(V,E)$ is precisely $V$, and we look for a plane topological perfect matching $M$ that is compatible and disjoint. From the graph $G_{\Phi}$ built in the previous theorem, observe that we only need to show how to change the color of every blue vertex in $G_{\Phi}$ to build a new graph $G'_{\Phi}$, such that $G_{\Phi}$ is topologically augmentable by a plane topological perfect matching between the its red vertices, if and only if, $G'_{\Phi}$ is topologically augmentable by a plane topological perfect matching between its vertices. As in the proof of Theorem~\ref{teo:np_topo_planes_v}, this can be done  using the gadget shown in Figure~\ref{fig:red} to change the color of every blue vertex in $G_{\Phi}$.
\end{proof}

\begin{theorem}
	Let $G=(V,E)$ be a plane topological graph. Then, the problem of deciding if $G$ can be augmented to a Eulerian plane topological simple graph, by the addition of a plane topological perfect matching $M$ between the vertices with odd degree in $G$, is $\mathcal{NP}$-complete.
	\label{teo:topominodd}
\end{theorem}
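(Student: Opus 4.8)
The plan is to combine the reduction of Theorem~\ref{teo:topomin} with the parity-fixing transformation used in Theorem~\ref{teo:np_topo_euler}. The problem is clearly in $\mathcal{NP}$: given a set $M$ of edges one checks in polynomial time that $G\cup M$ is a plane topological simple graph, that $M$ is a perfect matching on the set of odd-degree vertices of $G$, and that $G\cup M$ is Eulerian. For hardness I would reduce from Planar 3-SAT. Given a formula $\Phi$ and a plane embedding $F$ of $F_\Phi$, let $G_\Phi=(V,E)$ be the plane topological graph built in the proof of Theorem~\ref{teo:topomin}, whose red vertices $R(G_\Phi)$ have the property that $G_\Phi$ is topologically augmentable by a plane topological perfect matching between the vertices of $R(G_\Phi)$ if and only if $\Phi$ is satisfiable. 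As in that construction, the exterior of every gadget is triangulated, so every edge of $G_\Phi$ is incident to at least one triangular face, and $G_\Phi$ is connected.

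I would then turn $G_\Phi$ into a plane topological simple graph $G'_\Phi$ whose odd-degree vertices are exactly the vertices of $R(G_\Phi)$, following the argument of Theorem~\ref{teo:np_topo_euler} essentially verbatim. Let $T$ be the set of vertices of $G_\Phi$ that are red with even degree or blue with odd degree; since $|R(G_\Phi)|$ and the number of odd-degree vertices of $G_\Phi$ are both even, $|T|$ is even, so the connected graph $G_\Phi$ admits a minimum $T$-join $G_T$ consisting of $|T|/2$ edge-disjoint paths joining the vertices of $T$, computable in polynomial time. Adding $G_T$ to $G_\Phi$ produces a plane multigraph $G'$ whose odd-degree vertices are precisely the vertices of $R(G_\Phi)$. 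By the minimality of $G_T$ each triangular face of $G_\Phi$ shares at most one edge with $G_T$ (two common edges of a triangular face could be shortcut through the third edge, contradicting minimality), so each duplicated edge of $G'$ can be embedded in a distinct incident triangular face. Replacing each duplicated edge inside its triangular face by the gadget of Figure~\ref{fig:Figura2} --- five new blue vertices of even degree and $14$ edges, leaving the three vertices of the face with their former parities --- yields a plane topological simple graph $G'_\Phi$ whose odd-degree vertices are exactly $R(G_\Phi)$.

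It then remains to check that $G'_\Phi$ can be augmented to an Eulerian plane topological simple graph by a plane topological perfect matching between its odd-degree vertices if and only if $\Phi$ is satisfiable. The vertices inserted inside triangular faces of $G_\Phi$ all have even degree, hence never belong to such a matching; moreover each Figure~\ref{fig:Figura2} gadget sits inside a triangular face that was created only to triangulate the exterior of a gadget of $G_\Phi$ and so is never needed to route a matching edge in the Theorem~\ref{teo:topomin} construction. Consequently the plane topological perfect matchings between the odd-degree vertices of $G'_\Phi$ correspond exactly to the plane topological perfect matchings between the red vertices of $G_\Phi$. Since making $G'_\Phi$ Eulerian means precisely changing the parities of its odd-degree vertices, $G'_\Phi$ admits the required augmentation if and only if $G_\Phi$ is topologically augmentable by a plane topological perfect matching between the vertices of $R(G_\Phi)$, i.e.\ if and only if $\Phi$ is satisfiable.

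The step I expect to be the main obstacle is precisely this last equivalence: one must verify carefully that inserting the $T$-join gadgets into triangular faces of the (more elaborate) $14$-gon clause gadgets and double-wire gadgets of Theorem~\ref{teo:topomin} neither creates new plane matchings among the odd-degree vertices nor destroys any of the matching configurations on which the correctness of that reduction rests --- in particular, that every edge of that $G_\Phi$ is indeed incident to a triangular face and that all triangular faces receiving gadgets are inert for the matching. Everything else is a routine transcription of the $T$-join trick already carried out in Theorem~\ref{teo:np_topo_euler}.
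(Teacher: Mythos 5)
Your proposal follows essentially the same route as the paper: the authors likewise take the graph $G_\Phi$ from Theorem~\ref{teo:topomin}, add a minimum $T$-join for $T$ the set of red even-degree and blue odd-degree vertices, embed each duplicated edge in a distinct triangular face via the gadget of Figure~\ref{fig:Figura2}, and conclude the equivalence with satisfiability. The final equivalence you flag as the main obstacle is exactly the step the paper dispatches as ``straightforward to see,'' so your more cautious phrasing is, if anything, a more honest account of the same argument.
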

\begin{proof}
In this case, the set of vertices with odd degree is the set of red vertices, and we look for a plane topological perfect matching $M$ between the vertices with odd degree. To prove the $\mathcal{NP}$-completeness of this problem, we proceed as in the proof of Theorem~\ref{teo:np_topo_euler}. From the graph $G_{\Phi}$ built in the proof of Theorem~\ref{teo:topomin}, we add a minimum $T$-join to $G_{\Phi}$, where $T$ is the set of red vertices with even degree and the set of blue vertices with odd degree, and we replace every duplicated edge by the gadget shown in Figure~\ref{fig:Figura2}. In the resulting graph $G'_{\Phi}$, every vertex with odd degree is red and every vertex with even degree is blue. It is straightforward to see that $G_{\Phi}$ admits a plane topological perfect matching between its red vertices, if and only, if $G'_{\Phi}$ admits a plane topological perfect matching between its vertices with odd degree, so the theorem holds.
\end{proof}

To finish this section, we point out that the previous results can be extended to planar graphs, using the Whitney's theorem on 3-connected planar graphs.

\begin{theorem}
Let $G=(V,E)$ be a planar graph and let $C_G$ be a set of parity constraints. Then, the problem of deciding if there exists a planar graph $H$ on the same vertex set $V$, such that $G$ and $H$ are edge-disjoint, $G'=G\cup H$ is planar and $G'$ meets $C_G$, is $\mathcal{NP}$-complete. The problem is also $\mathcal{NP}$-complete when $H$ must be a perfect matching between the vertices in $R(C_G)$. In addition, the problem remains $\mathcal{NP}$-complete in both variants, even when $R(C_G) = V$ or $R(C_G)$ is the set of vertices with odd degree in $G$.
\label{teo:np_topo_planes_2}
\end{theorem}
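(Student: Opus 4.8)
The problem is in $\mathcal{NP}$: given a candidate $H$, one checks in polynomial time that $H$ is edge-disjoint from $G$, that $G\cup H$ is planar (linear-time planarity testing), that every parity constraint of $C_G$ is satisfied, and, in the matching variant, that $H$ is a perfect matching on $R(C_G)$. For the hardness part, the plan is to reuse the graphs built in Theorems~\ref{teo:np_topo_planes}, \ref{teo:np_topo_planes_v}, \ref{teo:np_topo_euler}, \ref{teo:topomin}, \ref{teo:topominall} and \ref{teo:topominodd}, but to first make them $3$-connected so that, by Whitney's theorem, their planar embedding is essentially unique; then the abstract planar augmentation problem on such a graph coincides with the plane topological augmentation problem on its embedding, which has already been shown to be $\mathcal{NP}$-complete in each of the required variants.

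Concretely, I would proceed as follows. Let $G_\Phi$ be the graph associated with a $3$-SAT formula $\Phi$ in the relevant one of the theorems above. Its outer face is triangulated and all of its bounded faces are triangles, except for the constant-size faces inside wire gadgets, double wire gadgets and clause gadgets --- which are exactly the faces where augmenting edges may be placed. First I would add to $G_\Phi$ a constant number of vertices and edges inside each of these non-triangular faces, chosen so that: (i) the resulting plane topological graph $\hat G_\Phi$ is $3$-connected; and (ii) the set of feasible plane augmentations of the red vertices lying inside each such face is unchanged (each new vertex receives the parity constraint ``even''). By (ii), $\hat G_\Phi$ is topologically augmentable to meet $C_{\hat G_\Phi}$ if and only if $G_\Phi$ is topologically augmentable to meet $C_{G_\Phi}$, i.e. if and only if $\Phi$ is satisfiable; the same equivalence holds in the matching variant and in the variants where $R$ is the whole vertex set or the set of odd-degree vertices. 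Next, viewing $\hat G_\Phi$ as an abstract planar graph, I would prove the key equivalence: $\hat G_\Phi$ is planarly augmentable to meet $C_{\hat G_\Phi}$ if and only if it is topologically augmentable to meet $C_{\hat G_\Phi}$ in its fixed embedding. One direction is trivial, since a plane topological augmentation is in particular an abstract planar graph $H$ with $\hat G_\Phi\cup H$ planar. For the converse, given an abstract planar $H$, edge-disjoint from $\hat G_\Phi$, with $\hat G_\Phi\cup H$ planar and meeting $C_{\hat G_\Phi}$, I would fix a planar embedding of $\hat G_\Phi\cup H$; its restriction to $\hat G_\Phi$ is a planar embedding of a $3$-connected planar graph, hence by Whitney's theorem coincides --- up to reflection and the choice of the outer face --- with the fixed one, so the faces of $\hat G_\Phi$ are the same set of vertex cycles. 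Therefore every edge of $H$ lies inside a single face of $\hat G_\Phi$, and within each face the edges of $H$ are pairwise non-crossing; that is, $H$ is a plane topological augmentation of the embedded $\hat G_\Phi$. Combining the two equivalences, planar augmentability of $\hat G_\Phi$ is equivalent to satisfiability of $\Phi$ in every variant, which yields the theorem.

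The main obstacle I anticipate is steps (i)--(ii) above: one genuinely cannot triangulate away the faces that must receive augmenting edges, so $3$-connectivity must be achieved while keeping those faces non-triangular and without creating any new way of changing parities inside them. This requires inspecting each gadget, but it is routine --- for instance, the only augmenting edge ever useful inside a wire-gadget quadrilateral is the one joining its two red vertices, so the surrounding structure can be reinforced freely, and an analogous observation handles the clause faces and the outer face (which may simply be triangulated). Once $\hat G_\Phi$ is $3$-connected, Whitney's theorem does the rest and nothing else in the argument is nontrivial.
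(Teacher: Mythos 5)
Your proposal is correct and follows essentially the same route as the paper: both reduce the abstract planar augmentation problem to the already-established plane topological case by invoking Whitney's theorem on $3$-connected planar graphs, so that any planar embedding of $G\cup H$ restricts to the (essentially unique) embedding of the constructed graph and every edge of $H$ must lie inside one of its faces. The only difference is that the paper simply observes that the graphs $G_\Phi$ and $G'_\Phi$ from Theorems~\ref{teo:np_topo_planes}--\ref{teo:topominodd} are already $3$-connected, so your preprocessing steps (i)--(ii) --- reinforcing the gadgets to achieve $3$-connectivity without altering the feasible augmentations --- are not needed.
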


\begin{proof}
It can be easily checked that the graph $G_\Phi$ built in the proof of Theorem~\ref{teo:np_topo_planes} is 3-connected. By Whitney's theorem, a 3-connected planar graph has a unique embedding (up to the choice of the unbounded face). Hence, we can consider $G_\Phi$ as a planar graph that is augmentable if and only if its unique plane embedding is topologically augmentable. Therefore, the first part of the theorem holds. In addition, the different graphs $G_\Phi$ or $G'_{\Phi}$ built in the proofs of Theorems~\ref{teo:np_topo_planes_v},~\ref{teo:np_topo_euler},~\ref{teo:topomin},~\ref{teo:topominall} and~\ref{teo:topominodd} are also 3-connected. Thus, the rest of the theorem also holds.
\end{proof}

\section{Plane geometric augmentation problems}\label{sec:geometric}
	
In this section we address the plane geometric augmentation problem to meet a set of parity constraints, that is, given a plane geometric graph $G=(V,E)$ and a set $C_G$ of parity constraints, we look for a plane geometric graph $H$ such that $G$ and $H$ are compatible and edge-disjoint, and all parity constraints are satisfied in $G\cup H$. We extend the $\mathcal{NP}$-completeness results obtained for plane topological graphs to plane geometric graphs. Moreover, some of these variants remain $\mathcal{NP}$-complete even on more restrictive geometric graph families, like trees or paths.

The following theorem is a consequence of Tutte's theorem on convex faces of 3-connected planar graphs~\cite{Tutte1963}.

	\begin{figure}[ht!]
\centering
	\begin{subfigure}[t]{.4\textwidth}
		\centering
		\includegraphics[width=0.8\textwidth] {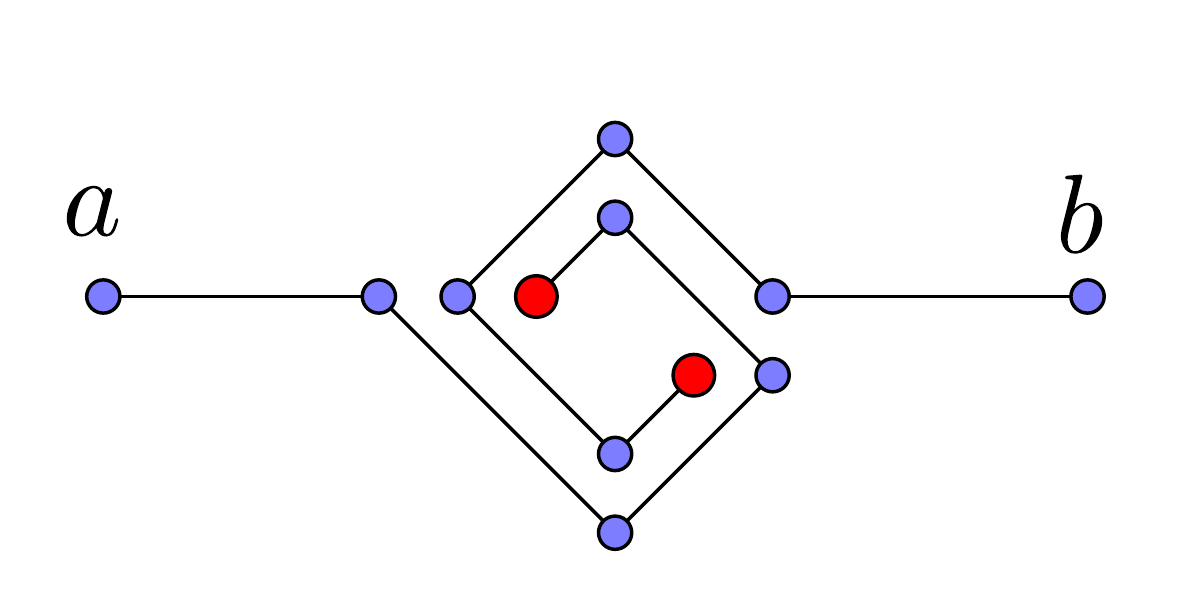}
		\caption{}
		\label{fig:gadget_spiral}
	\end{subfigure}%
	~~~~~~~~
	\begin{subfigure}[t]{.4\textwidth}
		\centering
		\includegraphics[width=.8\textwidth] {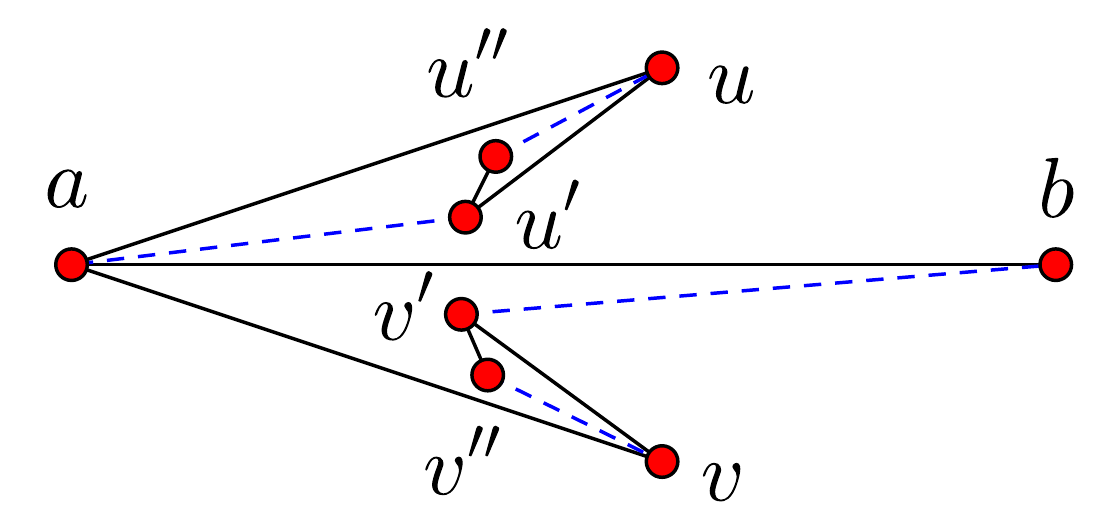}
		\caption{}
		\label{fig:gadget_arrow}
	\end{subfigure}%
	\caption{(a) Spiral gadget. (b) Arrow gadget.}
	\end{figure}

\begin{theorem}
	Let $G=(V,E)$ be a plane geometric graph and let $C_G$ be a set of parity constraints. Then, the problem of deciding if $G$ is geometrically augmentable to meet $C_G$
is $\mathcal{NP}$-complete. The problem is also $\mathcal{NP}$-complete when the augmenting graph must be a plane geometric perfect matching between the vertices in $R(C_G)$. In addition, the problem remains $\mathcal{NP}$-complete in both variants, even when $R(C_G) = V$ or $R(C_G)$ is the set of vertices with odd degree in $G$.
	\label{the:np_geom}
\end{theorem}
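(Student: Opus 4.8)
The plan is to reduce from the topological hardness results already proved in Section~\ref{sec:plane}, by realizing each of the graphs $G_\Phi$ (or $G'_\Phi$) used there as a specific plane \emph{geometric} graph in which the geometric augmentation problem is equivalent to the topological one. Membership in $\mathcal{NP}$ is immediate in every variant: given a candidate set $E'$ of straight-line edges (respectively a candidate perfect matching on the red vertices), one checks in polynomial time that $(V,E\cup E')$ is crossing-free and that every vertex meets its parity constraint.

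For the hardness part, recall from the proof of Theorem~\ref{teo:np_topo_planes_2} that each of the graphs $G_\Phi$, $G'_\Phi$ built in Theorems~\ref{teo:np_topo_planes}, \ref{teo:np_topo_planes_v}, \ref{teo:np_topo_euler}, \ref{teo:topomin}, \ref{teo:topominall} and~\ref{teo:topominodd} is $3$-connected and planar, and that in each case the exterior of the gadgets is triangulated, so the unbounded face is a triangle. By Tutte's theorem on convex representations of $3$-connected planar graphs~\cite{Tutte1963}, such a graph admits a planar straight-line drawing $\Gamma$ in which the prescribed (triangular, hence convex) outer face is drawn as a convex polygon and \emph{every} bounded face is a convex polygon as well; moreover $\Gamma$ is computable in polynomial time. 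Thus $\Gamma$, together with the parity constraints $C$ inherited from the corresponding topological reduction, is a polynomial-size instance of the geometric augmentation problem.

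The key step is the claim that $\Gamma$ is geometrically augmentable to meet $C$ if and only if the graph $G_\Phi$ (equivalently, its unique combinatorial embedding, which is the one realized by $\Gamma$) is topologically augmentable to meet $C$. One direction is trivial, as a plane geometric augmentation is in particular a plane topological one. For the converse, let $H$ be a plane topological augmentation of $G_\Phi$. Since $G_\Phi$ is $3$-connected, two distinct faces of $G_\Phi$ share at most an edge, so any two non-adjacent vertices lie on at most one common face; consequently every edge $uv$ of $H$ is forced to be drawn inside one uniquely determined face $F_{uv}$ of $G_\Phi$. Because $F_{uv}$ is convex in $\Gamma$, the straight segment $\overline{uv}$ lies inside $F_{uv}$, and replacing every edge of $H$ by the corresponding straight segment creates no crossing: segments lying in different faces are separated, while inside a single convex face a family of pairwise non-crossing chords stays non-crossing after being straightened. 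Hence $H$ is realizable as a plane geometric augmentation of $\Gamma$, proving the claim. Applying this equivalence to each of the six reductions of Section~\ref{sec:plane} (arbitrary $R$, $R=V$, and $R$ equal to the set of odd-degree vertices, for both the unrestricted augmentation and the perfect-matching augmentation) yields that in every case $\Gamma$ is geometrically augmentable if and only if $\Phi$ is satisfiable, and the theorem follows.

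The main obstacle I expect is making the ``unique face'' step fully rigorous, i.e.\ justifying that in a $3$-connected planar graph two distinct faces meet in at most one edge, so that each added edge has no placement freedom beyond the single convex face containing it; once this is established, the remaining work is routine bookkeeping over the six variants. A secondary point is that the vertex-insertion gadgets used to build $G'_\Phi$ in Theorems~\ref{teo:np_topo_planes_v}--\ref{teo:topominodd} preserve $3$-connectivity, but this was already recorded in the proof of Theorem~\ref{teo:np_topo_planes_2} and can simply be cited.
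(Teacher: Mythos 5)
Your proposal is correct and follows essentially the same route as the paper: both apply Tutte's convex-face embedding theorem to the $3$-connected graphs $G_\Phi$ and $G'_\Phi$ from Section~\ref{sec:plane} and conclude that, with all faces convex, geometric augmentability coincides with topological augmentability. You simply spell out in more detail the straightening argument (each added edge lives in a single convex face, so non-crossing chords stay non-crossing) that the paper leaves implicit.
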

\begin{proof}
	By Tutte's theorem on convex faces of triconnected planar graphs, all graphs $G_{\Phi}$ and $G'_{\Phi}$ built in Theorems~\ref{teo:np_topo_planes}, \ref{teo:np_topo_planes_v},  \ref{teo:np_topo_euler}, \ref{teo:topomin}, \ref{teo:topominall} and \ref{teo:topominodd}, have an embedding where all edges are straight-line segments and all faces are convex. Then, they are geometrically augmentable to meet parity constraints if and only if $\Phi$ is satisfiable.
\end{proof}

In particular, from this theorem, deciding if a plane geometric graph can be geometrically augmented to Eulerian is $\mathcal{NP}$-complete, even when the added graph is required to be a plane geometric perfect matching between the vertices with odd degree.

Some previous variants remain $\mathcal{NP}$-complete for plane geometric trees and paths, as we show in the following theorems.

\begin{theorem}
	Let $T=(V,E)$ be a plane geometric tree and let $C_T$ be a set of parity constraints. Then, the problem of deciding if $T$ is geometrically augmentable by a plane geometric matching $M$ between the vertices in $R(G_T)$
is $\mathcal{NP}$-complete.
	\label{teo:np_geo_trees}
\end{theorem}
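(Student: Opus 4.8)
The plan is to reduce from Planar 3-SAT, as in Theorems~\ref{teo:np_topo_planes} and~\ref{teo:topomin}, but with a construction in which \emph{every} gadget is itself a plane geometric tree, so that the only freedom left to an augmentation is the choice of a non-crossing straight-line matching on the red points that avoids all tree edges. Membership in $\mathcal{NP}$ is immediate: given a candidate set of segments $M$, one checks in polynomial time that $M$ is a matching whose vertex set is $R(C_T)$ (and is in fact perfect on $R(C_T)$, as it must be to meet $C_T$), that no two segments of $M$ cross, and that no segment of $M$ crosses an edge of $T$. So the work is the hardness reduction: from a formula $\Phi$ with a plane embedding $F$ of $F_\Phi$, I would build in polynomial time a plane geometric tree $T_\Phi$ and constraints $C_{T_\Phi}$ with $R(C_{T_\Phi})$ equal to the red point set, such that $T_\Phi$ is geometrically augmentable by a plane geometric matching on its red vertices if and only if $\Phi$ is satisfiable.

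The first observation driving the gadgets is that in any augmentation by a matching each red vertex receives exactly one new edge while each blue (already satisfied) vertex receives none. Hence blue vertices can only act as geometric \emph{obstacles}: their incident tree edges, being straight segments, block certain red--red visibilities. This obstruction mechanism replaces the enclosing cycles and the exterior triangulation used in the earlier proofs, which are not available inside a tree. Concretely, the spiral gadget (Figure~\ref{fig:gadget_spiral}) is a path of blue tree edges that winds tightly around a small bundle of red vertices, so that the only non-crossing, $T$-avoiding matchings among them pair consecutive red vertices along the spiral; shifting the pairing by one exposes exactly the two red vertices at the two ends of the spiral, which serve as the input and output ports of a wire. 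The arrow gadget (Figure~\ref{fig:gadget_arrow}) breaks the left--right symmetry, forcing the free pair to appear at a prescribed end, so that a bit can be propagated in a fixed direction.

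From these primitives I would assemble: a \emph{wire} as a chain of spiral and arrow gadgets whose ports are aligned so that the two coherent matchings of the chain carry the two truth values; a \emph{variable gadget} for $x$ as a cyclic arrangement of wires around the position of $x$ in $F$, admitting exactly two global matchings (``true'' and ``false''), with a branch at each occurrence of a literal in the order $L_x$; and a \emph{clause gadget} as a cluster of red vertices, two of them private, placed so that the private pair can be completed into a non-crossing matching precisely when at least one of the three incoming wires delivers the ``true'' pattern --- mirroring the role of the decagon and the $14$-gon in Theorems~\ref{teo:np_topo_planes} and~\ref{teo:topomin}. All gadgets are then linked into a single connected acyclic graph by routing a thin backbone of tree edges along a spanning tree of the embedding $F$, placed close enough to the gadgets that it introduces no new red--red visibility; the total number of red vertices is even by construction. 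Correctness follows as before: a satisfying assignment yields, wire by wire, a plane matching that also completes every clause gadget; conversely any plane $T$-avoiding matching of $R(C_{T_\Phi})$ is forced into one of the two states inside each variable gadget, hence defines a truth assignment, and a clause gadget can be completed only when some incoming wire is ``true''.

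The main difficulty will be the geometry of the spiral and arrow gadgets. One must choose coordinates for the red points and the straight blue edges so that the set of non-crossing, tree-avoiding matchings of each gadget is \emph{exactly} the intended pair (no extra ``diagonal'' matching sneaks through, and no intended one is blocked), and so that these sets compose correctly with those of the neighbouring gadgets; and then the whole planar layout must be realised with straight segments on a grid of polynomial size, using only ``thin'' regions so that the backbone and the triangulation of the outer region do not create unwanted visibilities. This visibility-control case analysis, which the figures are meant to make convincing, is where essentially all of the effort lies; the combinatorial skeleton of the reduction is exactly the one already used for the topological matching problem.
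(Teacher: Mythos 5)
Your overall strategy differs from the paper's, and in its present form it has a genuine gap. The paper does \emph{not} rebuild the Planar 3-SAT reduction from scratch with tree-shaped gadgets. Instead it starts from the graph $G_\Phi$ of Theorem~\ref{teo:topomin}, already realized as a plane geometric graph with convex faces via Tutte's theorem (Theorem~\ref{the:np_geom}), and converts that hard instance into a tree by a single local operation: each cycle is broken by replacing one of its edges $(a,b)$ with the spiral gadget of~\cite{rutter2008augmenting} (Figure~\ref{fig:gadget_spiral}), which contains exactly two red vertices that can see only each other and hence are forced to be matched together in any augmentation by a perfect matching on the red vertices. Augmentability is therefore preserved edge-for-edge, and no new clause, variable, or wire gadgets need to be designed or verified. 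Note also that you have misassigned the roles of the two figures: the spiral gadget is a cycle-breaker carrying no truth value, not a wire port, and the arrow gadget is not used in this theorem at all (it appears only in Theorem~\ref{teo:np_geo_treesall}, to recolor blue vertices).

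The gap in your plan is precisely the part you defer to ``the main difficulty.'' In a hardness reduction the gadget geometry \emph{is} the proof, and the piece you would most need is the one you leave unspecified: a clause gadget that is a tree. In Theorems~\ref{teo:np_topo_planes} and~\ref{teo:topomin} the clause forcing relies on red vertices being enclosed in a bounded face (the decagon, the $14$-gon), so that an outerplanarity/degree-two argument limits how many of them can be satisfied internally; a tree has no bounded faces, so this mechanism is unavailable, and ``a cluster of red vertices, two of them private, placed so that the private pair can be completed precisely when some incoming wire is true'' is exactly the assertion that must be proved with explicit coordinates and a visibility analysis. The same applies to your claim that the two coherent matchings of a spiral-and-arrow chain are the \emph{only} non-crossing $T$-avoiding matchings. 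Your plan is not unreasonable, but as written it replaces one unproved $\mathcal{NP}$-hardness statement with several unproved gadget-exactness statements, whereas the paper's route reduces the entire burden to the single, already-established property of the spiral gadget.
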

\begin{proof}

	We explain how to transform the graph $G_{\Phi}$ built in Theorem~\ref{teo:topomin} into a tree $T_{\Phi}$. As shown in \cite{rutter2008augmenting}, to break a cycle one can replace any edge $(a,b)$ of a cycle by the spiral gadget shown in Figure~\ref{fig:gadget_spiral}. Repeating this operation, we can break all cycles in $G_{\Phi}$, to obtain a plane geometric tree $T_{\Phi}$. Notice that the two red vertices in each spiral gadget are forced to join each other in any minimum augmentation. Therefore, $G_{\Phi}$ is geometrically augmentable with a plane geometric perfect matching between its red vertices, if and only if, $T_{\Phi}$ is geometrically augmentable with a plane geometric perfect matching between its red vertices, so the theorem follows.
\end{proof}

\begin{theorem}
	Let $T=(V,E)$ be a plane geometric tree. Then, the problem of deciding if there exists a plane geometric perfect matching $M$, such that $T$ and $M$ are compatible and edge-disjoint, is $\mathcal{NP}$-complete.
	\label{teo:np_geo_treesall}
\end{theorem}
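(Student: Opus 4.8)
The plan is to fuse the two transformations developed earlier in this section: the cycle-breaking argument of Theorem~\ref{teo:np_geo_trees}, which turns a plane geometric graph into a plane geometric tree while preserving augmentability, and the colour-flipping idea of Theorem~\ref{teo:topominall}, which turns every vertex into a red vertex. Membership in $\mathcal{NP}$ is immediate, since verifying that a given set of segments is a plane geometric perfect matching that is compatible and edge-disjoint with $T$ takes polynomial time. For hardness we reduce from Planar 3-SAT.

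First I would take the graph $G'_{\Phi}$ built in the proof of Theorem~\ref{teo:topominall}: it is a $3$-connected plane graph (see the proof of Theorem~\ref{teo:np_topo_planes_2}) with $R(G'_{\Phi}) = V$, and it admits a compatible, edge-disjoint plane topological perfect matching of its vertex set if and only if $\Phi$ is satisfiable. As in the proof of Theorem~\ref{the:np_geom}, Tutte's theorem provides a straight-line embedding of $G'_{\Phi}$ in which every bounded face is convex, so that any added edge must lie inside a face. Note that the colour-flipping gadget of Figure~\ref{fig:red} needs a triangular face and hence cannot be reused once we have passed to a tree; this is why we flip colours first, on $G'_{\Phi}$, and only afterwards destroy the cycles.

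Next I would break every cycle of $G'_{\Phi}$ as in Theorem~\ref{teo:np_geo_trees}: repeatedly pick an edge $(a,b)$ lying on a cycle and replace it by a gadget drawn inside an arbitrarily thin neighbourhood of the segment $\overline{ab}$, contained in the two faces incident to $(a,b)$. The only modification is to use a \emph{red spiral gadget}: the part attached to $a$ (and, symmetrically, to $b$) is a pendant path on an \emph{even} number of new vertices, all red, drawn as a tight spiral so that each new vertex is straight-line visible, in the complement of the tree edges, only from its two neighbours on the path. A short case analysis then shows that in any plane perfect matching of the resulting tree the new vertices of each red spiral gadget are matched among themselves, and $a$ and $b$ must be matched exactly as if $(a,b)$ were still present and no new vertex had been added; the thinness of the gadget keeps it from obstructing any segment a legal matching of $G'_{\Phi}$ could use, yet it still blocks $\overline{ab}$, so no matching edge can cross where $(a,b)$ used to be. After doing this for all cycles we obtain a plane geometric tree $T'_{\Phi}$ with $R(T'_{\Phi}) = V$.

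The equivalence is then routine. If $\Phi$ is satisfiable, realise a compatible edge-disjoint plane matching of $V(G'_{\Phi})$ by straight segments inside the convex faces and append the forced internal matching of every red spiral gadget; disjointness and thinness of the gadgets yield a plane geometric perfect matching compatible with $T'_{\Phi}$. Conversely, deleting from any compatible edge-disjoint plane geometric perfect matching of $T'_{\Phi}$ the edges internal to the red spiral gadgets leaves a plane perfect matching of $V(G'_{\Phi})$ that avoids every edge of $G'_{\Phi}$, hence $\Phi$ is satisfiable. The main obstacle is the design and geometric realisation of the red spiral gadget: one must exhibit a pendant path on only red vertices that can be drawn inside an arbitrarily thin neighbourhood of $\overline{ab}$ and that forces, by straight-line visibility alone, a unique internal perfect matching pairing precisely the new vertices while still blocking $\overline{ab}$. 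If this turns out to be delicate, an alternative is to reduce from Theorem~\ref{teo:np_geo_trees} directly, hanging off each blue vertex $b$ a tiny pendant spiral on an odd number of red vertices (essentially the arrow gadget of Figure~\ref{fig:gadget_arrow}) so that $b$ is forced to be matched into its pendant and the remaining pendant vertices pair among themselves.
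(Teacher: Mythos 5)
Your fallback option is, in essence, the paper's actual proof: the paper starts from the tree $T_\Phi$ of Theorem~\ref{teo:np_geo_trees} and flips every blue vertex to red by attaching \emph{arrow gadgets} (Figure~\ref{fig:gadget_arrow}) to blue--blue edges --- two nested three-vertex red paths, one on each side of the edge, whose restricted visibilities ($u''$ sees only $u,a$; $u'$ sees only $a,b$; symmetrically below) force all eight vertices ($a$, $b$ and the six new ones) to be matched among themselves --- after first subdividing each red--blue edge to create a blue--blue edge. So "hang an odd red pendant, built like half an arrow gadget, off each blue vertex" is essentially the paper's route, and with the visibility conditions spelled out it goes through.

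Your primary route, however, has a concrete flaw in the \emph{red spiral gadget}. You attach to $a$ a pendant path $a,w_1,\dots,w_{2k}$ of new red vertices and stipulate that each $w_i$ is visible, in the complement of the tree edges, "only from its two neighbours on the path." But the segments $w_iw_{i+1}$ \emph{are} tree edges, and the matching must be edge-disjoint from $T$; under your visibility condition no $w_i$ can be matched to anything, so the claimed forced internal matching does not exist. The gadget must instead be a zig-zag in which each vertex can see its \emph{second} neighbour along the path --- which is exactly what the arrow gadget supplies. A second, unaddressed complication of flipping colours \emph{before} breaking cycles is that the triangular-face gadget of Figure~\ref{fig:red} itself creates new cycles (the triangle $v'_1v'_2v'_3$ and its attachments to $\Delta$), so those edges would also have to be replaced by spirals, and you would then need to re-verify that the forcings "$v'_2$ joins $v'_3$" and "$v_1$ joins $v'_1$" survive the replacement. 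The paper's order of operations --- build the tree first, then flip colours with a gadget that is itself a tree and hence creates no new cycles --- avoids both problems.
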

\begin{proof}
Note that in this case, the set of red vertices is $V$. To prove that this problem is $\mathcal{NP}$-complete, we only need to show how to change the color of each blue vertex in the tree $T_{\Phi}$ built in the previous theorem, such that all new red vertices in the resulting plane geometric tree match between them in any minimum augmentation.
	
	Consider an edge $(a,b) \in T_\Phi$, with $a$ and $b$ being blue vertices. We can change the colors of $a$ and $b$ by using what we call an \emph{arrow gadget}. We add to $(a,b)$ two nested paths, each one consisting of three red vertices, placed at distance $\epsilon$ on each side of the edge as shown in Figure~\ref{fig:gadget_arrow}. The vertices of the path on the upper side of the edge, say $u,u',u''$, are placed in such a way that they cannot see the vertices of the path in the lower side, say $v,v',v''$. Moreover, the upper path is placed in such a way that $u'$ is visible only from $a,b$, and $u''$ is visible only from $u,a$. Symmetrically for the case of the lower path. 
Observe that $u''$ is forced to join $u$ and $v''$ is forced to join $v$, because otherwise the parities of the remaining vertices in the gadget are not met. Thus, the vertices in an arrow gadget must be matched between them in any perfect matching between the red vertices.

	Consider now an edge $(a,b) \in T_{\Phi}$ with $a$ being red and $b$ being blue. We can subdivide $(a,b)$ into two edges $(a,a')$ and $(a',b)$ and mark $a'$ as blue. Then, we can replace the edge $(a',b)$ by an arrow gadget as described previously. Iterating this replacement for all blue-blue edges and all blue-red edges in $T_{\Phi}$, we obtain a plane geometric tree $T'$ with all its vertices marked as red. Clearly, $T'$ has a plane geometric perfect matching, compatible and edge-disjoint with $T'$, if and only if, $T_{\Phi}$ can be geometrically augmented by a plane geometric perfect matching between its red vertices.
\end{proof}

\begin{figure}[ht!]
\centering
		\begin{subfigure}[t]{.25\textwidth}
			\centering
			\includegraphics[width=0.8\textwidth] {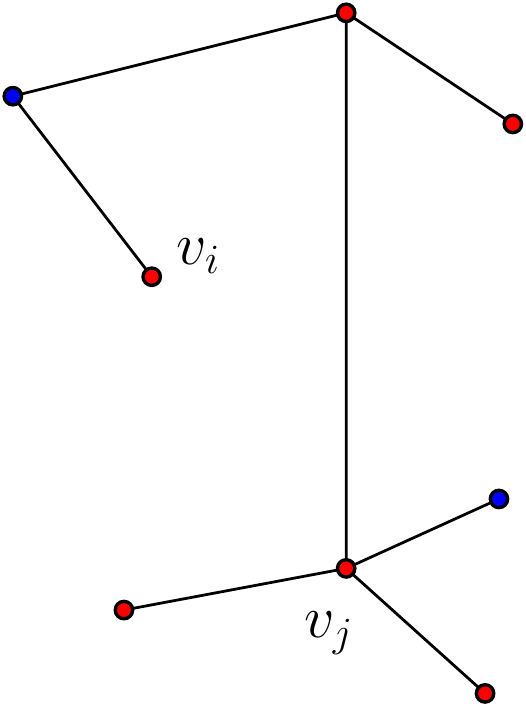}
			\caption{}
			\label{fig:tree_gadget}
		\end{subfigure}%
		~~
		\begin{subfigure}[t]{.25\textwidth}
			\centering
			\includegraphics[width=0.8\textwidth] {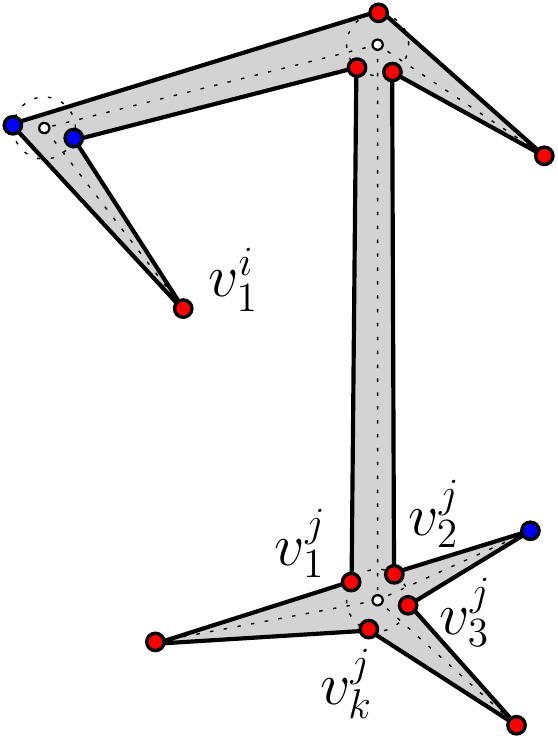}
			\caption{}
			\label{fig:path_gadget}
		\end{subfigure}%
		~~
		\begin{subfigure}[t]{.45\textwidth}
			\centering
			\includegraphics[width=.95\textwidth] {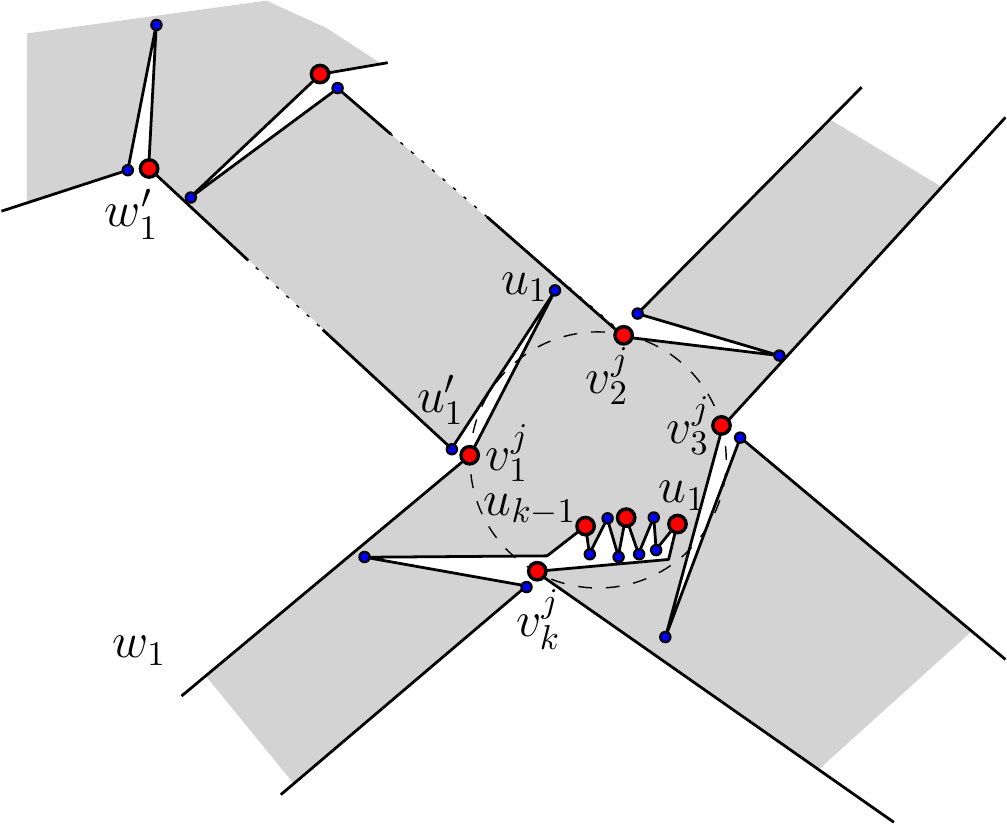}
			\caption{}
			\label{fig:path_gadget_augmented}
		\end{subfigure}%
		\caption{(a) A geometric tree $T_\Phi$. (b) A polygon $Q_\Phi$ obtained by traversing the unbounded face of $T_\Phi$. (c) Reconfiguration of $v_j$.}
	\end{figure}

In the following theorem, we establish the hardness of transforming a plane geometric tree  into Eulerian, by adding a plane geometric matching between the vertices with odd degree.

\begin{theorem}
	Let $T=(V,E)$ be a plane geometric tree. Then, the problem of deciding if $T$ can be augmented to a Eulerian plane geometric graph, by the addition of a plane geometric perfect matching $M$ between the vertices with odd degree in $T$, such that $T$ and $M$ are compatible and edge-disjoint, is $\mathcal{NP}$-complete.
	\label{teo:np_geo_trees_3}
\end{theorem}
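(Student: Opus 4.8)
The plan is to reduce from Theorem~\ref{teo:np_geo_trees}: given a plane geometric tree $T_\Phi$ together with its red set $R$, such that $T_\Phi$ is augmentable by a plane geometric perfect matching between the vertices of $R$ exactly when $\Phi$ is satisfiable, I would build a plane geometric tree $T'_\Phi$ whose set of odd-degree vertices coincides with its set of red vertices and which is augmentable to Eulerian by a plane geometric matching between its odd-degree vertices if and only if $T_\Phi$ is augmentable by a matching between $R$. Membership in $\mathcal{NP}$ is immediate: given a set of segments $M$, one checks in polynomial time that $M$ is a plane perfect matching on the odd-degree vertices of $T$, that $T\cup M$ is plane, and that every degree of $T\cup M$ is even.

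The only obstruction to ``odd degree $=$ red'' is the set of mismatched vertices $D=R\,\triangle\,V'$, where $V'$ denotes the odd-degree vertices of $T_\Phi$; it consists of the red vertices of even degree and the blue vertices (including every blue leaf) of odd degree, and $|D|$ is even since $|R|$ and $|V'|$ are. As $T_\Phi$ is a tree, every vertex lies on its unbounded face, so traversing that face yields the boundary polygon $Q_\Phi$ of Figure~\ref{fig:path_gadget}, which induces a cyclic order on the occurrences of the vertices of $T_\Phi$, and in particular on the vertices of $D$. I would pair consecutive mismatched vertices along $Q_\Phi$ and, for each pair $(v_j,v_k)$, carry out a local reconfiguration near $v_j$ (Figure~\ref{fig:path_gadget_augmented}): attach to $v_j$ and to $v_k$ short pendant paths that run along the boundary towards one another and end at two new vertices $p,q$ placed in a tiny pocket from which $p$ and $q$ see only each other. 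Each pendant path adds exactly one edge at a mismatched vertex, flipping its parity --- a red even-degree vertex becomes red and odd, a blue odd-degree vertex becomes blue and even --- while its interior vertices have degree $2$ and are coloured blue, and the terminal leaves $p,q$ are coloured red. One then verifies that in $T'_\Phi$ the odd-degree vertices are exactly $R$ together with the new leaves, which is precisely the red set; that $T'_\Phi$ is again a plane geometric tree, because every gadget is attached by a single vertex (so no cycle appears) and, having paired consecutive vertices of $Q_\Phi$, the gadgets and pockets can be drawn pairwise disjointly in the outer region without crossing $T_\Phi$ or any segment a matching could use; and that the number of odd-degree vertices is even.

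Correctness follows from the privacy of the pockets: in any plane perfect matching $M'$ on the odd-degree vertices of $T'_\Phi$ compatible with $T'_\Phi$, each pair $\{p,q\}$ must be matched to itself, so removing these forced edges leaves a plane perfect matching on $R$ compatible with $T_\Phi$; conversely, any such matching for $T_\Phi$ extends to one for $T'_\Phi$ by adding the forced pocket edges. Hence $T'_\Phi$ is augmentable to Eulerian by a matching between its odd-degree vertices iff $\Phi$ is satisfiable, and the theorem follows. I expect the geometric realisation to be the delicate part: one must keep $T'_\Phi$ a tree (so the fix-up gadgets may only be pendants, never chords joining two existing vertices), make each new pair of leaves genuinely private, and ensure no added gadget blocks or creates a matching edge among the original vertices; threading these requirements through the boundary order $Q_\Phi$ is what makes the construction work, and checking it is the technical core.
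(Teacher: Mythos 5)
Your overall strategy (reduce from Theorem~\ref{teo:np_geo_trees} by forcing ``odd degree $=$ red'') is the right idea, but you perform the parity fix-up \emph{after} tree-ification, on the tree itself, and that is where the argument has a genuine gap. First, a counting observation that constrains your gadget: any pendant tree with $m$ new vertices attached to a single existing vertex has $m$ new edges, so the degree sum over its new vertices is $2m-1$, hence it necessarily creates an \emph{odd} number of new odd-degree vertices. Those cannot be perfectly matched among themselves, so — exactly as you propose — the gadgets must come in pairs whose two spare leaves $p,q$ are matched to each other. But for $\{p,q\}$ to be a \emph{forced} edge they must see only each other, i.e.\ sit together in one tiny pocket, which forces the two pendant paths to run along the boundary from $v_j$ and from $v_k$ all the way to a common meeting point. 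Consecutive mismatched vertices along $Q_\Phi$ are not close in general: the stretch of boundary between them can contain occurrences of vertices of $R$ (including red leaves, which have only one occurrence). A pendant path hugging the boundary at distance $\epsilon$ in front of such an occurrence blocks every segment a matching could use from it, so the correspondence between matchings of $T'_\Phi$ and matchings of $T_\Phi$ breaks — in the worst case a red leaf becomes permanently unmatched and the forward direction of the reduction fails. Routing the corridors elsewhere in the outer face only relocates the occlusion. You flag this as ``the technical core,'' but it is not a detail to be checked: with pairwise pocket gadgets on an already-built tree there is no evident safe routing, so the construction as described does not go through.

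The paper avoids this by correcting parities \emph{before} breaking cycles: starting from the $3$-connected plane geometric graph $G_\Phi$ of Theorem~\ref{teo:topomin}, it adds a minimum $T$-join on $T=\{$red even vertices$\}\cup\{$blue odd vertices$\}$, argues by minimality that each triangular face carries at most one duplicated edge, embeds each duplicated edge inside such a face, and replaces it by the gadget of Figure~\ref{fig:Figura2}, which consists only of blue even-degree vertices confined to that face and therefore interferes with nothing. Only then are cycles broken with spiral gadgets to obtain the tree. If you want to salvage your route, you would have to import this idea: do the fix-up using existing edges of a graph that still has interior faces to hide the correction gadgets in, rather than inventing new pendant structures in the unbounded face of the finished tree.
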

\begin{proof}
Given a 3-SAT formula $\Phi$ and a plane embedding $F$ of $F_{\Phi}$, we can construct a plane topological graph as described in Theorem~\ref{teo:topomin}, that can be transformed into a plane geometric graph $G_{\Phi}$ with all faces being convex.
Then, we apply the same technique used in the proof of Theorem~\ref{teo:np_topo_euler}, and we add a $T$-join of minimum size to $G_{\Phi}$, where $T$ consists of all red vertices with even degree and all blue vertices with odd degree. After that, we replace every duplicated edge as indicated in Figure~\ref{fig:Figura2}, to obtain a new plane geometric graph $G'_{\Phi}$ such that all vertices with odd degree are red and all vertices with even degree are blue.

We now transform $G'_{\Phi}$ into a tree $T_{\Phi}$ by adding spiral gadgets to break cycles, as described in the proof of Theorem~\ref{teo:np_geo_trees}. Note that all vertices with odd degree in $T_{\Phi}$ are red and all vertices with even degree are blue. Clearly, by construction, $G_{\Phi}$ can be augmented by a plane geometric perfect matching between its red vertices, if and only if, $T_{\Phi}$ can be augmented by a plane geometric perfect matching between the vertices with odd degree. Therefore, the theorems follows.
\end{proof}

For plane geometric paths, we can prove the following results.

\begin{theorem}\label{teo:comp_match_trees}
	Let $P=(V,E)$ be a plane geometric path and let $C_P$ be a set of parity constraints. Then, the problem of deciding if $P$ is geometrically augmentable to meet $C_G$, by a plane geometric perfect matching $M$ between the vertices of $R(C_P)$,
is $\mathcal{NP}$-complete.
\end{theorem}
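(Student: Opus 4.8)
The plan is to reduce from Planar 3-SAT, reusing the plane geometric tree $T_\Phi$ (with parity constraints $C_{T_\Phi}$) built in the proof of Theorem~\ref{teo:np_geo_trees}, which is geometrically augmentable by a plane geometric perfect matching between its red vertices if and only if $\Phi$ is satisfiable. From $T_\Phi$ I would construct in polynomial time a plane geometric path $P_\Phi$ together with parity constraints $C_{P_\Phi}$ having the analogous property. Membership in $\mathcal{NP}$ is immediate: given a set of segments, checking that it is a perfect matching between $R(C_{P_\Phi})$, edge-disjoint and non-crossing with $P_\Phi$, and that $P_\Phi\cup M$ meets $C_{P_\Phi}$, takes polynomial time.

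To obtain $P_\Phi$, I would use the boundary-traversal idea suggested by Figures~\ref{fig:tree_gadget}--\ref{fig:path_gadget_augmented}. Since $T_\Phi$ is a tree it has a single (unbounded) face; walking along it is the Euler tour that uses every edge twice. I would \emph{inflate} this closed walk into a thin simple polygon $Q_\Phi$: replace each vertex $v$ of degree $d$ by $d$ copies placed along a tiny convex arc at $v$'s location (a \emph{pocket} of $v$, opening into the interior along the corridors incident to $v$), and replace each tree edge by the two nearly parallel straight segments traversed in its two directions, forming a thin empty corridor. After a small generic perturbation $Q_\Phi$ is a simple polygon whose interior is an $\varepsilon$-neighbourhood of $T_\Phi$. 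Finally I would pick one boundary edge $e_0$ in a peripheral location (e.g.\ one of the two copies of a leaf edge of $T_\Phi$, far from every gadget), subdivide it and remove its central portion, inserting there two new vertices placed next to each other and constrained so that they are red and must match one another; this turns $Q_\Phi$ into a plane geometric path $P_\Phi$ while keeping every copy of a $T_\Phi$-vertex of degree exactly $2$.

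Correctness would rest on a parity-counting argument. I would choose $C_{P_\Phi}$ so that, for each vertex $v$ of $T_\Phi$, the number of copies of $v$ lying in $R(C_{P_\Phi})$ is odd when $v$ is red in $T_\Phi$ and $0$ when $v$ is blue, the distinguished red copy being placed deep in the pocket of $v$. Since every copy has path-degree $2$, in any plane perfect matching $M$ between the red vertices of $P_\Phi$ the number of $M$-edges leaving the pocket of $v$ is congruent $\pmod 2$ to the number of red copies of $v$; hence a red $v$ must export at least one matching edge and a blue $v$ exports none. Because the red copies are recessed, such an exported edge must first enter the interior of $Q_\Phi$, which is a thin neighbourhood of $T_\Phi$, and because in $T_\Phi$ every matching edge produced by the gadgets of Theorem~\ref{teo:topomin} (and by the spiral gadgets of Theorem~\ref{teo:np_geo_trees}) is short and internal to a single gadget or wire, the only copies the exported edge can reach are those in a pocket adjacent to $v$ along a corridor; perfectness of $M$ (every red copy must be used) then prevents any global rerouting. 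Thus plane perfect matchings of $P_\Phi$ between its red vertices project exactly onto plane perfect matchings of $T_\Phi$ between its red vertices, so $P_\Phi$ is augmentable iff $T_\Phi$ is iff $\Phi$ is satisfiable.

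The step I expect to be the main obstacle is precisely the last one: ruling out spurious matchings that exploit the extra freedom gained when the tree $T_\Phi$ is opened into a path. A path has no well-defined inside and outside, so a priori a matching edge could escape through the gap left by $e_0$ or run around the outside of $Q_\Phi$ and join copies in distant pockets. The care lies in the geometry — the widths and angular apertures of the pockets and corridors, the placement of $e_0$, and the $\varepsilon$-scale relative to the (convex) embedding of $T_\Phi$ — chosen so that each distinguished red copy can be matched only as in $T_\Phi$, and in checking that the two degree-$1$ path endpoints receive parities consistent with the conservation count (which the forced red pair inserted at $e_0$ takes care of). Once this geometric bookkeeping is done, the gadget-by-gadget behaviour is inherited verbatim from Theorems~\ref{teo:topomin} and~\ref{teo:np_geo_trees}, and the theorem follows.
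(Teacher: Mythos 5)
Your high-level plan (inflate the boundary walk of $T_\Phi$ into a thin simple polygon $Q_\Phi$, then cut it open into a path) is the same as the paper's, but the crucial geometric step is inverted, and as written the reduction would not be sound. In $T_\Phi$ the matching edges are straight segments joining red vertices that are \emph{not} tree-adjacent (the matching must be edge-disjoint from $T_\Phi$), so these segments live in the complement of the tree, i.e.\ in what becomes the \emph{exterior} of the thin polygon $Q_\Phi$. Your construction recesses the distinguished red copy of each red vertex ``deep in the pocket'' opening into the interior, and your correctness argument even asserts that the exported matching edge travels through the interior corridors and ``the only copies the exported edge can reach are those in a pocket adjacent to $v$ along a corridor.'' Tree-adjacent pockets are exactly the pairs that must \emph{not} be matched; with this geometry the intended matchings of $T_\Phi$ cannot be realized at all, so the ``if'' direction of the equivalence fails. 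The paper does the opposite: it keeps the copies exposed to the exterior, blocks only the unwanted connections along the boundary with narrow wedges, and lets the surviving free copy of each red vertex reach, through the exterior of $Q_\Phi$, whatever red vertex it could see in $T_\Phi$.

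A second, related gap is your choice to make only one (``distinguished'') copy of each red vertex red. A red vertex $v_j$ of degree $k$ sits in $k$ angular sectors, and different satisfying assignments of $\Phi$ may route its matching edge into different sectors; fixing one red copy fixes the sector in advance and can kill valid augmentations. The paper instead keeps all $k$ copies red and adds a zig-zag path of $k-1$ auxiliary red vertices inside the $\epsilon$-disk around $v_j$, visible only from the copies of $v_j$; these absorb $k-1$ of the copies in every matching, leaving exactly one copy --- \emph{any} one --- free to play the role of $v_j$. This preserves both the parity count and the full set of matchings of $T_\Phi$. Your parity-mod-2 bookkeeping is fine as far as it goes, but it cannot substitute for these two mechanisms; without them the projection from matchings of $P_\Phi$ to matchings of $T_\Phi$ is neither well defined nor surjective.
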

\begin{proof}
Given a 3-SAT formula $\Phi$ and a plane embedding $F$ of $F_{\Phi}$, we build a plane geometric tree $T_\Phi$ as described in Theorem \ref{teo:np_geo_trees}. We recall that $\Phi$ is satisfiable if and only if $T_{\Phi}$ can be augmented to meet parity constraints by a plane geometric perfect matching between the vertices in $R(T_{\Phi})$. From $T_{\Phi}$, we construct a plane geometric path $P_{\Phi}$ as follows.

It is well known that a simple polygon $Q_\Phi$ can be built from $T_{\Phi}$ by traversing the boundary of the unbounded face of $T_\Phi$, placing a copy infinitesimally close of a vertex every time it is visited, and connecting the copies according the traversal order, as shown in Figures \ref{fig:tree_gadget} and \ref{fig:path_gadget}. We start such a traversal at an arbitrary leaf $u_i$, and any copy of a vertex keeps the color of the vertex. For a non-leaf vertex, we can assume that the copies of the vertex are placed on a circumference of radius $\epsilon$ around the vertex.

	\begin{figure}[ht!]
		\begin{subfigure}[t]{\textwidth}
			\centering
			\includegraphics[width=.56\textwidth] {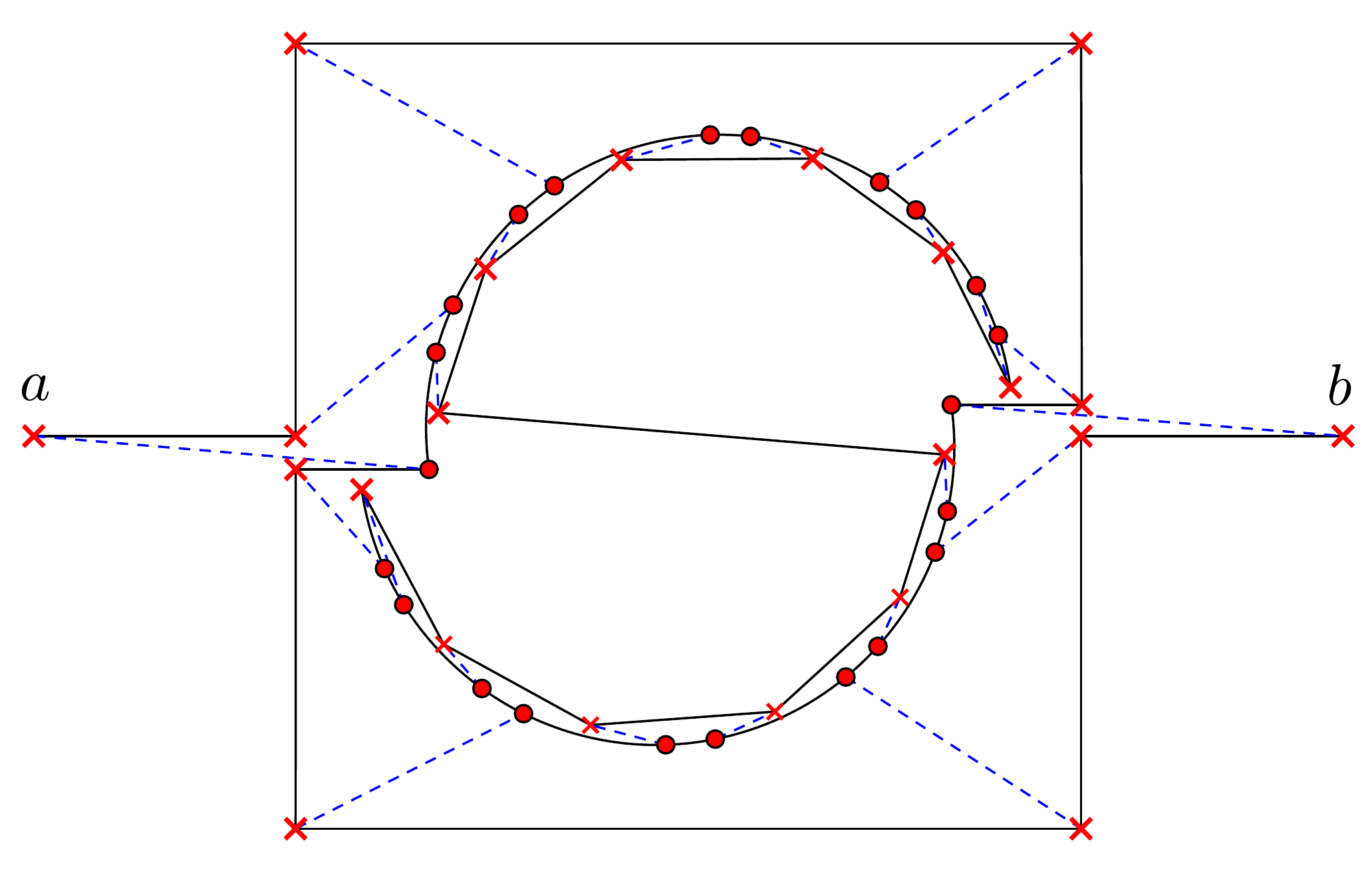}
			\caption{}
		\end{subfigure}
		\caption{A geometric path consisting of 22 disk-shaped vertices and 22 cross-shaped vertices.}
		\label{fig:path_gadget_red}
	\end{figure}

	For each red vertex $v_j \in T_\Phi$ with degree $k\geq 2$, we apply the following transformation to its corresponding vertices in $Q_\Phi$. An example of this transformation is shown in Figure~\ref{fig:path_gadget_augmented} for a vertex with degree 4 in $T_{\Phi}$. Let $C=\{v_1^j,v_2^j,...,v_k^j\}$ be the set of vertices in $Q_\Phi$ corresponding to a red vertex $v_j \in T_\Phi$, and assume that the clockwise order of these points along the boundary of $Q_\Phi$ is $\ldots , w_1, v_1^j, w'_1, \ldots , w_2, v_2^j, w'_2, \ldots , w_k, v_k^j, w'_k, \ldots$

For $l=1, \ldots ,k-1$, we add a narrow wedge $v_l^j, u_l, u'_l$ inside $Q_{\Phi}$ such that: (i) $u_l$ is placed very close to the edge $(w_{l+1},v_{l+1}^j)$, and (ii) $u'_l$ is placed very close to $v_l^j$ so that $v_l^j$ cannot be connected to $w'_l$, but $v_l^j$ can be connected to any other red vertex that was previously visible from it through the exterior of $Q_{\Phi}$. See Figure~\ref{fig:path_gadget_augmented}. At vertex $v_k^j$ we do the same operation, but also adding before the wedge a zig-zag path with most of the points in convex position and $k-1$ of them, $u_1, u_2, \ldots , u_{k-1}$, being red, in such a way that: (i) the only red vertices that are visible from a red vertex in the zig-zag belong to $C$, (ii) $u_1, u_2, \ldots , u_{k-1}$ are placed inside the circle of radius $\epsilon$ centered at $v_j$ so that every vertex $v_l^j \in C$, with $1\le l< k$, can see all of them, and (iii) $v_k^j$ is visible from at least one of $u_1, u_2, \ldots , u_{k-1}$.

Observe that the addition of these narrow wedges implies that a vertex $v_l^j\in C$ cannot be connected to any other red vertex $v_{l'}^{j'}\notin C$ through the interior of $Q_{\Phi}$, and that the red vertices in the zig-zag path must be matched to $k-1$ of the copies of $v_j$ in any augmentation. Thus, one of the copies of $v_j$ is always free to join any other red vertex in the rest of the graph through the exterior of $Q_{\Phi}$.

$Q_\Phi$ can be obviously transformed into a plane geometric path $P_\Phi$, by duplicating a blue vertex to break the boundary of $Q_\Phi$.	Then, it is straightforward to see that $P_\Phi$ admits a plane geometric perfect matching between its red vertices, if and only if, $T_{\Phi}$ can be augmented by a plane geometric perfect matching between its red vertices. Hence, the theorem holds.
\end{proof}

\begin{theorem}\label{teo:comp_match_paths}
	Let $P=(V,E)$ be a plane geometric path. Then, the problem of deciding if there exists a plane geometric perfect matching $M$ such that $P$ and $M$ are compatible and edge-disjoint is $\mathcal{NP}$-complete.
\end{theorem}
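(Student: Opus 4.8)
The plan is to follow the same pattern that in this paper turns an ``arbitrary $R$'' reduction into an ``$R=V$'' one (as in the step from Theorem~\ref{teo:np_geo_trees} to Theorem~\ref{teo:np_geo_treesall}, or from Theorem~\ref{teo:np_topo_planes} to Theorem~\ref{teo:np_topo_planes_v}), but now starting from the path reduction of Theorem~\ref{teo:comp_match_trees}. Membership in $\mathcal{NP}$ is immediate: given a candidate edge set $M$, one checks in polynomial time that $M$ is a perfect matching on $V$, that its segments are pairwise non-crossing and disjoint from $E$, and that no segment of $M$ crosses a segment of $E$. So all the content is in the reduction. Observe also that a plane geometric perfect matching adds exactly one edge at every vertex, so the existence of a compatible edge-disjoint plane geometric perfect matching on $V$ is exactly the perfect-matching variant for paths in the special case $R(C_P)=V$; this is why the statement belongs to the ``$R=V$'' column of Table~\ref{tabla2}.

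Let $P_\Phi$ be the plane geometric path built in the proof of Theorem~\ref{teo:comp_match_trees}, which admits a compatible, edge-disjoint plane geometric perfect matching between the vertices of $R(C_{P_\Phi})$ if and only if $\Phi$ is satisfiable. From $P_\Phi$ I would build a plane geometric path $P'_\Phi$ with parity constraints $C_{P'_\Phi}$ such that $R(C_{P'_\Phi})=V(P'_\Phi)$ and $P'_\Phi$ admits a compatible, edge-disjoint plane geometric perfect matching if and only if $P_\Phi$ admits one between its red vertices. To do so, delete from $P_\Phi$ all of its blue vertices; this breaks $P_\Phi$ into subpaths $S_1,\ldots,S_r$ of red vertices (some possibly trivial), in the order they occur along $P_\Phi$. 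Reconnect them into a single path by inserting between consecutive $S_i$ and $S_{i+1}$, and at the free end of $S_1$ or of $S_r$ when $P_\Phi$ starts or ends at a blue vertex, a \emph{connector gadget}: a thin zig-zag path of an even number of new red vertices, drawn inside a narrow neighborhood of the portion of $P_\Phi$ that the deleted blue vertices and their incident edges used to occupy, and placed so that (i) each connector vertex sees, among all red vertices of the drawing, only the remaining vertices of the same connector together with the two subpath endpoints to which the connector attaches; (ii) the connector can be perfectly matched within itself by non-crossing segments; and (iii) every segment crossing a former blue-incident edge of $P_\Phi$ must cross an edge of some connector. By construction $P'_\Phi$ is a single simple plane geometric path and all of its vertices are red. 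This is the path-shaped analogue of the hidden zig-zag used in the wedge gadget of Theorem~\ref{teo:comp_match_trees} and in the arrow gadget of Theorem~\ref{teo:np_geo_treesall}; alternatively one could run the polygon-to-path construction of Theorem~\ref{teo:comp_match_trees} directly on the all-red tree of Theorem~\ref{teo:np_geo_treesall}, but the recoloring route avoids having to revisit the ``breaking'' step of that construction.

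For correctness I would argue as follows. In any compatible, edge-disjoint plane geometric perfect matching of $P'_\Phi$, no connector vertex can be matched to a subpath endpoint, since by (i) this would leave an odd number of connector vertices mutually visible only among themselves, which cannot be perfectly matched; hence by (ii) every connector is matched within itself and the red vertices of $P_\Phi$ are matched among themselves, and by (iii) that matching crosses no former blue-incident edge, so it is compatible with all of $P_\Phi$. Conversely, from a compatible plane matching of the red vertices of $P_\Phi$ one gets a compatible perfect matching of $P'_\Phi$ by adding the internal matchings of all connectors, which are confined to tiny disjoint neighborhoods and hence block nothing relevant. Therefore $P'_\Phi$ admits a compatible, edge-disjoint plane geometric perfect matching if and only if $\Phi$ is satisfiable, and since $P'_\Phi$ is obtained from $P_\Phi$ in polynomial time, the theorem follows.

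The main obstacle is the geometric realization of the connector gadgets: one must place a zig-zag of new red vertices that at the same time splices $P_\Phi$ into one simple plane path, is forced to self-match by the parity/visibility argument of (i)–(ii), and occludes exactly the regions vacated by the deleted blue vertices as demanded in (iii). Verifying that such a placement always exists — exploiting that the path $P_\Phi$ of Theorem~\ref{teo:comp_match_trees} essentially traces the boundary of a simple polygon, so there is controlled free space on each side of every deleted blue vertex — is the only genuinely non-routine step; everything else mirrors arguments already developed in Theorems~\ref{teo:comp_match_trees} and~\ref{teo:np_geo_treesall}.
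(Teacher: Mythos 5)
Your overall strategy --- take the path $P_\Phi$ of Theorem~\ref{teo:comp_match_trees} and absorb its blue vertices by splicing in all-red gadgets that are forced to match internally --- is exactly the paper's, but your gadget is never actually constructed and its forcing argument has a hole. You claim no connector vertex can be matched to a subpath endpoint because this ``would leave an odd number of connector vertices mutually visible only among themselves.'' But by your own condition (i) each connector vertex sees \emph{two} attachment endpoints; if one connector vertex is matched to each of them, an even number remains and can self-match, so parity alone does not exclude leakage. A perfect matching of $P'_\Phi$ that consumes two subpath endpoints inside a connector does not restrict to a perfect matching of the red vertices of $P_\Phi$, so the ``only if'' direction of your reduction is not established. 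The paper's gadget (Figure~\ref{fig:path_gadget_red}) avoids this with a two-class counting argument that makes no reference to the attachment vertices: 22 disk-shaped vertices that are pairwise adjacent or mutually invisible and whose only visible red neighbours are the 22 cross-shaped vertices, so all 44 gadget vertices are forced to be matched among themselves and none is free to touch the rest of the graph.

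The second gap is the one you flag yourself. Deleting the blue vertices removes edges of $P_\Phi$, which can open new visibilities between original red vertices that the connectors must then re-occlude (your condition (iii)), and you give no construction showing that (i)--(iii) can be realized simultaneously by a single plane zig-zag that also splices the subpaths into one simple path. The paper sidesteps this entirely: it keeps every vertex, first subdivides each red--blue edge so that only blue--blue edges need treatment, and then replaces each blue--blue edge $(a,b)$ by a path gadget joining $a$ to $b$ that occupies the place of the original segment, so the visibility structure among the red vertices of $P_\Phi$ is unchanged. As written, your argument reduces the theorem to an unproved geometric realizability claim plus a flawed parity argument, so it is not yet a proof.
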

\begin{proof}
	Note that this is the case in which the set of red vertices coincides with $V$. Also note that we only need to show how to change the color of each blue vertex in the path $P_\Phi$ built in the previous proof, without interfering with any possible augmentation of the red vertices of $P_\Phi$.

	Consider an edge $(a,b) \in P_\Phi$, with $a$ and $b$ being blue vertices. We replace the edge $(a,b)$ by the gadget shown in Figure~\ref{fig:path_gadget_red}, which is a path connecting $a$ and $b$, so that the colors of $a$ and $b$ are changed to red. In this gadget, we identify two sets of red vertices,
the 22 cross-shaped vertices and the 22 disk-shaped vertices. Observe that the disk-shaped vertices are placed in such a way that they are either adjacent or they are not visible to each other. Therefore, it is not possible to join two disk-shaped vertices with an edge. Observe as well that cross-shaped vertices are the only vertices of the graph that are visible from disk-shaped vertices. This implies that the cross-shaped vertices are forced to join the disk-shaped vertices in any matching, because otherwise there would be red disk-shaped vertices not meeting their parity constraints.

	Consider now an edge $(a,b) \in P_\Phi$, with $a$ being red and $b$ being blue. We can subdivide $(a,b)$ into two edges $(a,a')$ and $(a',b)$ and mark $a'$ as blue. Then, we apply the previous gadget to $(a',b)$. By repeating this operation for all blue edges and all red-blue edges of $P_{\Phi}$, we obtain a path $P'$ with all its vertices marked as red. The proof follows since, if we are able to augment $P'$ with a plane geometric perfect matching, then we are able to find a plane geometric perfect matching between the red vertices of $P_{\Phi}$, and viceversa.
\end{proof}

To finish this section, we remark that deciding if a plane geometric path can be augmented to a Eulerian plane geometric graph by adding a matching between its odd degree vertices can be clearly solved in linear time. The only vertices of a path with odd degree are its endpoints, and checking if the segment connecting these two endpoints crosses the path can be done in linear time.

\section{Plane topological augmentation in MOPs}\label{sec:mops}

To the best of our knowledge, the family of plane topological trees is so far the only family of plane topological graphs, for which the plane topological augmentation problem to meet parity constraints is polynomial~\cite{aldana2017planarity}. In this section we show that this problem is also polynomial for the family of maximal outerplane graphs.

An \emph{outerplanar graph} $G$ is a graph that has a plane embedding where all of its vertices belong to the unbounded face of the embedding. We shall refer to these embeddings as \emph{outerplane graphs}. A graph $G$ is a maximal outerplanar graph if it is not possible to add any edge to it, such that the resulting graph is still outerplanar.
A \emph{maximal outerplane graph} (for short a MOP) is an outerplane graph of a maximal outerplanar graph.

For the sake of clarity, in the rest of this section we will assume that all the vertices of a MOP $G$ are placed on an unit circle $\mathcal C$, and that they are labeled from $v_1$ to $v_{n}$ in clockwise order. We also assume that all of the edges of $G$ are \emph{arcs} of $\mathcal C$ connecting consecutive vertices, or straight-line segments contained in the interior of $\mathcal C$ called \emph{diagonals}. Therefore, $G$ consists of $n$ arcs and $n-3$ diagonals. We denote by $[v_i,v_j]$ the set of vertices $\{v_i, v_{i+1}, \ldots, v_{j} \}$ (mod $n$), and we call $[v_i,v_j]$ an interval of $G$.

Given a MOP $G$ and a set $C_G$ of parity constraints, recall that a vertex $v\in G$ is red if its parity must change to meet its parity constraint (that is, $v \in R(C_G)$), and blue otherwise. Further, recall that an edge of $G$ is red (respectively blue), if both of its endpoints are red vertices (respectively blue). An edge of $G$ is red-blue if its endpoints have different colors. Consider two red-blue diagonals of $G$. We say that they are \emph{parallel} if there is a straight line that separates their red endpoints from their blue endpoints. For instance, the red-blue diagonals $(v_m,v_i)$ and $(v_l,v_k)$ in Figure~\ref{fig:diag_rb-j.rb} are parallel, and the red-blue diagonals $(v_l,v_i)$ and $(v_j,v_k)$ in Figure~\ref{fig:diag_rb-br} are not. Note that the cyclic order of the colors of the four endpoints is red, red, blue, blue if the two red-blue diagonals are parallel, and red, blue, red, blue if they are not.

In the following theorem we characterize the MOPs that are topologically augmentable.

\begin{theorem} \label{main1}
	Let $G=(V,E)$ be a MOP and let $C_G$ be a set of parity constraints. Then, $G$ is topologically augmentable to meet $C_G$ if one of the following conditions is satisfied:
\begin{enumerate}\label{teo:car_mops}
		\item[(i)] $G$ contains a blue diagonal.
		\item[(ii)] $G$ contains two non-parallel red-blue digonals.
		\item[(iii)] There exist two parallel red-blue diagonals $(v_i,v_m), (v_k,v_l) \in E$ such that $v_i, v_k$ are blue, $v_l, v_m$ are red (possibly with $v_l=v_m$), and there exists a vertex $v_j$ of degree two such that the order of these vertices along $\mathcal{C}$ is $v_i,v_j,v_k,v_l, v_m$ in clockwise direction.
\end{enumerate}
Otherwise, $G$ is non-augmentable.	
\end{theorem}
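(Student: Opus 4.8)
The plan is to prove the three sufficiency conditions constructively (exhibit an explicit compatible augmenting graph $H$ in each case) and then prove the converse by showing that if none of (i), (ii), (iii) holds, then the structure of $G$ forces the red vertices to lie in a configuration where no plane augmentation can flip all their parities. Throughout I will exploit the following basic fact, which I would isolate as a preliminary lemma: because $G$ is a MOP, the only edges that can be added to $G$ inside $\mathcal{C}$ are diagonals joining two non-consecutive vertices, and any such diagonal $(v_a,v_b)$ splits the remaining vertices into the two open intervals it bounds. Hence a plane augmenting graph $H$, together with the diagonals of $G$, forms a set of non-crossing diagonals; parity changes propagate along these diagonals in a very controlled way. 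A second preliminary observation: if $G$ has a blue diagonal $d$, then $d$ separates $V\setminus\{$endpoints of $d\}$ into two intervals, and the number of red vertices is even overall, so one of the two intervals together with possibly the endpoints can be handled recursively.

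First I would handle case (i). If $(v_a,v_b)$ is a blue diagonal, I claim $G$ is augmentable. The idea is to route a plane $T$-join–like structure: inside one side of the diagonal we want an edge set whose odd-degree set is exactly the red vertices on that side. Since within the convex region bounded by $d$ and the arc, all red vertices are consecutive-or-not, but crucially the endpoints $v_a,v_b$ are blue (so they may be used as ``transit'' vertices of even degree), one can build $\lfloor r/2\rfloor$ non-crossing paths pairing up the $r$ red vertices on that side using only fan diagonals emanating from the blue endpoints — I would make this precise by induction on the number of vertices in the interval, peeling off a closest pair of red vertices that can be joined by a short chord or by a path through one blue endpoint. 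The blue diagonal is the workhorse here because it gives a ``free'' even-degree vertex to pass through.

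Next, case (ii): two non-parallel red-blue diagonals. Here the cyclic color pattern of the four endpoints is red, blue, red, blue, so the two red endpoints $p,p'$ and two blue endpoints $q,q'$ interleave. I would show the chord $(p,p')$ can be added without crossing $G$ — or if it does cross some diagonal of $G$, then that crossed diagonal is itself blue or red-blue and we recurse/reduce to a smaller instance or to case (i). The non-parallel condition is exactly what guarantees that joining the two red endpoints is ``topologically innocent'' relative to the two given diagonals. For case (iii), the specific ordering $v_i,v_j,v_k,v_l,v_m$ with $v_j$ of degree two is tailored so that one can add the two diagonals $(v_j,v_l)$ and $(v_j,v_m)$ (or $(v_j,v_l)$ alone when $v_l=v_m$, together with $(v_l, \cdot)$) to flip $v_j$, $v_l$, $v_m$ and then clean up the rest; I would verify planarity using that $v_j$ has degree two (so the region around $v_j$ is unobstructed) and that $v_i,v_k$ being blue lets those endpoints absorb extra parity.

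The converse — showing non-augmentability when (i), (ii), (iii) all fail — is the main obstacle. The plan is: assume no blue diagonal exists and at most a restricted set of red-blue diagonals exists (all mutually parallel, and none arranged as in (iii)). I would first argue that the absence of a blue diagonal forces the blue vertices to be "spread out" — each diagonal of $G$ is red or red-blue — and then analyze the dual tree of the triangulation $G$, whose leaves correspond to ears. The key claim is that in such a $G$, the red vertices form a union of convex sub-polygons (the pentagon-type obstruction from Figure~\ref{fig:example3}) each of which, because its separating diagonals are all parallel red-blue chords not in configuration (iii), can absorb parity changes on at most a bounded number of its red vertices via plane diagonals, leaving some red vertex stuck. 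Making the parity/counting argument airtight — precisely, showing that along any family of non-crossing chords the red-degree vector cannot equal the all-ones vector on the reds — is where the real work lies; I expect to need a careful induction on the dual tree, using the failure of (ii) to prevent "crossing" escapes and the failure of (iii) to block the one remaining loophole through a degree-two vertex.
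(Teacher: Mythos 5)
Your plan rests on a premise that is false for this problem and silently converts it into a different, more constrained one: you assert that the augmenting graph $H$ together with the diagonals of $G$ must form a non-crossing set of chords inside $\mathcal{C}$. Since every bounded face of a MOP is a triangle, no useful new edge can be drawn inside $\mathcal{C}$ at all; every edge of $H$ lives in the unbounded face. Consequently the only constraints on $H$ are that its chords pairwise non-interleave in the cyclic order and that no edge of $G$ is duplicated --- the diagonals of $G$ impose no crossing restriction on $H$ whatsoever. This error is damaging in both directions. For sufficiency you are building an over-constrained $H$: the paper's construction for case (i) takes the red vertex of each side closest to an endpoint of the blue diagonal and fans it to all red vertices of the \emph{opposite} side, edges that freely ``cross over'' the diagonal through the outer face. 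For necessity it is fatal: ruling out only those $H$ that avoid $G$'s diagonals does not rule out all admissible $H$, so your non-augmentability argument could not close even if its combinatorial core were supplied.

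Two further gaps. In case (ii) you propose adding the single chord joining the two red endpoints of the non-parallel diagonals; that flips exactly two parities and leaves every other red vertex untouched. The paper instead observes that the two blue endpoints $v_i,v_k$ of the interleaved diagonals play the role of the endpoints of a blue diagonal, and reuses the case (i) construction verbatim; your sketch has no substitute for this reduction. Finally, the converse --- the real content of the theorem --- is left as a programme (``I expect to need a careful induction'') rather than an argument, and the programme you describe (dual tree, unions of convex sub-polygons) does not match any workable route I can see. The paper's necessity proof is concrete: failure of (ii) yields an interval $U$ containing all blue endpoints of red--blue diagonals, failure of (i) forces every blue vertex outside $U$ to have degree two, failure of (iii) forbids diagonals with both endpoints in $U$; one then shows no $H$-edge can join two vertices of the complementary interval $D$ (a shortest-interval argument hinging on the no-duplicate-edge constraint), and a monotone argument --- each red vertex of $D$ adjacent to $U$ needs a new neighbour in $U$ strictly beyond its current neighbourhood interval --- runs off the end of $U$ and yields the contradiction. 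None of these steps appears, even in outline, in your plan.
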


\begin{proof} We first prove that if $G$ satisfies at least one of (i), (ii) or (iii), then it is topologically augmentable.

Case (i): Suppose that $G$ contains a blue diagonal $(v_i,v_j)$. Then, each of the intervals $A=[v_{j+1},v_{i-1}]$ and $B=[v_{i+1},v_{j-1}]$ contains at least one vertex. If one of these intervals, say $A$, contains only blue vertices, then we join a blue vertex in $A$ to all of the evenly many red vertices in $B$, and we are done. Suppose then that each of $A$ and $B$ contains at least one red vertex. In the clockwise traversal of $\mathcal{C}$, let $v\in A$ and $v'\in B$ be the red vertices that precede $v_i$ and $v_j$, respectively. Note that these vertices are not necessarily $v_{i-1}$ and $v_{j-1}$. One can easily verify that joining $v$ to all red vertices in $B$ (except possibly $v'$) and joining $v'$ to all red vertices in $A$ (except possibly $v$), every red vertex will meet its parity constraint in the resulting graph. See Figure~\ref{fig:diag_b-b}.

	\begin{figure}[ht!]
	\centering
		\begin{subfigure}[t]{0.32\textwidth}
			\includegraphics[width=0.9\linewidth]{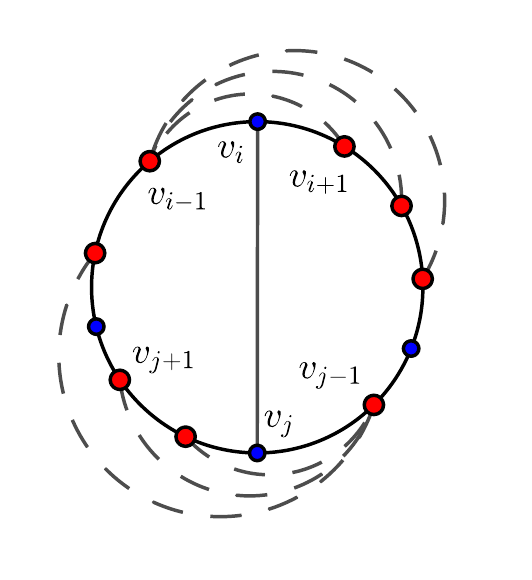}
			\captionsetup{width=.95\linewidth}
			\caption{}
			\label{fig:diag_b-b}
		\end{subfigure}~~~%
		\begin{subfigure}[t]{0.32\textwidth}
			\includegraphics[width=0.9\linewidth]{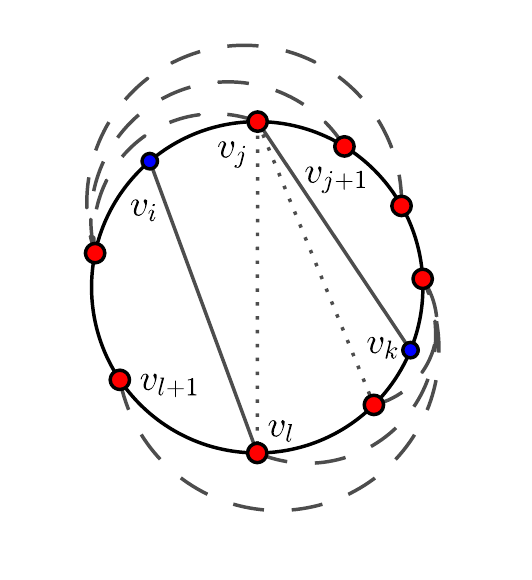}
			\captionsetup{width=.95\linewidth}
			\caption{}
			\label{fig:diag_rb-br}
		\end{subfigure}~~~%
		\begin{subfigure}[t]{0.32\textwidth}
			\includegraphics[width=0.9\linewidth]{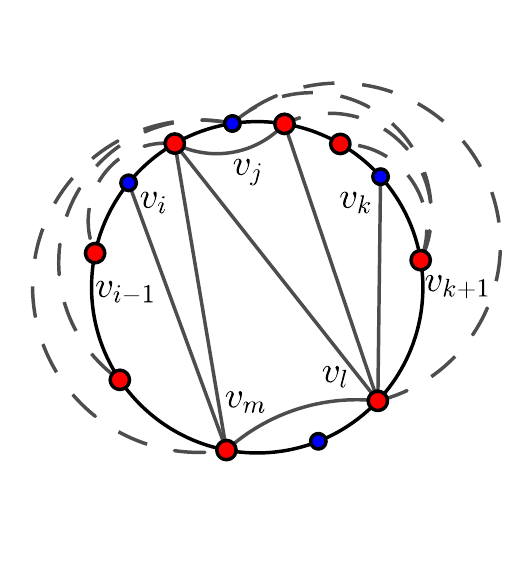}
			\captionsetup{width=.95\linewidth}
			\caption{}
			\label{fig:diag_rb-j.rb}
		\end{subfigure}%

		\caption{Topologically augmentable MOPs. (a) A blue diagonal exists. (b) Two non-parallel red-blue diagonals exist. (c) Two parallel red-blue diagonals and a degree-2 vertex exist.}
	\end{figure}

Case (ii): Suppose that $G$ contains two non-parallel edges $(v_i,v_l)$ and $(v_j, v_k)$ such that $v_i$ and $v_k$ are blue vertices, and $v_j$ and $v_l$ are red vertices. Assume without loss of generality that the relative order of these vertices along $\mathcal{C}$ is $v_i,v_j,v_k,v_l$, see Figure~\ref{fig:diag_rb-br}. In this case, we can proceed as in Case (i), as if the blue edge $(v_i,v_k)$ was an edge of $G$.

Case (iii): We can connect $v_{i-1}$ to each red vertex in $[v_{i+1},v_{j-1}]$ and  $v_{k+1}$ to each red vertex in $[v_{j+1},v_{k-1}]$, see Figure~\ref{fig:diag_rb-j.rb}. Note that after adding these edges, $v_{i-1}$ and $v_{k+1}$ could meet their parities or not.  Then, we connect all red vertices in $[v_{k+1},v_{i-1}]$ to $v_j$, including $v_{i-1}$ and $v_{k+1}$ if they do not meet their parities after the previous addition of edges. Observe that it does not matter what color $v_j$ has, the parities of all vertices in $G$ are met (see Figure~\ref{fig:diag_rb-j.rb}).

We now prove that if $G$ is topologically augmentable, then it must satisfy at least one of the three previous conditions.

Suppose that $G$ satisfies neither (i), nor (ii), nor (iii). Then, $G$ must satisfy the following conditions.
	
\begin{enumerate}
	
\item	By (i), $G$ can have only red or red-blue diagonals.
\item	By (ii), $G$ cannot have non-parallel red-blue diagonals. This implies that there is an interval $U$ in $\mathcal{C}$ containing all of the blue endpoints of all red-blue diagonals of $G$, and none of the red endpoints of these diagonals. Assume that $U$ is the shortest such interval and that, relabeling the vertices if necessary, $U=[v_1,v_i]$. See Figure~\ref{fig:mop_no_augmentable-rb}. Observe that if $G$ does not have red-blue diagonals, then $U$ is empty, and if $G$ contains only one red-blue diagonal, then $U$ consists of one blue vertex. Also observe that  $D=[v_{i+1},v_{n}]$ must always contain red vertices, regardless if $U$ is empty or not.

We claim that if $D$ contains a blue vertex $v_r$, then it must have degree two in $G$. To prove this claim, we distinguish whether $U$ is empty or not. Assume that $U$ is non-empty and suppose that there is a diagonal $(v_r,v_s)$ incident to $v_r$ in $G$. Thus, $v_s$ must be a red vertex, because otherwise a blue diagonal would exist and $G$ would satisfy (i). In addition, $v_s$ cannot lie between $v_{i+1}$ and $v_r$ clockwise, for otherwise $(v_r, v_s)$ is a red-blue diagonal and thus $v_r$ would belong to $U$. Finally, if $v_s$ lies between $v_r$ and $v_1$ clockwise, then non-parallel blue-red diagonals would exist, and thus $G$ would satisfy (ii). Hence, $v_r$ must have degree two in $G$. Assume now that $U$ is empty, so there are no red-blue diagonals. Since $G$ does not contain either blue diagonals, then no diagonal can be incident to $v_r$ and the claim follows.

As a consequence of this claim, there are no consecutive blue vertices in $D$, that is, if $v_r$ is a blue vertex in $D$ and $v_{r-1}$ and $v_{r+1}$ belong to $D$, then $v_{r-1}$ and $v_{r+1}$ are red vertices, and they are adjacent in $G$.

	\begin{figure}[ht!]
		\centering	
		\begin{subfigure}[t]{0.32\textwidth}
			\includegraphics[width=0.9\linewidth]{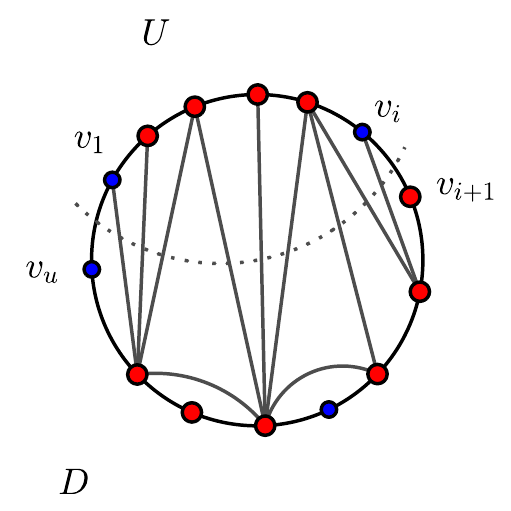}
			\captionsetup{width=.95\linewidth}
			\caption{}
			\label{fig:mop_no_augmentable-rb}
		\end{subfigure}%
		\begin{subfigure}[t]{0.32\textwidth}
			\includegraphics[width=0.9\linewidth]{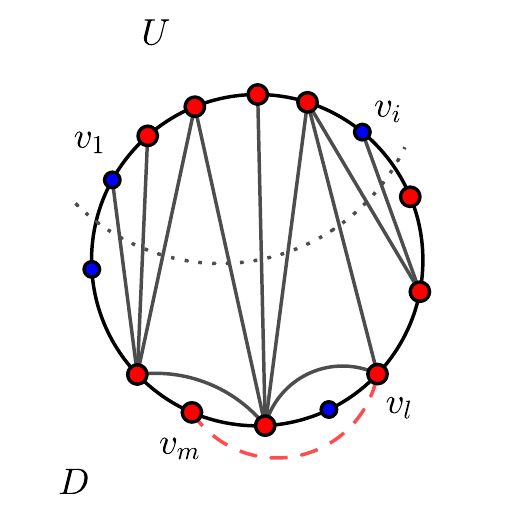}
			\captionsetup{width=.95\linewidth}
			\caption{}
			\label{fig:mop_no_augmentable2-rb}
		\end{subfigure}%
		\begin{subfigure}[t]{0.32\textwidth}
			\includegraphics[width=0.9\linewidth]{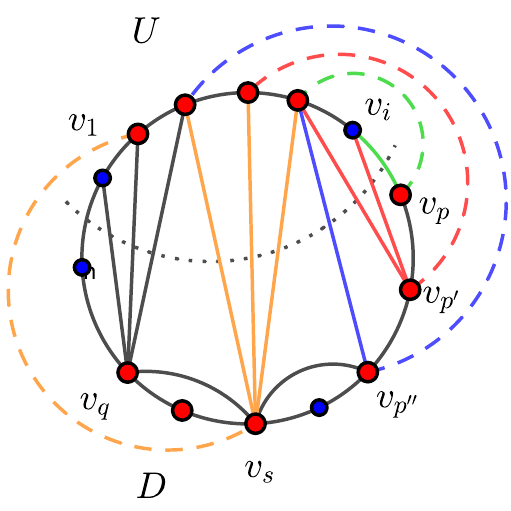}
			\captionsetup{width=.95\linewidth}
			\caption{}
			\label{fig:mop_no_augmentable3-rb}
		\end{subfigure}%
		\caption{(a) A non-augmentable MOP with blue-red diagonals. (b) Isolating a red vertex when two vertices in $D$ are connected. (c) Connecting red vertices from $D$ to a vertex in $U$.}
	\end{figure}

\item	Since $G$ does not satisfy (iii), all vertices in $U$ have degree at least three. Therefore, there is no diagonal connecting two vertices in $U$, because otherwise a vertex of degree 2 would exist in $U$. See Figure~\ref{fig:mop_no_augmentable-rb}.
\end{enumerate}

	Suppose that $G$ is augmentable to meet $C_G$ and let $H=(V,E')$ be a plane topological graph that augments $G$. We show that $H$ cannot exist if $G$ does not satisfy (i), (ii), or (iii).

We first prove that there is no edge in $H$ connecting two vertices in $D$. Assume to the contrary that such edges exist. Among all of them, we take the edge $e=(v_l,v_m) \in E'$ such that the interval $[v_l,v_m]$ belongs to $D$ and is the shortest one. Note that $v_l$ and $v_m$ can neither be consecutive vertices, nor be at distance two if there is a vertex of degree two between them, since joining them with an edge would cause duplicated edges in $G \cup H$. See Figure~\ref{fig:mop_no_augmentable2-rb}. Thus, $[v_l,v_m]$ contains at least one red vertex different from $v_l$ and $v_m$, since any blue vertex in $D$ must have degree two in $G$ and there are no two consecutive blue vertices in $D$. By the minimality of $[v_l,v_m]$, one can easily verify that the parities of the red vertices in $[v_l,v_m]$ cannot all be met. Therefore, there is no edge in $H$ connecting two vertices in $D$. In particular, if $U$ is empty, this result implies that $H$ cannot exist, since there is no way of meeting all parity constraints by adding edges incident to vertices in $D$.

Hence, we assume that $U$ is non-empty in the rest of the proof, so $G$ does contain parallel red-blue diagonals. Let $D_R=\{v_{p}, v_{p'}, ... , v_{q}\}$ be the set of red vertices in $D$ that are adjacent in $G$ to at least one vertex in $U$, where $p<p'<\ldots <q$. Note that $D_R$ is non-empty because there are red-blue diagonals. Consider a vertex $v_s \in D_R$. By the third condition, there is no diagonal in $G$ connecting two vertices in $U$, so the set of vertices in $U$ adjacent to $v_s$ forms an interval $[v_{l_s},v_{r_s}]$ along $\mathcal{C}$, $1 \leq l_s \leq r_s \leq i$, see Figure~\ref{fig:mop_no_augmentable3-rb}. Since $v_s$ is red and there are no edges in $H$ connecting two vertices in $D$, then $H$ must contain at least one edge connecting $v_s$ to a vertex $w$ in $U$, in order to meet the parity constraint of $v_s$. In addition, $w$ cannot be in $[v_{l_s},v_{r_s}]$ as any vertex in this interval is already adjacent to $v_s$.

Consider now $v_p$ and the interval $[v_{l_p},v_{r_p}]$. Note that $v_{r_p}=v_i$. In order to meet its parity constraint, $v_p$ must be adjacent to a vertex $v_{t_p}$ with $t_p < l_p$, i.e. $v_{t_p}$ lies to the left of $v_{l_p}$. In a recursive way, we can prove that any $v_s$ in $D_R$ must be adjacent to a vertex $v_{t_s}$ in $U$ to the left of the interval $[v_{l_s},v_{r_s}]$. This also applies to $v_q$, but the neighbours of $v_q$ in $U$ form an interval that starts at $v_1$, so $v_{t_q}$ must be to the left of $v_1$ and thus it cannot be in $U$, which is a contradiction. It follows that $H$ cannot exist and if $G$ is topologically augmentable, then it must satisfy at least one of (i), (ii), or (iii).
\end{proof}

It is clear that given a MOP $G$, we can check in linear time if each one of (i),(ii), and (iii) is satisfied. Thus, we can decide if $G$ is topologically augmentable to meet a set of parity constraints in $\mathcal{O}(n)$ time. In addition, it is easy to see that the set of edges added to $G$ in the proof of Theorem~\ref{main1} is not necessarily one with minimum size. We next provide a polynomial time algorithm to compute a minimum set of edges to augment a MOP $G$ (if it exists) to meet a set of parity constraints.

\begin{theorem} 	\label{teo:comp_min_set_mops}
Let $G$ be a MOP with $n$ vertices, and let $C_G$ be a set of parity constraints. It takes $\mathcal{O}(n^3)$ time to find a plane topological graph $H$ of minimum size such that $G$ is topologically augmentable by $H$ to meet $C_G$.
\end{theorem}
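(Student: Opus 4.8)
The plan is to set up a dynamic programming over the intervals of the MOP. Since $G$ is a maximal outerplane graph, any compatible edge-disjoint augmenting graph $H$ must use only diagonals of the convex polygon on $v_1,\dots,v_n$ that are not already edges of $G$, and plane-ness forces the added edges to be nested inside the triangulation dual tree; this is the structural feature that makes a divide-and-conquer over intervals possible. For an interval $[v_i,v_j]$, I would tabulate the minimum number of added edges needed to fix the parities of all vertices \emph{strictly inside} $[v_i,v_j]$, leaving the ``boundary'' vertices $v_i$ and $v_j$ in one of a constant number of prescribed states (their current parity or the opposite), subject to the constraint that no added edge leaves the interval except possibly one designated edge incident to $v_i$ or $v_j$ that will be supplied from the outside. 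Concretely I would use a table $M[i,j,a,b]$ where $a,b\in\{0,1\}$ record whether $v_i$, $v_j$ respectively are allowed to receive one more edge from outside $[v_i,v_j]$, and the entry is the minimum number of edges of $H$ drawn inside the region bounded by the chord $v_iv_j$ that repair every interior red vertex and leave $v_i$ (resp.\ $v_j$) with the residual parity dictated by $a$ (resp.\ $b$).

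\textbf{Key steps, in order.} First I would establish the structural lemma that a minimum $H$ is a forest (already observed in the introduction) and, more importantly, that each connected component of $H$, being a noncrossing tree on points in convex position, can be ``peeled'' along the convex hull so that there is always an added edge $v_pv_q$ with no vertex of $V$ strictly between $v_p$ and $v_q$ on one side; this is what licenses the interval recursion. Second, I would write the recurrence: to compute $M[i,j,\cdot,\cdot]$ I guess the neighbor $v_k$ of $v_i$ (or of $v_j$) along the ``next'' added edge inside the chord, with $i<k<j$, which splits $[v_i,v_j]$ into $[v_i,v_k]$ and $[v_k,v_j]$; I combine the sub-table entries, charge $+1$ for the edge $v_iv_k$, forbid it if $v_iv_k\in E(G)$, and propagate the parity states consistently (the edge $v_iv_k$ flips the residual parities of $v_i$ and $v_k$). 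The base case is an interval with no interior vertices, costing $0$ and with the boundary-parity constraints checked directly. Third, I would handle the global bookkeeping: the final answer is read off by gluing the $n$ intervals around the circle --- equivalently, fix an outer chord $v_1v_n$ (an arc edge of $G$, hence not addable) and take $M[1,n,0,0]$ with the requirement that $v_1$ and $v_n$ themselves end with correct parity using only interior edges --- or, to avoid the cyclic seam, root the recursion at a fixed arc edge and argue that some component-peeling order respects it. Fourth, I would confirm feasibility detection is consistent with Theorem~\ref{main1}: $M$ returns $+\infty$ exactly when none of conditions (i)--(iii) holds.

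\textbf{Complexity and the main obstacle.} There are $\mathcal{O}(n^2)$ intervals, a constant ($4$) choice of boundary states, and each recurrence evaluation ranges over $\mathcal{O}(n)$ split points $v_k$, giving $\mathcal{O}(n^3)$ time overall, matching the claimed bound; checking $v_iv_k\in E(G)$ in $\mathcal{O}(1)$ needs only an adjacency-matrix-style preprocessing in $\mathcal{O}(n^2)$. The main obstacle I anticipate is not the counting but proving the \emph{structural decomposition lemma} rigorously: I must show that an optimal noncrossing augmenting forest can always be generated by the ``peel an extreme edge off the chord'' process so that the DP, which only ever attaches edges from an interval endpoint, does not miss cheaper configurations --- in particular ruling out the possibility that an optimal $H$ essentially requires an edge $v_pv_q$ with both $v_p,v_q$ strictly interior to every interval the DP considers. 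I expect this to follow from a local exchange argument: given any plane augmenting forest, repeatedly re-root each component at a hull vertex of the current sub-polygon and reorder its edges so that at every stage the edge incident to the ``leftmost'' endpoint of the enclosing chord is processed first; combined with the forest property this yields a derivation tree isomorphic to the DP's recursion tree, and the minimality of the total edge count is preserved because the exchange never adds edges.
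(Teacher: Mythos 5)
Your proposal is essentially the paper's own argument: the paper uses the same interval dynamic program with four parity states per interval (implemented there as ``color exchanges'' $C[i,j]$, $C[\overline{i},j]$, $C[i,\overline{j}]$, $C[\overline{i},\overline{j}]$, plus auxiliary tables $D[\cdot,\cdot]$ that force the chord $(v_i,v_j)$ into the solution), splits at the last neighbour $v_k$ of $v_i$ in an optimal solution (or peels off $v_i$ when it gets no new edge), forbids splits along existing edges of $G$, and obtains $\mathcal{O}(n^3)$ from $\mathcal{O}(n^2)$ intervals times $\mathcal{O}(n)$ split points. The structural decomposition you flag as the main obstacle is handled in the paper exactly as you anticipate, via noncrossing-ness of $H$ on convex position, so no genuinely different route is taken.
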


\begin{proof}
We compute an optimum edge set, if it exits, by dynamic programming. Given an interval $[v_i,v_j]$, we define $C[i,j]$ as the size of a smallest set of edges having both endpoints in $[v_i,v_j]$ such that this set is a plane topological graph that does not cross $G$ and meets the parity constraints of all vertices in $[v_i,v_j]$ (or $\infty$ if such a set of edges does not exist). If $\overline{v}$ denotes the color exchange of $v$ for any vertex $v$, we analogously define $C[\overline{i},j]$, $C[i,\overline{j}]$ and $C[\overline{i},\overline{j}]$ as the sizes of the optimum solutions as described previously, for the intervals $[\overline{v_i},v_j]$, $[v_i,\overline{v_j}]$ and $[\overline{v_i},\overline{v_j}]$, respectively.

We also define $D[i,j]$, $D[\overline{i},j]$, $D[i,\overline{j}]$ and $D[\overline{i},\overline{j}]$ as the size of an optimum solution for the intervals $[v_i,v_j]$, $[\overline{v_i},v_j]$, $[v_i,\overline{v_j}]$ and $[\overline{v_i},\overline{v_j}]$, respectively, assuming that the edge $(v_i,v_j)$ belongs to the solution. We next show how to compute $C[i,j]$ and $D[i,j]$. The table entries $C[\overline{i},j]$, $C[i,\overline{j}]$, $C[\overline{i},\overline{j}]$, $D[\overline{i},j]$, $D[i,\overline{j}]$ and $D[\overline{i},\overline{j}]$ are computed in a similar manner.

To compute $C[i,j]$, we look at the last vertex $v_k \in [v_i,v_j]$ in clockwise order (if it exists) connected to $v_i$ in an optimum solution for the interval $[v_i,v_j]$. If $v_k$ does not exist because $v_i$ is not connected to any other vertex of the interval, then $C[i,j] = C[i+1,j]$ if $v_i$ is blue, or $C[i,j] = \infty $ if $v_i$ is red. If $v_k$ exists, then the optimum solution can be computed by solving two subproblems. On the one hand, finding the minimum solution for the interval $[v_i,v_k]$ assuming that the edge $(v_i,v_k)$ belongs to the solution, and on the other hand, finding the minimum solution  for the interval $[v_k, v_j]$. Observe that, if $v_k$ is blue, then we can decide either to keep $v_k$ as blue in both subproblems or change its color in both subproblems, and if $v_k$ is red, then we can choose $v_k$ as red in one of the subproblems and blue in the other one. We do this analysis only if the edge $(v_i,v_k)$ does not belong to $G$. Summarizing, we have:

\begin{small}
\begin{align}
		C[i,j] &= \min
		\begin{cases}
            \begin{cases}
			C[i+1,j]	& \text{if $v_i$ is blue}\\
			\infty	& \text{if $v_i$ is red}
            \end{cases}\\\\
			\min_{i+2\le k\leq j}
            \begin{cases}
            \infty & \text{if $(v_i,v_k)\in E$}\\
			\min\{D[i,k] + C[k,j], D[i,\overline{k}] + C[\overline{k},j]\}	& \text{if $v_k$ is blue and $(v_i,v_k)\not\in E$}\\
			\min\{D[i,k] + C[\overline{k},j], D[i,\overline{k}] + C[k,j]\}	& \text{if $v_k$ is red and $(v_i,v_k)\not\in E$}\\
            \end{cases}
		\end{cases}
		\label{eq:min_cost_c}
\end{align}
\end{small}

The analysis to compute $D[i,j]$ is similar. Assuming that the edge $(v_i,v_j)$ belongs to an optimum solution, either $v_i$ is not connected to any other vertex of the interval $[v_i,v_j]$ except for $v_j$, or there exists a vertex $v_k$ (different from $v_j$) that is the last vertex of the interval connected to $v_i$. An optimum solution can be computed by finding an optimum solution for the interval $[v_i,v_k]$ assuming that $(v_i,v_k)$ belongs to the solution, finding an optimum solution for the interval $[v_k,v_j]$, and adding the edge $(v_i,v_j)$.

\begin{small}
\begin{align}
		D[i,j] &= \min
		\begin{cases}
            \begin{cases}
			C[i+1,\overline{j}]+1	& \text{if $v_i$ is red}\\
			\infty	& \text{if $v_i$ is blue}
			\end{cases}\\\\
            \min_{i+2\le k < j}
            \begin{cases}
            \infty & \text{if $(v_i,v_k)\in E$}\\
			\min\{D[\overline{i},k] + C[k,\overline{j}], D[\overline{i},\overline{k}] + C[\overline{k},\overline{j}]\}+1	& \text{if $v_k$ is blue and $(v_i,v_k)\not\in E$}\\
			\min\{D[\overline{i},k] + C[\overline{k},\overline{j}], D[\overline{i},\overline{k}] + C[k,\overline{j}]\}+1	& \text{if $v_k$ is red and $(v_i,v_k)\not\in E$}\\
            \end{cases}
		\end{cases}
		\label{eq:min_cost_d}
\end{align}
\end{small}

Equations~(\ref{eq:onevertex}) and~(\ref{eq:twovertices}) show the base cases for $C[i,j]$, when the intervals consist of one or two vertices, and Equation~(\ref{eq:threevertices}) shows the base case for $D[i,j]$, when the intervals consist of three vertices. 	

	\begin{align}
		C[i,i] &=
		\begin{cases}\label{eq:onevertex}
			0        & \text{if $v_i$ is blue}\\
			\infty   & \text{if $v_i$ is red}
		\end{cases}
		\\
		C[i,i+1] &=
		\begin{cases}\label{eq:twovertices}
			0        & \text{if $v_i$ and $v_{i+1}$ are blue}\\
			\infty   & \text{otherwise}
		\end{cases}
		\\
		D[i,i+2] &=
		\begin{cases}\label{eq:threevertices}
			1        & \text{if $v_i$ and $v_{i+2}$ are red and $v_{i+1}$ is blue}\\
			\infty   & \text{otherwise}\\
		\end{cases}
	\end{align}

The algorithm computes all tables in increasing order of the size of the intervals. Moreover, an extra table can be filled it out, as long as the recurrence is carried on, to store the optimum edge set of each interval. Note that $C[1,n]$ stores the size of the optimum solution for $G$, if it exists. The algorithm runs in $\mathcal{O}(n^3)$ time since there are $\mathcal{O}(n^2)$ intervals to explore and $v_k$ is computed in linear time for each interval.
\end{proof}

	\begin{figure}[ht!]
		\centering
		\begin{subfigure}[t]{0.30\textwidth}
			\includegraphics[width=\linewidth]{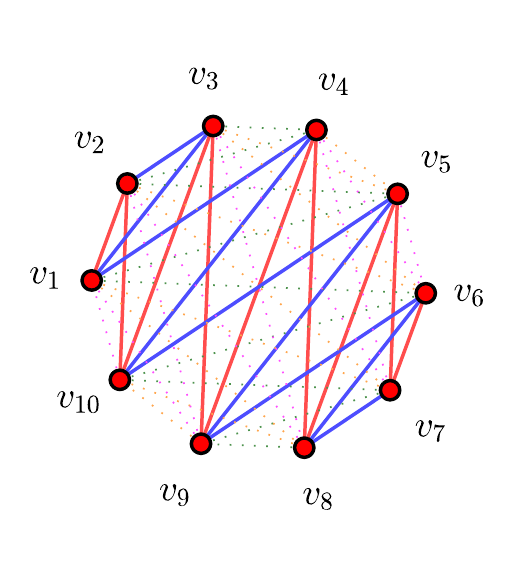}
			\captionsetup{width=.95\linewidth}
			\caption{}
			\label{fig:zig_paths}
		\end{subfigure}%
		~	
		\begin{subfigure}[t]{0.30\textwidth}
			\includegraphics[width=\linewidth]{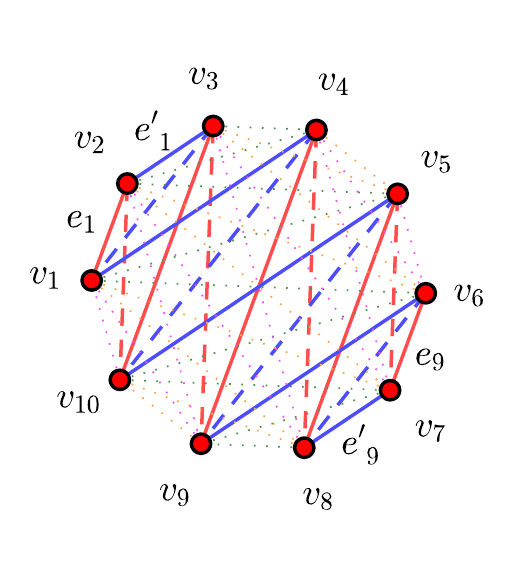}
			\captionsetup{width=.95\linewidth}
			\caption{}
			\label{fig:zig_paths2}
		\end{subfigure}%
		\caption{Decomposition of a complete graph into $n/2$ zig-zag paths. (a) $P_1$ in red, and $P_2$ in blue. (b) Two maximal matchings on each zig-zag path.}
	\end{figure}

It is not hard to see that the previous algorithm also works for (not necessarily maximal) outerplane graphs  when the new edges are added in the unbounded face. Notice that when we add exactly $|R(C_G)|/2$ edges to a graph $G$, they form a plane perfect matching between the red vertices of $G$.

While not all MOP's are topologically augmentable, it is straightforward to see that given a MOP $G$ and a set of parity constraints $C_G$, it is always possible to add edges to $G$ such that all but at most two parity constraints are satisfied. This can be done as follows: Since $G$ is a MOP, it always has a vertex $v_i$ of degree two. Thus, the edge $(v_{i-1},v_{i+1})$ belongs to $G$. If we join $v_i$ to all red vertices in $G$ to which it is not adjacent, then all but at most two parity constraints are satisfied. When we insist  to do the augmentation with a plane topological matching, then we can guarantee that the number of vertices not meeting their parity constraints is at most four.

\begin{theorem}
	Let $G=(V,E)$ be a MOP, and let $C_G$ be a set of parity constraints. Then, there exists a plane topological matching $M$ such that $G$ and $M$ are compatible and edge-disjoint, and $G\cup M$ meets all but at most four parity constraints.
\end{theorem}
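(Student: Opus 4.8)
The plan is to produce the desired matching drawn entirely in the \emph{exterior} of the circle $\mathcal C$ carrying the vertices of $G$; any such matching is automatically compatible with $G$, since all edges of $G$ lie in the closed disk bounded by $\mathcal C$. Recall also that all $n$ arcs of $\mathcal C$ between consecutive vertices belong to $G$, so every added edge must join two non-consecutive vertices, and since $M$ must be edge-disjoint from $G$ it may not reuse a diagonal of $G$ either. Finally, as $M$ will be a matching on the set $R$ of red vertices, every blue vertex keeps its parity in $G\cup M$, so the only vertices that can violate $C_G$ are the red vertices left unmatched. Hence it suffices to find a non-crossing exterior matching on $R$, edge-disjoint from $G$, missing at most four vertices. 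For points in convex position, two chords of $\mathcal C$ are non-interleaving along $\mathcal C$ if and only if both can be redrawn as pairwise non-crossing arcs outside $\mathcal C$; so it is enough to find a set $M$ of pairs of red vertices that is a matching, is pairwise non-interleaving along $\mathcal C$, contains no edge of $G$, and misses at most four red vertices.

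The main tool is the zig-zag decomposition of Figure~\ref{fig:zig_paths}: on the $k:=|R|$ red points in convex position there exist $\lfloor k/2\rfloor$ pairwise edge-disjoint, non-crossing, Hamiltonian zig-zag paths $Z_1,\dots,Z_{\lfloor k/2\rfloor}$ (when $k$ is even they partition $E(K_R)$; when $k$ is odd they cover all but $\lfloor k/2\rfloor$ of the edges of $K_R$). Now $G[R]$, being an induced subgraph of the MOP $G$, is outerplanar, so $|E(G[R])|\le 2k-3$, and each of its edges lies in at most one $Z_j$. By pigeonhole, some $Z:=Z_{j_0}$ satisfies
\[
t:=|E(Z)\cap E(G[R])|\ \le\ \frac{2k-3}{\lfloor k/2\rfloor}\ <\ 4 ,
\]
so $t\le 3$ (the finitely many tiny values of $k$ being trivial to check by hand).

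Deleting from $Z$ its at most three edges lying in $G$ breaks $Z$ into at most $t+1\le 4$ vertex-disjoint subpaths $P_1,\dots,P_s$ that together span all of $R$ and whose edges all avoid $G$. In each $P_i$ take a maximum matching $M_i$, which misses at most one vertex, and set $M:=M_1\cup\cdots\cup M_s$. Then $M$ is a matching on $R$; its chords form a subset of the non-crossing chord set $E(Z)$, hence are pairwise non-interleaving along $\mathcal C$; no pair of $M$ is an edge of $G$, since each such pair lies in some $P_i$; and $M$ misses $\sum_i(|V(P_i)|\bmod 2)\le s\le 4$ red vertices. Redrawing $M$ in the exterior of $\mathcal C$ yields the required plane topological matching, and $G\cup M$ violates $C_G$ only at the (at most four) unmatched red vertices.

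The only non-routine ingredient is the zig-zag decomposition itself, in particular exhibiting $\lfloor k/2\rfloor$ pairwise edge-disjoint non-crossing Hamiltonian zig-zag paths when $k$ is odd; for $k$ even this is the classical Hamiltonian-path decomposition of $K_k$ realised by the non-crossing zig-zag pattern of Figure~\ref{fig:zig_paths}, and the odd case follows from the analogous shifted-zig-zag family (equivalently, from dropping one edge from each cycle in a decomposition of $K_k$ into Hamiltonian cycles, realised non-crossingly). I expect this to be the main point needing care; everything after the pigeonhole step is elementary.
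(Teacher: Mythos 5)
Your proof is correct in substance and rests on the same tool as the paper's --- the decomposition of the convex complete graph into non-crossing zig-zag Hamiltonian paths --- but the extraction step is genuinely different. The paper does not pigeonhole over the paths: it observes that each zig-zag path splits into two maximal matchings (the odd- and the even-numbered edges), giving $n$ pairwise disjoint maximal matchings, and since $G$ has only $n-3$ diagonals one of these matchings avoids every diagonal outright; the only remaining loss comes from the at most two convex-hull edges contained in an ``odd'' matching, whence at most four unmatched vertices. You instead pigeonhole the $\le 2k-3$ edges of $G[R]$ over the $\lfloor k/2\rfloor$ paths to find one path sharing at most three edges with $G$, delete them, and take maximum matchings in the $\le 4$ resulting subpaths. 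Both arguments are sound and give the same bound; the paper's version avoids the delete-and-rematch step entirely (and incidentally exhibits a matching that is maximal on $R$), while yours is arguably more robust because it only needs an upper bound on $|E(G[R])|$ rather than the finer odd/even matching structure. The one point to tighten is the odd-$|R|$ case, which you flag yourself: your proposed fix via Hamiltonian-cycle decompositions does not literally work, since a Hamiltonian cycle on $\ge 4$ points in convex position cannot be drawn non-crossing unless it is the hull, so ``dropping one edge from each cycle'' does not yield non-crossing paths. The correct repair is to take the $\lfloor k/2\rfloor$ consecutive rotations of the standard zig-zag path (each edge then lies in at most one of them and each path is non-crossing), after which your inequality $(2k-3)/\lfloor k/2\rfloor<4$ still holds; note that the paper's own proof is equally silent on odd $|R|$, so this is a shared, minor lacuna rather than a defect specific to your argument.
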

\begin{proof}
Suppose that $R(C_G)=V$. The geometric complete graph $K_n$, with $n$ vertices placed in convex position, can be decomposed into $n/2$ zig-zag paths $P_1, ..., P_{n/2}$ as follows: $P_1=\{v_1, v_2, v_{n}, v_3,$ $v_{n-1}, ... \}$, $P_2=\{v_2,v_3,v_1,v_4,v_n, ...\}$, ..., $P_{n/2}=\{v_{n/2},v_{n/2+1},v_{n/2-1},...\}$. That is to say, $P_{i+1}$ is obtained by rotating the vertices of $P_i$ one position to the right, as shown in Figure \ref{fig:zig_paths}.

	We number the edges of each zig-zag path $P_i$ from 1 to $n-1$, starting at the edge $(v_i,v_{i+1})$. Observe that each path contains two maximal matchings, one of size $\frac{n}{2}$, given by the odd numbered edges (solid segments in Figure~\ref{fig:zig_paths2}), and the second of size $\frac{n}{2}-1$, given by the even numbered edges (dashed segments in Figure \ref{fig:zig_paths2}). In total, $K_n$ has $n$ pairwise disjoint maximal matchings.

	Since $G$ has $n-3$ diagonals, there exists at least one matching, say $M_i$, such that none of its edges is a diagonal of $G$. If $M_i$ is formed by odd (solid) edges, then there exist two edges on the boundary of the unbounded face of $G$ in common with $M_i$, so $M_i\setminus E$ spans $n-4$ vertices of $G$. If $M_i$ is formed by even (dashed) edges, then $M_i\cap E = \emptyset$ and $M_i$ spans $n-2$ vertices of $G$ since $|M_i|=\frac{n}{2}-1$. Notice that the edges of $M_i$ can be drawn in the unbounded face of $G$ without crossings, hence the theorem follows when $R(C_G)=V$.
	
	It is straightforward to see that when $R(C_G)\subset V$, we can apply the same analysis on the subgraph induced by the red vertices to prove the theorem.
\end{proof}

\section{Conclusions}\label{sec:conclusions}

In this paper we have addressed the plane topological augmentation problem to meet parity constraints and we have proved that deciding if a plane topological graph is topologically augmentable is computationally hard, under several assumptions on the input graph, the vertices that must change their parities, and the size of the augmentation. We have obtained analogous results for the plane geometric augmentation problem to meet parity constraints. We have also proved the hardness of some of these variants for plane geometric trees and paths. Table~\ref{tabla2} summarizes these results.

For the family of MOPs, we have characterized the MOPs that are topologically augmentable to meet parity constraints, and we have given an $\mathcal{O}(n^3)$ time algorithm to augment them with the minimum number of edges, if that is the case. We have also shown that there is always a topological augmentation by a plane topological matching, satisfying all but at most four parity constraints.

Finally, we conclude this paper with the following conjecture about the complexity of deciding if a plane geometric tree is geometrically augmentable.

\begin{conjecture}
	Let $T=(V,E)$ be a plane geometric tree and let $C_T$ be a set of parity constraints. Then, the problem of deciding if $T$ is geometrically augmentable is $\mathcal{NP}$-complete.
\end{conjecture}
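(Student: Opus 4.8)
The plan is to prove the conjecture by a reduction from Planar 3-SAT, following the template of Section~\ref{sec:geometric} but taking care that the extra freedom of augmenting by an \emph{arbitrary} plane geometric graph (rather than only a perfect matching between the red vertices) cannot be exploited to cheat. Membership in $\mathcal{NP}$ is immediate: a certificate is the edge set $E'$ of $H$, and checking that $T\cup H$ is plane and geometric, that $H$ is edge-disjoint from $T$, and that $T\cup H$ meets $C_T$ takes polynomial time.

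The starting point would be the plane geometric graph $G_\Phi$ with convex faces built in the proof of Theorem~\ref{teo:topomin} (made geometric via Theorem~\ref{the:np_geom}), together with the tree $T_\Phi$ obtained from it by breaking every cycle with a spiral gadget, exactly as in the proof of Theorem~\ref{teo:np_geo_trees}. The central issue is how far a minimum augmentation of $T_\Phi$ can deviate from a perfect matching between red vertices: recall from the introduction that a minimum augmentation $H$ is always a forest in which every red vertex has odd degree and every blue vertex has even degree, so the additional power over a matching comes precisely from connected components of $H$ that are not single edges (a star centred at a blue vertex, or a path routed through blue vertices, etc.). The key step is therefore to surround each red vertex of $T_\Phi$ whose behaviour must be pinned down — the outputs of the literal gadgets, the middle red vertices of the (double) wire gadgets, and the two free vertices $v_1,v_2$ of each clause gadget — with visibility-blocking structure attached as pendant paths of the tree and placed at distance $\epsilon$, so that each such vertex sees only a constant-size set of admissible partners. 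These blockers would be of the same flavour as the narrow wedges and zig-zag paths used in the proofs of Theorems~\ref{teo:np_geo_trees} and~\ref{teo:comp_match_trees}: each contributes an even number of new red vertices that are forced to be matched among themselves (so the parity count and the ``$|R|$ even'' bookkeeping are unaffected) while destroying the relevant lines of sight. Once every relevant red vertex has a bounded neighbourhood in the visibility graph, one argues that in any plane geometric augmentation the local behaviour inside each gadget is one of a constant number of configurations, which reduce to the ``positive'' and ``negative'' states of the matching proof plus a few obviously useless ones; the outerplanar-graph argument of Figure~\ref{fig:clause_gadget_t} then shows, as before, that $v_1,v_2$ cannot be satisfied unless some input of the clause receives a positive value.

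Assembling these pieces, $T_\Phi$ (augmented with the blockers) is geometrically augmentable to meet the induced parity constraints if and only if $\Phi$ is satisfiable, giving $\mathcal{NP}$-hardness; the variants with $R(C_T)=V$ and with the Eulerian constraint would follow by the colour-flipping construction of Figure~\ref{fig:red} (adapted as arrow gadgets for trees) and by the minimum $T$-join trick of Theorem~\ref{teo:np_topo_euler}, exactly as in the matching case.

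\textbf{Main obstacle.} The hard part will be showing that the blocking structure genuinely confines every useful augmenting edge. An arbitrary plane forest can in principle route a long path through blue vertices of several gadgets at once, fixing the parity of a red vertex ``from far away'' and thereby decoupling the gadgets; ruling this out seems to require a global argument about the face structure (planar dual) of $T_\Phi$ rather than a purely local one, and one must control augmentations of every size simultaneously, not just those of size $|R|/2$. The cleanest route would be a normalization lemma: show that any plane geometric augmentation of $T_\Phi$ can be transformed, without increasing its number of edges and without introducing crossings, into a perfect matching between the red vertices. Such a lemma would reduce the decision problem on $T_\Phi$ to the matching problem already proved $\mathcal{NP}$-complete in Theorem~\ref{teo:np_geo_trees}, and the conjecture would follow at once; establishing it appears to be the crux of the matter, and is presumably why the statement is left as a conjecture.
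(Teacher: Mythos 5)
The statement you are proving is left as an open \emph{conjecture} in the paper: the authors give no proof of it, so there is nothing on their side to compare your argument against. What you have written is a reduction \emph{plan}, and you yourself identify, correctly, that it is not yet a proof: everything hinges on the ``normalization lemma'' asserting that any plane geometric augmentation of $T_\Phi$ can be converted into a plane perfect matching on the red vertices without loss. That lemma is exactly what is missing, and it is not a routine technicality. Every forcing argument in the tree and path gadgets of Theorems~\ref{teo:np_geo_trees}--\ref{teo:comp_match_paths} (``the two red vertices of a spiral gadget are forced to join each other'', ``$u''$ is forced to join $u$'', ``the cross-shaped vertices are forced to join the disk-shaped vertices'') is proved only under the standing hypothesis that $H$ is a perfect matching on $R$, i.e.\ that every red vertex receives exactly one new edge and every blue vertex receives none. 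Under a general augmentation a red vertex may have any odd degree in $H$ and a blue vertex any positive even degree, so none of these forcings survives as stated: the two red vertices of a spiral could each be joined to a third red vertex elsewhere, and a blue vertex can serve as the center of a star that flips the parity of several red vertices across gadget boundaries at once. The paper's own Theorem~\ref{teo:np_topo_planes} shows this is not a hypothetical worry --- its intended augmentation of a satisfied clause gadget \emph{is} a star centered at the blue vertex $v_3$ --- so non-matching components are genuinely useful and cannot be excluded for free.

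Your proposed remedy (pendant visibility blockers around every sensitive red vertex) is the natural thing to try, but as described it only bounds the set of \emph{partners} a red vertex can see; it does not bound the \emph{degree} that vertex, or a nearby blue vertex, can take in $H$, nor does it prevent a long path of $H$ from threading through blue vertices of several gadgets. The decagon argument of Theorem~\ref{teo:np_topo_planes} (a biconnected outerplanar graph has two degree-2 vertices) also does not transfer: once cycles are broken into a tree, the clause ``face'' is no longer a bounded face, and the case analysis must be redone for augmentations of every size, not only those of size $|R|/2$. So the gap you flag is genuine and is precisely why the statement remains a conjecture; your write-up is a reasonable research program but does not establish the result.
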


\section*{Acknowledgments}

Alfredo Garc\'\i a, Javier Tejel and Jorge Urrutia were supported by H2020-MSCA-RISE project 734922 - CONNECT. Research of Alfredo Garc\'\i a and Javier Tejel was also supported by project MTM2015-63791-R (MINECO/FEDER) and by project Gobierno de Arag\'on E41-17R (FEDER). Research of Jorge Urrutia was also supported by UNAM project PAPIITIN102117. Research of Juan Carlos Catana was supported by Consejo Nacional de Ciencia y Tecnolog\'\i a, Mexico.


\bibliographystyle{abbrv}
\bibliography{refs}

\end{document}